\colorlet{myblue}{blue!80!green}
\colorlet{mybluelight}{myblue!50}
\tikzset{
  > = latex',
  axis/.style    = {very thick},
  aborder/.style = {draw},
  acomp/.style   = {fill=black, fill opacity=0.1},
  rect/.style    = {very thick},
  form/.style    = {font=\scriptsize},
  sm/.style      = {font=\small},
  vsm/.style     = {font=\scriptsize}
}
\definecolor{dark-red}{rgb}{0.4,0.15,0.15}
\definecolor{dark-blue}{rgb}{0.15,0.15,0.75}
\definecolor{medium-blue}{rgb}{0,0,0.5}
  \renewcommand\@seccntformat[1]{\csname the#1\endcsname.{\hskip.7em\relax}} 
\renewenvironment{proof}[1][\proofname] {\par\pushQED{\qed}\normalfont\topsep6\p@\@plus6\p@\relax\trivlist\item[\hskip\labelsep\bfseries#1\@addpunct{.}]\ignorespaces}{\popQED\endtrivlist\@endpefalse}
\newcommand{\R}{\mathbb{R}}
\renewcommand{\L}{\mathcal{L}}
\DeclareMathOperator*{\Prob}{Prob}
\newenvironment{customcondition}[1]{\innercustomcondition}{\endinnercustomcondition}
\newtheorem{corollary}{Corollary}
\newtheorem{lemma}{Lemma}
\newtheorem{proposition}{Proposition}
\theoremstyle{remark}
\newtheorem{remark}{Remark}
\newtheorem{example}{Example}
\theoremstyle{definition}
\titlespacing\section{0pt}{10pt plus 2pt minus 2pt}{4pt plus 2pt minus 2pt} 
\titlespacing\subsection{0pt}{6pt plus 2pt minus 2pt}{2pt plus 2pt minus 2pt} 
\titlespacing\subsubsection{0pt}{6pt plus 2pt minus 2pt}{0pt plus 2pt minus 2pt} 
\DeclareMathOperator*{\argmax}{arg\,max}
\DeclareMathOperator{\sign}{sign}
\DeclareMathOperator*{\Obj}{Obj}
\renewcommand{\epsilon}{\varepsilon}
\def \union{\cup}
\def \amid{\frac{\overline a + \underline a}{2}}
\def \Reals{\mathbb R}
\def \amid{a_{\mathrm{mid}}}
\def \mode{\mathrm{Mo}}
\let\oldfootnote\footnote
\renewcommand\footnote[1]{\oldfootnote{\hspace{.5mm}#1}}
\newcommand{\E}{\mathbb{E}}
\newcommand{\mailto}[1]{\href{mailto:#1}{\texttt{#1}}} 
\newcommand{\citepos}[1]{\citeauthor{#1}'s \citeyearpar{#1}} 
\newcommand{\citeauthorpos}[1]{\citeauthor{#1}'s} 
\renewcommand{\bar}{\overline}
\newcommand{\appendixref}[1]{\hyperref[#1]{Appendix \ref{#1}}}
\def \marker{\hfill $\diamond$}
\begin{document}

\begin{titlepage}

\title{\textbf{Delegation in Veto Bargaining}\thanks{We thank Nageeb Ali, Ricardo Alonso, Andy Daughety, Wouter Dessein, Wiola Dziuda, Alex Frankel, Sanjeev Goyal, Marina Halac, Elliot Lipnowski, Mallesh Pai, Mike Ting, Andy Zapechelnyuk, the Editor (Jeff Ely), four anonymous referees, and various seminar and conference audiences for helpful comments. We gratefully acknowledge financial support from NSF Grants SES-2018948, SES-2018599, and SES-2018983. Bruno Furtado provided excellent research assistance.}\author{Navin Kartik\footnote{Department of Economics, Columbia University.  Email: \mailto{nkartik@columbia.edu}.}  \and Andreas Kleiner\footnote{Department of Economics, Arizona State University.  Email:  \mailto{andreas.kleiner@asu.edu}}  \and Richard Van Weelden\footnote{Department of Economics, University of Pittsburgh.  Email: \mailto{rmv22@pitt.edu}.}}}

\date{April 29, 2021}

\maketitle

\begin{abstract}
A proposer requires the approval of a veto player to change a status quo. Preferences are single peaked. Proposer is uncertain about Vetoer's ideal point. We study Proposer's optimal mechanism without transfers. Vetoer is given a menu, or a delegation set, to choose from. The optimal delegation set balances the extent of Proposer's compromise with the risk of a veto. Under reasonable conditions, ``full delegation'' is optimal: Vetoer can choose any action between the status quo and Proposer's ideal action. This outcome largely nullifies Proposer's bargaining power; Vetoer frequently obtains her ideal point, and there is Pareto efficiency despite asymmetric information. More generally, we identify when ``interval delegation'' is optimal. Optimal interval delegation can be a Pareto improvement over cheap talk. We derive comparative statics. Vetoer receives less discretion when preferences are more likely to be aligned, by contrast to expertise-based delegation. Methodologically, our analysis handles stochastic mechanisms.

\end{abstract}

\thispagestyle{empty}
\end{titlepage}

\section{Introduction}
\label{s:intro}

\paragraph{Motivation.} There are numerous situations in which one agent or group can make proposals but another must approve them. Legislatures (e.g., the U.S. Congress) send bills to executives (e.g., the President), who can veto them. Governmental legislation can be struck down as unconstitutional by the judiciary. 
Prosecutors choose which charges to bring against defendants, but judges and juries decide whether to convict. A real-estate agent can recommend a house to his client, but the client must decide to put in an offer; similarly, a search committee can put forward a candidate, but the organization decides whether to hire her.

\citet{RR:78} present a seminal analysis of such \emph{veto bargaining}. Their framework is one of complete information 
in which a proposer (Congress, government, prosecutor, salesperson, search committee) makes a take-it-or-leave-it proposal to a veto player (President, judiciary, judge/jury, customer, organization). Preferences are single peaked. That is, rather than ``dividing a dollar'', the negotiating parties share some preference alignment. Such preferences are plausible in the contexts mentioned above.

Our paper studies veto bargaining with incomplete information. The proposer is uncertain about the veto player's preferences---specifically, which proposals would actually get vetoed.  
Previous scholars have emphasized this feature's importance; 
see \citepos{CM:04} survey. But to our knowledge, our paper is the first that takes a general approach to the issue. We do not assume the proposer is restricted to making a single proposal \citep[e.g.,][]{RR:79}, nor do we fix any particular negotiating protocol \citep[e.g., one round of cheap talk in][]{Matthews:89,FR:20}. Instead, we consider the possible outcomes across all possible protocols by taking a mechanism design approach. Our focus is on identifying the proposer's optimum.

There are at least two reasons this mechanism design approach is of interest. First, it identifies an upper bound on the proposer's welfare. Second, 
as is standard in settings without transfers, any (deterministic) mechanism is readily interpreted and implemented as \emph{delegation}. That is, the proposer simply offers a menu of options; the veto player can select any one or reject them all. 
Menus or delegation sets are observed in practice in some applications of our model. Salespeople show customers subsets of products, and search committees put forward multiple candidates for their organization to choose among. In politics, a bill authorizing at most \$$x$ of spending effectively offers an executive who controls the implementing bureaucracy the choice of any spending level in the interval $[0,x]$. Bills can also grant more or less discretion of how to allocate a given level of spending, as encapsulated by former Senator Russ Feingold in the context of the first U.S. coronavirus stimulus package: ``Congress has to decide how much discretion it wants to delegate to executive branch officials.'' (\href{https://www.washingtonpost.com/business/2020/03/22/treasury-coronavirus-senate-corporate-loan/}{Washington} \href{https://www.washingtonpost.com/business/2020/03/22/treasury-coronavirus-senate-corporate-loan/}{Post}, March 22, 2020.)

\paragraph{Main results.}
We formally study a one-dimensional environment. There is a status quo policy or action, $0$, that obtains if there is a veto. There are two agents. Proposer's ideal action is $1$. Vetoer's ideal action, $v$, is her private information. We assume Vetoer preferences are represented by a quadratic loss function, but allow Proposer to have any concave utility function.\footnote{Proposer's risk attitude is important because he faces uncertainty about the final action. Among deterministic mechanisms, Vetoer's risk attitude is irrelevant, so quadratic loss entails no restriction beyond symmetry around the ideal point.}

Our first result (\autoref{prop:full}) identifies conditions under which it is optimal for Proposer to fully compromise and simply let Vetoer choose her preferred action in the interval $[0,1]$. We call this \emph{full delegation}, because Proposer only excludes options that are, from his point of view, dominated by simply offering his ideal action $1$ no matter Vetoer's ideal point. Intuitively, full delegation is optimal when the specter of a veto looms large; in particular, it is sufficient that the density of Vetoer's ideal point is decreasing on the unit interval. 
Optimality of full delegation is quite striking: despite Proposer having considerable bargaining and commitment power, it is Vetoer who frequently gets her first best. 
This is a telling manifestation of private information's consequences. Since full delegation implies ex-post Pareto efficiency, large information rents can obtain here without generating inefficiency, unlike in most other settings.


The opposite of full delegation is \emph{no compromise}: Proposer only offers his ideal action, $1$. Of course, Vetoer can veto and choose the status quo $0$. \autoref{prop:no_compromise} gives conditions for optimality of no compromise. It is sufficient, for example, that Proposer has a linear loss function---so, in the relevant region of actions, $[0,1]$, he only cares about the mean action---and the density of Vetoer's ideal point is increasing in this region. This case juxtaposes nicely against the aforementioned decreasing-density condition for full delegation.

Both full delegation and no compromise are boundary cases of \emph{interval delegation}: Proposer offers a menu of the form $[c,1]$. 
Interval delegation is interesting for multiple reasons. Among them are that such delegation sets are simple to interpret and implement, and they turn out to be tractable for comparative statics.  \autoref{prop:interval} provides conditions under which interval delegation is optimal.  These are met, in particular, when Proposer has a linear or quadratic loss function and Vetoer's ideal point distribution is logconcave (\autoref{c:LQinterval}).

We show that, under reasonable conditions, optimal interval delegation yields a Pareto improvement over singleton proposals, even when cheap talk is allowed and full delegation is not optimal (\autoref{p:pareto}). We trace the intuition to Proposer being more willing to compromise when he can offer Vetoer an interval of options rather than only singletons.

We develop two comparative statics, restricting attention to interval delegation --- either justified by optimality or otherwise. First, what happens when Proposer becomes more risk averse? 
\autoref{prop:compstats}\ref{p:risk} establishes that Vetoer is given more discretion: the optimal threshold in the interval delegation set (i.e., that denoted $c$ above) decreases. Intuitively, Proposer offers a larger set of options to mitigate the risk of a veto. Second, what about when Vetoer becomes more ex-ante aligned with Proposer? Formally, we consider right shifts in Vetoer's ideal point distribution, in the sense of likelihood ratio dominance. \autoref{prop:compstats}\ref{p:LR} establishes that discretion decreases: the optimal interval delegation threshold increases. Intuitively, this is because Proposer is less concerned that a veto will occur. 
Although these comparative statics appear natural, it is the structure of interval delegation that allows us to establish them.

\paragraph{Contrast with expertise-based delegation.} The second comparative static mentioned above contrasts with a key theme of the expertise-based delegation literature following \citet{Holmstrom:77,Holmstrom:84}. In that literature, an agent is given discretion over actions because her private information is valuable to the principal; the principal limits the degree of discretion because of preference misalignment. One version of the so-called Ally Principle says that a more aligned agent receives more discretion. 
\citet{Holmstrom:84} establishes its validity under reasonably general conditions, so long as delegation sets take the form of intervals.\footnote{The comparative static may fail absent interval delegation \citep[e.g.,][]{AM:08}.} In our setting, the reason Proposer gives Vetoer discretion is fundamentally different from that in \citeauthor{Holmstrom:84}: it is not to benefit from Vetoer's expertise; rather, Proposer trades off the risk of a veto with the extent of compromise. (In jargon, our delegator has state-independent preferences, by contrast to the state-dependent preferences in most of the literature following \citeauthor{Holmstrom:84}.) Hence we find less discretion emerging when there is, in a suitable sense, more ex-ante preference alignment.

\paragraph{Applications.} In \autoref{s:applications} we present three applications: product menus offered to a consumer; the legal doctrine of lesser-included offenses; and legislative bills put forward to an executive. These applications illustrate the relevance of our general results and allow us to discuss additional implications. For example, we comment on voluntary disclosure of information by consumers to sellers; welfare consequences of the lesser-included offenses doctrine; and alternative interpretations of how discretion can be granted in the legislative application.

\paragraph{Methodology.} We hope some readers will find our analysis interesting on a methodological level. While it is convenient and economically insightful to describe our substantive results in terms of optimal delegation sets, the formal problem we study is one of mechanism design without transfers. Our analytical methodology builds on the infinite-dimensional Lagrangian approach advanced by \citet{AAW:06} and \citet{AB:13}. Unlike these authors and many others, including the important contributions by \citet{MS:91} and \citet{AM:08}, we also cover stochastic mechanisms.\footnote{A qualification is appropriate: both \citet{AAW:06} and \citet{AB:13} allow for money burning, identifying conditions under which optimal mechanisms do not employ that instrument; see \citet{ABF:18} as well. Stochastic mechanisms are equivalent to money burning for certain preference specifications, but in general they are not equivalent. \cite{AE:17} discuss settings in which money burning can be optimal.} That is, we allow for mechanisms in which Vetoer may choose among lotteries over actions. We view such mechanisms as not only theoretically important, but also relevant in applications. 
	 For instance, consider a President (Proposer) nominating a judge for confirmation from the Senate (Vetoer) to a lifetime appointment. Both the President and the (pivotal voter in the) Senate have preferences over the ideology of the appointed judge. While some potential nominees have extensive written records or well-identified ideologies, others have less of a record of their own legal opinions and thus could be viewed as a lottery over ideologies.\footnote{For example, while John Roberts had argued several cases before the U.S. Supreme Court, he had served only two years as a judge prior to his nomination to the Court by President George W. Bush in 2005; in 2020, Vice President Mike Pence complained that ``John Roberts has been a disappointment to conservatives''.}

Stochastic mechanisms can sometimes be optimal in our framework. Nevertheless, we establish that our sufficient conditions for full delegation, no compromise, and interval delegation (Propositions \ref{prop:full}, \ref{prop:no_compromise}, and \ref{prop:interval}) ensure optimality of these (deterministic) mechanisms even among stochastic mechanisms. 
Furthermore, by permitting stochastic mechanisms, our sufficient conditions are shown to also be necessary for a class of Proposer's utility functions that include linear and quadratic loss. Our approach to handling stochastic mechanisms should be useful in other delegation problems. 

Recently, \citet{KZ:19} have introduced \emph{balanced delegation problems}, which are delegation problems in which certain extreme actions or outside options must be included. Our setting fits into their general framework, as one can assume the status quo must be part of the delegation set. \citeauthor{KZ:19} derive a general equivalence between such problems and monotone Bayesian persuasion problems. More concretely, they show how some results from the latter literature \citep[e.g.,][]{Kolo:18,DM:19} can be brought to bear on ``linear'' balanced delegation problems.\footnote{This linearity requires that the utilities of Proposer and (all types of) Vetoer, viewed as a function of the action, have the same curvature.}  Our approach of directly studying the delegation problem is complementary and has some advantages. First, it permits insights absent said linearity: this is most evident in our full delegation result. Second, we believe it provides some more transparent economic intuitions. Third, unlike \citet{KZ:19}, we can address stochastic mechanisms and necessity of our sufficient conditions. At a broader level, note that by contrast to us, \citet{KZ:19} highlight applications concerning expertise-based delegation (i.e., with state-dependent delegator preferences). \citet{Zap:19} applies their methodology to a quality certification problem that, he shows, maps into a delegation problem in which the delegator's preferences are state independent.

Like us, \citet{Saran:20} and \citet{AB:21} directly analyze delegation problems with an outside option. Iterations of both papers have developed concurrently with ours. Both papers study frameworks that are more general than ours insofar as they allow for state-dependent delegator preferences. Our approach to finding conditions for the optimality of delegation sets differs from theirs. This is most evident in that neither \citet{Saran:20} nor \citet{AB:21} consider stochastic mechanisms or necessity of their sufficient conditions. Our approach also allows us to deduce more permissive sufficient conditions for our delegation sets than their results would when specialized to state-independent delegator preferences.\footnote{With regards to \cite{Saran:20}: his Section 5 considers state-independent preferences, specifically the analog of our linear loss Proposer utility. Our sufficiency condition for full delegation (\autoref{prop:full}) subsumes his on the agent-optimal mechanism, while our condition for interval delegation (\autoref{prop:interval}) subsumes his on take-it-or-leave-it and minimal-acceptable-action mechanisms. (His take-it-or-leave-it mechanism is actually our interval delegation rather than our no compromise because we make different assumptions on the support of Vetoer's ideal-point distribution.) On the other hand, he presents examples in which there are mechanisms that outperform these; see his Figures 6 and 7. 

With regards to \citet{AB:21}: their cap allocation with potential exclusion corresponds to our interval delegation. In our framework, using their approach (see their Section 4) would correspond to finding sufficient conditions that ensure that given an arbitrary interval threshold---including any suboptimal one---it would be constrained-optimal for Proposer to not further restrict Vetoer's action within the given interval. They show in their Section 5 that this approach is fruitful for their monopoly regulation problem with state-dependent regulator preferences. But in a setting with state-independent Proposer preferences like ours, it would imply strong restrictions that rule out many cases we cover. For example, with linear loss Proposer utility, it would imply that our type density must be decreasing on $[0,1]$, so that only full delegation could emerge as optimal.} Furthermore, our comparative statics and comparisons with cheap talk are distinct.


\paragraph{Outline.} The rest of the paper proceeds as follows. \autoref{s:model} presents our model. \autoref{s:main} contains our main results on the conditions for optimality of full delegation, no compromise, and, more broadly, interval delegation. \autoref{s:statics} develops comparative statics and makes comparisons with other mechanisms. \autoref{s:applications} contains our applications. \autoref{s:conclusion} concludes. All proofs are in the \hyperref[s:proofs]{appendices}.

\section{Model}
\label{s:model}

\subsection{Veto Bargaining with Incomplete Information}

We consider a classic bargaining problem between two players, a proposer (he) and a veto player (she), who jointly determine a policy outcome or action $a \in \mathbb{R}$. In a manner elaborated below, Proposer makes a proposal that Vetoer can either accept or reject. If Vetoer rejects, a status-quo action is preserved; we normalize the status quo to $0$.

We assume both players have single-peaked utilities. Proposer's utility is $u(a)$ that is concave, maximized uniquely at $a=1$ (essentially a normalization), and twice differentiable at all $a\neq 1$.\footnote{Permitting a point of nondifferentiability allows the linear loss function $u(a)=-|1-a|$. 
When we write $u'(1)$ subsequently, it refers to the left-derivative when $u$ is not differentiable at 1.} Unless indicated explicitly, we use `concave', `increasing', etc., to mean `weakly concave', `weakly increasing', etc. We will sometimes invoke a restriction to the following subclass of Proposer preferences, which stipulates a convex combination of the widely-used linear and quadratic loss functions.

\begin{customcondition}{LQ}\label{LQ}
For some $\gamma\in [0,1]$,
\begin{equation*}
u(a)=-(1-\gamma)|1-a|-\gamma(1-a)^2. 	
\end{equation*}
\end{customcondition}

Vetoer's utility is represented by $-l(|v-a|)$, where $l(\cdot)$ is strictly increasing. So her utility is symmetric around the unique ideal point $v$.
For tractability, we assume $l(|v-a|)=(v-a)^2$. A subset of our results will rely only on Vetoer's ordinal preferences, for which the choice of quadratic loss entails no loss of generality given that Vetoer's utility is symmetric around her ideal point. Specifically, Vetoer's ordinal preferences are sufficient when we consider only deterministic mechanisms.

A key ingredient of our model is that $v$ is Vetoer's private information. We accordingly refer to $v$ as Vetoer's \emph{type}. It is drawn from a cumulative distribution $F$ whose support is an interval $[\underline v,\overline v]$, where we permit $\underline v=-\infty$ and/or $\overline v=\infty$. We assume $F$ admits a continuously differentiable density $f$, and that $f(\cdot)>0$ on $[0,1]$.   All aspects of the environment except the type $v$ are common knowledge. If $v$ were common knowledge, this model would reduce to that of \citet{RR:78}.

Naturally, it is in Proposer's interests to elicit information from Vetoer about $v$. For example, they might engage in cheap talk communication \citep{Matthews:89}, possibly over multiple rounds, or Proposer might make sequential proposals, and so on. To circumvent issues about exactly how the bargaining ensues, we take a mechanism design approach. Following the revelation principle, we consider direct revelation mechanisms, hereafter simply mechanisms. 

A \emph{deterministic mechanism} is described by a real-valued function $\alpha(v)$, which specifies the action when Vetoer's type is $v$, and must satisfy the usual incentive compatibility (IC) and individual rationality (IR) conditions. IC requires that each type $v$ prefers $\alpha(v)$ to $\alpha(v')$ for any $v'\neq v$; IR requires that each type $v$ prefers $\alpha(v)$ to the status quo $0$. Notice that any deterministic mechanism is equivalent to the Proposer offering a (closed) menu or \emph{delegation set} $A \subseteq \Reals$, and Vetoer choosing an action from $A \cup \{0\}$. We will also consider the more general class of \emph{stochastic mechanisms}, which specify probability distributions over actions for each Vetoer type, with analogous IC and IR constraints to those aforementioned. Stochastic mechanisms are theoretically important because the revelation principle does not justify focusing only on deterministic mechanisms. As noted in the introduction, they may also be relevant for applications. A notable contribution of this paper is to establish conditions under which, despite the absence of transfers, stochastic mechanisms cannot improve upon deterministic ones.\footnote{\autoref{rem:stochastic} below explains why stochastic mechanisms can be optimal; \autoref{eg:stochastic} in \appendixref{s:stochastic} elaborates.  \citet[p.~281]{AM:08} provide a related example in their framework without a veto option; see also \citet[Section 4]{KM:09}.}

The mechanism design approach we take can be viewed as identifying an upper bound on Proposer's welfare. That said, as also mentioned in the introduction, we find the implementation via delegation sets quite realistic in various contexts.

\subsection{Proposer's Problem}

We now formally define Proposer's problem. 
Let $M(\R)$ denote the set of Borel probability distributions on $\R$,\footnote{We endow $M(\R)$ with the topology of weak convergence and the corresponding Borel $\sigma$-algebra.} and $M_0(\R)$ be the subset of distributions with finite expectation and finite variance. Denote by $\delta_a$ the degenerate distribution that puts probability 1 on action $a$.  A stochastic mechanism---or simply a mechanism without qualification---is a measurable function $m:[\underline v,\overline v]\rightarrow M_0(\R)$, with $m(v)$ being the probability distribution over actions for type $v$.\footnote{There is no loss in restricting attention to $M_0(\R)$ instead of $M(\R)$ because no type would choose a lottery with infinite mean or variance, given that the status quo is available.} 
To reduce notation, for any deterministic mechanism $\alpha:[\underline v,\overline v]\to \Reals$, we also denote the mechanism $v\mapsto \delta_{\alpha(v)}$ by $\alpha$. For any integrable function $g:A\to \Reals$, let $\E_{m(v)}[g(a)]$ denote the expectation of $g(a)$ when $a$ has distribution $m(v)$.  We only consider mechanisms $m$ for which $v\mapsto \E_{m(v)}[a]$ is integrable. Define the subset of mechanisms
\[\mathcal{S}:= \big\{m: [\underline v,\overline v]\rightarrow M_0(\R)\ | \  m(0)=\delta_0 \text{ and } \forall v<v':\E_{m(v)}[a] \le \E_{m(v')}[a]\big\}.\]
That is, $\mathcal{S}$ consists of mechanisms in which type $0$ gets the status quo and a higher type receives a higher expected action. The first requirement is implied by IR, since Vetoer can always choose the status quo. The second is implied by IC, since Vetoer's utility $-(v-a)^2$ is equivalently represented by $av-a^2/2$; singlecrossing difference in $(a,v)$ yields monotonicity of $\E_{m(v)}[a]$ in $v$ from standard arguments (elaborated in fn.~\ref{fn:envelope} below).

Proposer's problem is:
\begin{align}\label{e:relaxed2}
&\max_{m\in \mathcal{S}} \int \E_{m(v)}[u(a)] \mathrm dF(v)\tag{P} \\
&\text{s.t. }\E_{m(v)}\left[av  - {a^2}/{2}\right] = \int_0^{v} \E_{m(s)}[a] \mathrm ds \hspace{.5cm} \forall v\in[\underline v,\overline v].\tag{IC-env}\label{eq:stochastic_envelope}
\end{align}
As noted above, it is without loss to restrict attention to mechanisms in $\mathcal S$. The constraint \eqref{eq:stochastic_envelope} captures the additional content of IC, beyond monotonicity, via an analog of the standard envelope formula.\footnote{\label{fn:envelope}Formally, using quadratic utility, IC requires $\forall v,v'$, $\E_{m(v)}[av-a^2/2] \geq \E_{m(v')}[av-a^2/2]$, and IR requires $\forall v$, $\E_{m(v)}[av-a^2/2]\geq 0$.
An IC mechanism $m$ thus satisfies IR if and only if $m(0)=\delta_0$. It follows that $m$ satisfies IC and IR if and only if $m\in \mathcal{S}$ and the envelope formula \eqref{eq:stochastic_envelope} holds. To confirm this, let $\mathbb V(v):= \E_{m(v)}[av-a^2/2]$. Mechanism $m$ is IC if and only if $\mathbb V(v)=\max_{v'}\E_{m(v')}[av-a^2/2]$, which holds if and only if $\mathbb V$ is convex and $\mathbb V(v)=\mathbb V(0)+\int_0^v \E_{m(s)}[a]\mathrm ds$ \citep[Theorem 2]{MS:02}. Consequently, $m$ is IC and IR if and only if $\E_{m(v)}[a]$ is increasing in $v$, \eqref{eq:stochastic_envelope} holds, and $m(0)=\delta_0$.
A technical note: \citepos{MS:02} result applies even when our type space is unbounded because we can effectively restrict attention to types in $[0,1]$ when solving for optimal mechanisms, as elaborated at the outset of \autoref{s:proofs}; moreover, in any IC and IR mechanism, $\E_{m(v)}[a]$ will lie in $[0,2]$ for all $v\in[0,1]$. Therefore, the derivative of Vetoer's utility with respect to her type is bounded in any IC and IR mechanism.}
 Note that since IC requires that no type prefer type $0$'s lottery over its own, and type $0$'s IR constraint requires that it receive action $0$ (captured in $\mathcal S$), every type's IR constraint is implied by type $0$'s. An \emph{optimal mechanism} is a solution to problem \eqref{e:relaxed2}. 

If we restricted attention to deterministic mechanisms, the analogous problem for Proposer would be:
\begin{align*}\label{e:original}
&\max_{\alpha\in \mathcal{A}} \int u(\alpha(v))\ \mathrm dF(v) \tag{D}\\
&\text{s.t. } v \alpha(v) -{\alpha(v)^2}/{2} =\int_0^{v} \alpha(s) \mathrm ds,
\end{align*}
where
$$\mathcal{A}:= \{\alpha:[\underline v,\overline v]\to \Reals \ \vert \ \alpha(0)=0 \text{ and $\alpha$ is increasing}\}.$$ 
 
Any deterministic mechanism $\alpha$ that is IC has a corresponding (closed) delegation set $A_\alpha := \bigcup_{v}\alpha(v)$. Conversely, any delegation set $A$ has a corresponding deterministic mechanism $\alpha_A$ where $\alpha_A(v)$ is the action in $A\cup\{0\}$ that type $v$ prefers the most (with ties broken in favor of Proposer). Note that $\alpha_A$ satisfies IC and IR. While our formal analysis works with mechanisms, it is easier and more economically intuitive to describe our main results, which concern certain deterministic mechanisms, using delegation sets. 
 
 We emphasize some terminology: an \emph{optimal deterministic mechanism} (or an \emph{optimal delegation set}) is a solution to problem \eqref{e:original}. But when we say that a \emph{deterministic mechanism (or delegation set) is optimal}, we mean that it solves problem \eqref{e:relaxed2}, i.e., no stochastic mechanism can strictly improve on it.
 
\subsection{Discussion}

Let us comment on four aspects of our model.

First, veto power is captured via a standard interim IR constraint. An alternative, as in \citet{CJ:09}, would be to allow Vetoer to exercise her veto even after the mechanism determines an action. This is stronger than just ex-post IR because it also strengthens the IC constraint: when type $v$ mimics type $v'$, $v$ may veto a different set of allocations than $v'$ would, and so the action distribution that $v$ evaluates the deviation with is not $m(v')$. Which form of veto power is conceptually appropriate depends on the application. But any IC and IR deterministic mechanism also satisfies the ex-post veto constraint. Hence, the sufficient conditions we provide below for optimality of delegation sets would remain sufficient. 

Second, our model is one of private values: Vetoer's type does not affect Proposer's preferences. This is by way of contrast with the delegation literature initiated by \citet{Holmstrom:84}, in which a principal gives discretion to an agent because of the agent's expertise, i.e., because they have interdependent preferences. We could extend our model and analysis to incorporate this expertise-based delegation or discretion aspect, but one of our main themes is that discretion will emerge even when that is absent, because we instead have veto power.

Third, one might ask why Vetoer relies on Proposer in the first place given private values. Why can't the Vetoer simply choose her ideal point, or require Proposer to offer all the options? At a formal level, we simply take the veto bargaining institution and Proposer's agenda setting power as given, for reasons outside the model.\footnote{\citet{Mylovanov:08} provides a rationale for what he calls ``veto-based delegation''.} But it is plausible in many situations that even though Vetoer knows her preferences, she nevertheless relies on Proposer to provide the options. In the hiring application mentioned earlier, a superior within an organization may well know her preferences, but lacks the time or expertise to find candidates herself. She thus relies on the search committee. In an application elaborated on in \autoref{s:appendices}, a consumer cannot pick among products a salesperson chooses not to make available (and the salesperson can always claim some products are out of stock). A related point in the legislative context is that even when Vetoer knows her spatial/ideological preferences, she cannot simply implement her preferred policy: implementation must be preceded by policy development, which is done by Proposer \citep[cf.~][]{HS:15}.

Fourth, while we have assumed that Proposer has an ideal point of $1$, an equivalent formulation is that Proposer's utility $u(a)$ is globally increasing but he is constrained to only offer actions less than $1$, or there is simply an upper bound on the action space at $1$. 

\subsection{Preliminary Observations}

Consider delegation sets. Notice first that there is no loss for Proposer in including his ideal action $1$ in the delegation set: for any Vetoer type $v$, either it does not affect the chosen action, or it results in a preferable action. Next, there is no loss for Proposer in excluding actions outside $[0, 1]$: shrinking a delegation set $A$ that contains $1$ to $A \cap [0,1]$ only results in each type choosing an action closer to $1$. As existence of an optimal delegation set follows from standard arguments, we have:

\begin{lemma}
\label{lem:largestset}
There is an optimal delegation set $A$ satisfying $1\in A\subseteq [0,1]$.
\end{lemma}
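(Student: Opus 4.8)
The plan is to establish the three claims contained in \autoref{lem:largestset} in sequence: (i) there is no loss in requiring $1 \in A$; (ii) given that, there is no loss in requiring $A \subseteq [0,1]$; and (iii) an optimal delegation set exists at all. The argument is essentially a pointwise domination argument on the induced mechanism $\alpha_A$ together with a compactness argument for existence.

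First I would take an arbitrary delegation set $A$ (which I may as well assume is nonempty and closed, adjoining $0$ if necessary since Vetoer always has the status quo) and consider $A' := A \cup \{1\}$. For each type $v$, let $\alpha_A(v)$ and $\alpha_{A'}(v)$ be the respective most-preferred points (ties broken for Proposer). Since adding an option can only help Vetoer, $\alpha_{A'}(v)$ is at least as good for Vetoer as $\alpha_A(v)$; the point is that it is also at least as good for Proposer. This follows from single-peakedness: the newly chosen point is either $\alpha_A(v)$ itself, or it is $1$, and if Vetoer switched to $1$ it must be that $1$ is weakly closer to $v$ than $\alpha_A(v)$ was, which — since both $u$ and Vetoer's preferences are single-peaked at $1$ and $v$ respectively with $1$ being Proposer's peak — means $1$ lies weakly between $\alpha_A(v)$ and $1$, hence is weakly better for Proposer. (More carefully: if $v$ switches from $a \in A$ to $1$, then $|v-1| \le |v-a|$, so $1$ is on the same side of $v$ as, but no farther than, $a$; combined with $u$ increasing up to $1$, we get $u(1) \ge u(a)$ when $a \le 1$, and the case $a>1$ is handled in the next step.) I would then pass to $A'' := A' \cap [0,1]$: replacing $A'$ by $A''$ moves every type's chosen action weakly toward $1$ — options above $1$ are replaced by something in $[\alpha,1]$ and options below $0$ by $0$ — and Proposer's utility is increasing on $(-\infty,1]$, so this weakly helps Proposer. (One should note type $0$ still gets $0$ since $0 \in A''$, so IR and membership in $\mathcal A$ are preserved.) It is cleanest to argue (i) and (ii) together by showing directly that the map $a \mapsto \operatorname{median}\{0, a, 1\}$ applied to $A$, then adjoining $1$, only improves Proposer's payoff type by type.

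The main obstacle is the existence claim (iii), which the excerpt dispatches with ``standard arguments'' but which still needs a genuine compactness input. Here I would restrict attention, by the previous paragraph, to closed subsets $A \subseteq [0,1]$ with $\{0,1\} \subseteq A$; these form a compact space under the Hausdorff metric. The objective $A \mapsto \int u(\alpha_A(v))\,\mathrm dF(v)$ needs to be upper semicontinuous along a convergent sequence $A_n \to A$: for a.e.\ $v$, the best point for $v$ in the limit set $A$ is a limit of best points in $A_n$ (at points of uniqueness, which is all but countably many $v$ given strict single-peakedness of quadratic loss, $\alpha_{A_n}(v) \to \alpha_A(v)$), and then continuity of $u$ together with dominated convergence (using $u$ bounded on $[0,1]$) gives convergence of the integral. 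Hence a maximizer exists. Alternatively, and perhaps more in the spirit of the paper's later Lagrangian machinery, one can invoke existence in the larger stochastic problem \eqref{e:relaxed2} via weak-* compactness of $\mathcal S$ restricted to actions in $[0,1]$ and upper semicontinuity of $m \mapsto \int \E_{m(v)}[u(a)]\,\mathrm dF(v)$, but since \autoref{lem:largestset} is only about deterministic mechanisms the Hausdorff-metric argument is self-contained and sufficient.

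Assembling: start from \emph{any} optimal delegation set if one is known to exist, or from the existence argument just sketched restricted to $A \subseteq [0,1]$ with $1 \in A$; the domination argument then shows that the optimum over this restricted class is an optimum over all delegation sets, giving an optimal $A$ with $1 \in A \subseteq [0,1]$. The one place to be slightly careful is that ``optimal delegation set'' here means a solution to \eqref{e:original} (optimal among deterministic mechanisms), not to \eqref{e:relaxed2}, consistent with the terminology the authors flag just before the lemma; so no stochastic considerations enter this proof.
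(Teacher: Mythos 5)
Your proposal is correct and follows essentially the same route as the paper: adjoin $1$, intersect with $[0,1]$, and invoke a (Hausdorff-metric) compactness-plus-continuity argument for existence, which the paper leaves as ``standard.'' One small simplification: in step (i) you need no single-peakedness gymnastics --- if a type switches its choice to $1$, that is Proposer's unique maximizer of $u$, so the switch is automatically weakly (indeed, here strictly unless $u(\alpha_A(v))=u(1)$) good for Proposer; the single-peakedness argument is only genuinely needed for step (ii), where you correctly observe that any type previously choosing $a>1$ must have $v>1$ and hence ends up at $1$ after pruning.
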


It would also be without loss to assume that a delegation set contains the status quo, $0$. We don't do so, however, because it is convenient to sometimes describe optimal delegation sets without including $0$.

For any $a\in (0,1)$, the delegation set $A=[a, 1]$ strictly dominates the singleton $\{a\}$ because $A$ results in preferable actions for Proposer when $v>a$. This simple observation highlights the significance of giving Vetoer discretion, despite our model shutting down the expertise-based rationale that the literature initiated by \citet{Holmstrom:84} has focused on.

While Proposer always wants to include action $1$ in the delegation set, he faces a tradeoff when including any action $a\in (0,1)$. Allowing Vetoer to choose such an action $a$ reduces the probability of a veto (or any action less than $a$) but also reduces the probability that Vetoer chooses an action even higher than $a$, which Proposer would prefer to $a$.

\section{Delegation and Optimal Mechanisms}
\label{s:full}
\label{s:main}

\subsection{Full Delegation}
In light of \autoref{lem:largestset}, we refer to the delegation set $[0,1]$ as \emph{full delegation}. Note that full delegation does impose some constraints on Vetoer. But the constraints are minimal: only actions outside the convex hull of the status quo and Proposer's ideal point are excluded. Given the veto-bargaining institution, an outcome of full delegation starkly captures how Vetoer's private information can corrode Proposer's bargaining or agenda-setting power. All Vetoer types in $[0,1]$ obtain their ideal action; no matter Vetoer's type, there is (ex-post) Pareto efficiency, unlike in most other settings that confer information rents. 
Full delegation thus contrasts sharply with the outcome under complete information \citep{RR:78}, in which case Proposer would make Vetoer with ideal point $v<1/2$ indifferent with exercising the veto while getting his own ideal action $1$ from types $v\geq 1/2$. It also contrasts with the outcome under incomplete information were Proposer restricted to making a singleton proposal. In that case the proposal would lead to a veto by some subinterval of types $v\in[0,1]$, hence to ex-post Pareto inefficiency, and all Vetoer types would be weakly worse off, many strictly.\footnote{Action $0$, which can be viewed as a veto, also has positive probability under full delegation when $\underline v<0$.}

It is thus of interest to know when full delegation is optimal. The following quantity concerning the concavity of Proposer's utility will play a key role in our analysis:
\[
\label{kappa}
\kappa:=\inf_{a\in [0,1)} -u^{\prime \prime} (a).
\]
Under \autoref{LQ}, $\kappa = 2\gamma$, which is larger when Proposer's utility puts more weight on its quadratic component relative to its linear component.

\begin{proposition}[Full delegation]\label{prop:full}
Full delegation is optimal if 
\begin{equation}
\label{e:fulldelegation}
\kappa F(v)- u'(v) f(v) \text{ is increasing on } [0,1]. 	
\end{equation}
Conversely, under \autoref{LQ}, full delegation is optimal only if \eqref{e:fulldelegation} holds.
\end{proposition}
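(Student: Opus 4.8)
\textbf{Proof approach for Proposition~\ref{prop:full}.} The plan is to work with the relaxed problem \eqref{e:relaxed2} and use the infinite-dimensional Lagrangian machinery of \citet{AAW:06,AB:13}, extended to stochastic mechanisms. The first step is to integrate the envelope constraint \eqref{eq:stochastic_envelope} against a multiplier and combine it with the objective to form a Lagrangian whose integrand, for each type $v$, depends on the distribution $m(v)$ only through its first two moments $\E_{m(v)}[a]$ and $\E_{m(v)}[a^2]$ (since $u$ appears directly but the incentive terms are $av - a^2/2$). The candidate optimum is $m^*(v)=\delta_{\min\{\max\{v,0\},1\}}$, i.e. full delegation. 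I would guess the multiplier by the usual complementary-slackness heuristic: the envelope constraint binds wherever full delegation has the type strictly interior (i.e. on $(0,1)$), and the multiplier should be pinned down by a first-order/transversality condition there. Integration by parts then converts $\int_0^1 (\cdots)\,\mathrm dF$ into something whose sign is governed by the monotonicity of $\kappa F(v) - u'(v) f(v)$; this is precisely where condition \eqref{e:fulldelegation} enters. The key inequality to verify pointwise is that, for the constructed multiplier, the degenerate distribution $\delta_v$ maximizes the Lagrangian integrand over all lotteries in $M_0(\R)$ — here concavity of $u$ handles the $\E[u(a)]$ term and the quadratic incentive terms are controlled using the bound $-u''\ge\kappa$ together with the multiplier's monotonicity. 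That gives sufficiency.

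\textbf{Necessity under Condition~\ref{LQ}.} For the converse I would argue by contraposition: suppose \eqref{e:fulldelegation} fails, so $\kappa F - u'f$ is strictly decreasing on some subinterval $[v_1,v_2]\subseteq[0,1]$. Under \autoref{LQ} we have $\kappa=2\gamma$ and $u'(a) = (1-\gamma)\,\mathrm{sign}(1-a) + 2\gamma(1-a)$ on $[0,1)$, so the relevant expression is explicit. The goal is to exhibit a stochastic mechanism that strictly improves on full delegation. The natural perturbation is to take the full-delegation mechanism and, for types in a neighborhood where the multiplier condition is violated, replace the degenerate action $\delta_v$ by a mean-preserving spread — a two-point lottery — chosen so that the monotonicity-of-means constraint in $\mathcal S$ and the envelope formula \eqref{eq:stochastic_envelope} are still satisfiable after adjusting nearby types. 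Because \autoref{LQ} makes $u$ a combination of linear and quadratic loss, on the linear part such a spread is costless and on the quadratic part its cost is exactly the variance, which is the same across types; this is what makes the sufficient condition also necessary for this class (and is the structural reason the argument does not work for general concave $u$). Evaluating the first-order effect of this perturbation on the Lagrangian, the failure of monotonicity of $\kappa F - u'f$ translates into a strictly positive directional derivative, contradicting optimality of full delegation.

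\textbf{Main obstacle.} I expect the hard part to be the necessity direction, specifically constructing the improving stochastic perturbation while keeping it feasible: one must simultaneously preserve $m\in\mathcal S$ (monotone means, $m(0)=\delta_0$) and the envelope equality \eqref{eq:stochastic_envelope} for \emph{all} types, not just the perturbed ones, which forces a compensating adjustment of the action profile on a whole interval rather than a single type. Getting the bookkeeping right — showing the perturbed mechanism stays in the feasible set and computing the sign of the change in Proposer's payoff cleanly using the \autoref{LQ} structure — is the delicate step. A secondary technical point on the sufficiency side is justifying the integration-by-parts and the interchange of maximization (over lotteries, for each $v$) with integration (over $v$) given the unbounded type space; the footnote's remark that one may restrict attention to types in $[0,1]$ and that expected actions stay in a bounded set is what makes this rigorous, and I would invoke it explicitly.
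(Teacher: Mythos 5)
Your sufficiency plan is essentially the paper's argument: the multiplier is $\Lambda(v)=\kappa F(v)-u'(v)f(v)$, condition \eqref{e:fulldelegation} is exactly what makes it increasing (hence a legitimate multiplier for the inequality constraints), and the bound $-u''\ge\kappa$ is what controls the comparison between lotteries and their means. The paper merely organizes this in two steps---\autoref{lem:relaxed-stochastic} first reduces \eqref{e:relaxed2} to the concavified deterministic relaxed problem \eqref{e:relaxed} by replacing each lottery with its mean (Jensen plus concavity of the relaxed objective), and the Lagrangian verification is then run on \eqref{e:relaxed}---rather than maximizing a Lagrangian pointwise over lotteries, but the content is the same and your version would go through.

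The necessity direction has a genuine gap: the improving perturbation you propose is the wrong object. Replacing $\delta_v$ by a mean-preserving spread, holding the mean-action profile fixed, cannot generate a first-order gain: under \autoref{LQ} Proposer's payoff from a lottery supported below $1$ is $\E[u(a)]=u(\E[a])-\gamma\,\mathrm{Var}(a)$, so a spread is strictly harmful when $\gamma>0$ and payoff-neutral when $\gamma=0$. The entire improvement when \eqref{e:fulldelegation} fails must come from changing the \emph{mean} profile. Moreover, the feasibility repair you flag as the ``hard part'' is not merely delicate but structurally misplaced: a spread at type $v$ lowers $\E_{m(v)}[av-a^2/2]$ by half the variance while leaving $\int_0^v\E_{m(s)}[a]\,\mathrm ds$ unchanged, so restoring \eqref{eq:stochastic_envelope} requires altering the means of a whole interval of types---at which point the means, not the noise, are doing the work.

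The paper decouples the two issues. \autoref{l:necessity_linear} shows that under \autoref{LQ} \emph{any} monotone mean profile with nonnegative envelope slack can be implemented by an IC stochastic mechanism with those same means---noise is calibrated type by type to burn exactly the slack, i.e., it substitutes for transfers---and, because the \autoref{LQ} objective is affine in $(\E[a],\E[a^2])$ on $(-\infty,1]$, this mechanism attains the relaxed objective's value. Necessity thereby becomes a first-order condition for the deterministic problem \eqref{e:relaxed}. The improving direction (\autoref{lemma:interval_nes}) is then a deterministic re-pooling of the types in a subinterval $(d,e)\subseteq[0,1]$ on which $\kappa F-u'f$ strictly decreases (in effect, pruning $(d,e)$ from the delegation set, so types split between $d$ and $e$ around the midpoint); the Gateaux derivative of the relaxed objective in that direction is strictly positive precisely because $\kappa F-u'f$ decreases there. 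So you have the right ingredient---quadratic Vetoer utility lets variance play the role of money burning---but it belongs in the implementation step, not in the perturbation itself; as written, your perturbation would not beat full delegation.
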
 

Since $\kappa \geq 0$, $F(\cdot)$ is increasing, and $u'(\cdot)$ is decreasing and nonnegative on $[0,1]$, the proposition directly implies:

\begin{corollary}
\label{c:decr}
Full delegation is optimal if the type density is decreasing on $[0,1]$. 
\end{corollary}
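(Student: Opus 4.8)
The plan is to prove Proposition~\ref{prop:full} by the infinite-dimensional Lagrangian method, and then obtain Corollary~\ref{c:decr} as an immediate consequence, so I focus on how I would set up the proof of the proposition. First I would use Lemma~\ref{lem:largestset} and the standard reductions (footnote~\ref{fn:envelope}) to restrict attention to mechanisms supported on $[0,1]$ with types in $[0,1]$. The candidate optimum is the deterministic mechanism $\alpha(v)=v$ for $v\in[0,1]$ (and $\alpha(v)=1$ for $v>1$, $\alpha(v)=0$ for $v<0$). The key object is the constraint \eqref{eq:stochastic_envelope}: writing $\mathbb{V}(v):=\E_{m(v)}[av-a^2/2]$, this says $\mathbb{V}(v)=\int_0^v \E_{m(s)}[a]\,\mathrm ds$. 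I would attach a multiplier (a signed measure, or its distribution-function representative $\Lambda$) to this family of constraints and form the Lagrangian
\[
\mathcal{L}(m)=\int \E_{m(v)}[u(a)]\,\mathrm dF(v) + \int \Big( \int_0^v \E_{m(s)}[a]\,\mathrm ds - \E_{m(v)}[av-a^2/2]\Big)\,\mathrm d\Lambda(v).
\]
After an integration by parts (Fubini) the term $\int\int_0^v \E_{m(s)}[a]\,\mathrm ds\,\mathrm d\Lambda(v)$ becomes $\int \E_{m(v)}[a]\,(\Lambda(\bar v)-\Lambda(v))\,\mathrm dv$ (up to boundary terms), so the Lagrangian decomposes pointwise in $v$: it is $\int \big(\E_{m(v)}[\Phi_v(a)]\big)$ for an integrand $\Phi_v(a)=u(a)f(v) + a\,(\text{something involving }\Lambda) - (av-a^2/2)\lambda(v)$, where $\lambda$ is the density of $\Lambda$. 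The crux is then to choose the multiplier $\Lambda$ so that (i) $\Lambda$ is a valid sign (the inequality direction of the IR/IC-derived constraint pins down monotonicity of $\Lambda$), and (ii) for each $v$, the affine-plus-concave function $a\mapsto \Phi_v(a)$ is maximized over $M_0(\R)$ by the degenerate distribution $\delta_v$ — equivalently $\Phi_v$ is concave in $a$ and has a stationary point at $a=v$. Condition (ii) is what forces the complementary-slackness and first-order conditions; it is where the hypothesis \eqref{e:fulldelegation} enters.

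Concretely, I expect the multiplier to be chosen so that the first-order condition ``$\Phi_v'(v)=0$ for all $v\in(0,1)$'' reads, after rearrangement, $\lambda(v) = $ something like $\kappa F(v)-u'(v)f(v)$ (using $\kappa$ as the worst-case curvature so that the concavity requirement on $\Phi_v$ holds uniformly), and the requirement that $\Lambda$ be monotone — needed because the constrained set $\mathcal S$ only asks $\E_{m(v)}[a]$ to be \emph{increasing}, so the shadow price on the monotonicity/envelope must have a definite sign — is exactly the statement that $\kappa F(v)-u'(v)f(v)$ is increasing on $[0,1]$. That is the heart of the argument: hypothesis \eqref{e:fulldelegation} is precisely what makes the natural multiplier admissible, and given an admissible multiplier, weak duality gives that no feasible $m$ beats $\delta_v$. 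I would also need to check the boundary behavior at $v=0$ and $v=1$ (the multiplier may have atoms there, corresponding to the slack created by truncating at $0$ and $1$), and verify that $\delta_v$ indeed attains the pointwise max — for $a<1$ the curvature of $u$ is at least $\kappa$ so $\Phi_v$ is concave; at the kink of $u$ (if $\gamma<1$) and at $a=1$ one has to handle one-sided derivatives, which is why the statement uses the left-derivative convention for $u'(1)$.

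For the converse under Condition~\ref{LQ} (not needed for the corollary but part of the proposition), the plan is the usual duality-is-tight argument: with $u$ of the LQ form, $\kappa=2\gamma$ and the pointwise problem becomes genuinely quadratic, so one can exhibit, whenever $\kappa F-u'f$ fails to be increasing somewhere, a feasible perturbation of $\delta_v$ (a mean-preserving spread on a small interval, or a local pooling) that strictly raises the objective — i.e., construct a profitable stochastic or pooling deviation on the region where monotonicity of the candidate multiplier breaks down. This direction is a separate construction and I would treat it after the sufficiency half.

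The main obstacle I anticipate is the rigorous handling of the infinite-dimensional duality with an unbounded type space and stochastic mechanisms: verifying that the Lagrangian manipulation (Fubini, integration by parts against the measure $\Lambda$) is valid given only finite-mean/finite-variance lotteries, that the multiplier can be taken to be a genuine (possibly unbounded, possibly atomic at the endpoints) measure, and that weak duality actually delivers global optimality among \emph{all} $m\in\mathcal S$ rather than just a local statement. Pinning down the exact multiplier — including its atoms at $0$ and $1$ and the sign conventions — and then checking the pointwise concavity/stationarity of $\Phi_v$ at $a=v$ using only the worst-case curvature $\kappa$ (so that the single inequality \eqref{e:fulldelegation} suffices) is the delicate computational core; everything else is bookkeeping. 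Corollary~\ref{c:decr} then follows with no extra work: $\kappa\ge 0$ and $F$ increasing make $\kappa F$ increasing, $u'\ge 0$ and decreasing on $[0,1]$ makes $-u'$ increasing, and $f$ decreasing makes $-u'f$ a product of nonnegative increasing factors, hence increasing; summing gives \eqref{e:fulldelegation}.
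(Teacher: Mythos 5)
Your derivation of the corollary itself is exactly the paper's: verify that a decreasing $f$ implies hypothesis \eqref{e:fulldelegation} of \autoref{prop:full} ($\kappa F$ increasing since $\kappa\ge 0$; $u'f$ decreasing as a product of nonnegative decreasing factors, so $-u'f$ increasing), and invoke the proposition. One wording slip: $-u'f$ is not ``a product of nonnegative increasing factors'' --- $-u'$ is nonpositive on $[0,1]$; the clean statement is that $u'f$ is a product of nonnegative decreasing functions, hence decreasing. The conclusion is unaffected.

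On your plan for \autoref{prop:full}: it is the right method and lands on the right multiplier, but the paper organizes it differently in a way worth noting. Rather than running the Lagrangian directly over stochastic mechanisms, the paper first passes to a relaxed \emph{deterministic} problem \eqref{e:relaxed} whose objective carries the penalty $-\kappa[\,v\alpha(v)-\alpha(v)^2/2-\int_0^v\alpha\,]$; concavity of that functional (via Jensen applied to $\E_{m(v)}[av-a^2/2]\le v\bar\alpha(v)-\bar\alpha(v)^2/2$ for $\bar\alpha(v)=\E_{m(v)}[a]$) is what lets an IC solution of \eqref{e:relaxed} dominate all lotteries, so the infinite-dimensional duality only ever has to be done over monotone real-valued $\alpha$. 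This sidesteps most of the measure-theoretic ``main obstacle'' you anticipate. Also, in the paper $\Lambda(v)=\kappa F(v)-u'(v)f(v)$ is the cumulative multiplier itself (with an atom at $v=1$ closing it up to $\kappa F(1)$), not the density $\lambda$ as one spot in your sketch suggests; monotonicity of that cumulative function is precisely \eqref{e:fulldelegation}, and with $\alpha(v)=v$ the Gateaux differential vanishes term by term, so no separate stationarity computation is needed.
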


In particular, it is sufficient for full delegation that the type distribution is unimodal with a negative mode. To obtain intuition for the corollary, consider removing any interval $(\underline{a}, \bar{a})$ from a delegation set $A$ that contains $[\underline{a},\overline{a}]$. This change induces Vetoer with type $v \in (\underline{a}, \bar{a})$ to choose between $\underline{a}$ and $\bar{a}$. Due to her symmetric utility function, Vetoer will choose $\underline{a}$ when $v \in \left(\underline{a},\frac{\underline{a}+\overline{a}}{2}\right)$, which harms Proposer, while Vetoer will choose $\overline{a}$  when $v\in \left(\frac{\underline{a}+\overline{a}}{2},\overline{a}\right)$, which benefits Proposer. When the type density is decreasing on $[\underline a,\overline a]$, the former possibility is more likely. In fact, the pruned delegation set induces an action distribution that is second-order stochastically dominated if and only if the type density is decreasing on $[\underline a,\overline a]$.\footnote{\label{fn:SOSD}Let $G_X$ denote the cumulative distribution of the action induced by $A$, $G_Y$ denote that induced by $A\setminus(\underline{a},\overline{a})$, and let $\amid=({\underline{a}+\overline{a}})/{2}$. Since $F$ is the distribution of $v$, it holds that $G_X(a)=G_Y(a)$ for $a\notin (\underline{a},\overline{a}]$, $G_X(a)=F(a)$ on $[\underline{a},\overline{a}]$, and $G_Y(a)=F(\amid)$ for $a\in [\underline{a},\overline{a})$. Consequently, for any $a\in[\underline{a},\amid]$, $G_Y(a)\ge G_X(a)$ and $\int_0^a \left[G_Y(t)-G_X(t)\right]\ \mathrm d t \ge 0$.
Furthermore, for $a\in (\amid,\overline{a}]$,
\begin{align*}
\int_0^a \left[G_Y(t)-G_X(t)\right] \mathrm dt = \int_{\underline{a}}^a \left[F(\amid)-F(t)\right] \mathrm dt \ge \int_{\underline{a}}^{\overline{a}} \left[F(\amid)-F(t)\right] \mathrm dt\ge 0,
\end{align*}
where the last inequality follows from Jensen's inequality because $F$ is concave on $[\underline a,\overline a]$. We conclude that $G_X$ second-order stochastically dominates $G_Y$.

If the type density is not decreasing on $[\underline a,\overline a]$, then second-order stochastic dominance does not hold, but Proposer is hurt by pruning $(\underline a,\overline a)$ if condition \eqref{e:fulldelegation}'s expression $\kappa F(v)-u'(v)f(v)$ is increasing on $[\underline a,\overline a]$.} Since Proposer's utility is concave, he prefers the original delegation set $A$. As any (closed) delegation set contained in $[0,1]$ can be obtained by successively removing open intervals from $[0,1]$, full delegation is an optimal delegation set. While this explanation applies only among delegation sets, 
\autoref{prop:full} implies that \autoref{c:decr} holds even allowing for stochastic mechanisms.

Removing an interval increases the expected action when the type density is increasing, but it also increases the probability of a lower action. Thus, when Proposer is risk averse, it can be optimal to not remove an interval even if the density is increasing on that interval. This explains why condition \eqref{e:fulldelegation} is weaker than $f$ decreasing on $[0,1]$. In general, removing an interval is optimal only if the density is increasing quickly relative to Proposer's risk aversion. This suggests that full delegation is optimal whenever Proposer is sufficiently risk averse.  \autoref{prop:full} allows us to formalize the point using the Arrow-Prat \citep{Arrow:65,Pratt:64} coefficient of absolute risk aversion.

\begin{corollary}
\label{c:risk}
Full delegation is optimal if Proposer is sufficiently risk averse, i.e., if $\inf\limits_{a\in [0,1)} -u''(a)/u'(a)$ is sufficiently large.
\end{corollary}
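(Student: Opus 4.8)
The plan is to reduce everything to \autoref{prop:full} by checking that sufficient risk aversion forces condition \eqref{e:fulldelegation}. Write $\phi(v):=\kappa F(v)-u'(v)f(v)$; the goal is to show $\phi$ is increasing on $[0,1]$. Two preliminary facts fix the notation: since $u$ is concave and uniquely maximized at $a=1$, we have $u'(v)>0$ and $-u''(v)\geq 0$ for $v\in[0,1)$, so $r:=\inf_{a\in[0,1)}\big(-u''(a)/u'(a)\big)$ is a well-defined element of $[0,\infty]$; and since $f$ is continuously differentiable and strictly positive on the compact interval $[0,1]$, the number $B:=\max\{0,\ \sup_{v\in[0,1]} f'(v)/f(v)\}$ is finite and depends only on $F$.

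Next I would show that $r\geq B$ implies \eqref{e:fulldelegation}, which establishes the corollary with threshold $B$. Because $u$ is twice differentiable on $[0,1)$ and $f\in C^1$, $\phi$ is differentiable there with $\phi'(v)=(\kappa-u''(v))f(v)-u'(v)f'(v)$. Using $\kappa\geq 0$, then $-u''(v)\geq r\,u'(v)$ (legitimate since $u'(v)>0$), and then $f(v)>0$, one gets $\phi'(v)\geq -u''(v)f(v)-u'(v)f'(v)\geq r\,u'(v)f(v)-u'(v)f'(v)=u'(v)f(v)\big(r-f'(v)/f(v)\big)\geq 0$, where the last inequality uses $r\geq B\geq f'(v)/f(v)$. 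Hence $\phi'\geq 0$ on $[0,1)$. As $F$ and $f$ are continuous and $u'$ is left-continuous at $1$ (left-derivatives of concave functions are left-continuous, consistent with the paper's convention for $u'(1)$), $\phi$ is continuous on $[0,1]$, so $\phi'\geq 0$ on the interior yields $\phi$ increasing on $[0,1]$. \autoref{prop:full} then gives optimality of full delegation, and since $B$ depends only on $F$, ``$r$ sufficiently large'' suffices.

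I do not expect a genuine obstacle here: the argument is essentially mechanical once one commits to verifying \eqref{e:fulldelegation} directly, mirroring the heuristic already in the text that pruning an interval pays only if the density grows fast relative to Proposer's risk aversion — precisely, $r\geq B$ asks that the Arrow--Pratt lower bound dominate the maximal logarithmic growth rate of $f$ on $[0,1]$. The only points needing care are (i) that $u'(v)>0$ on $[0,1)$, which is what makes the ratio defining $r$ meaningful and lets us pass from the bound on $-u''$ to a bound in terms of $u'$, and (ii) the right endpoint $v=1$, handled by the left-derivative convention and continuity of $\phi$. As a consistency check, when $f$ is decreasing on $[0,1]$ we have $B=0$, so the hypothesis $r\geq B$ is automatic and we recover \autoref{c:decr}.
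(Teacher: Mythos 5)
Your argument is correct and is essentially identical to the paper's proof: both verify condition \eqref{e:fulldelegation} by differentiating $\kappa F(v)-u'(v)f(v)$, discarding the $\kappa f(v)\geq 0$ term, and observing that $-u''(v)f(v)-u'(v)f'(v)\geq 0$ once the Arrow--Pratt coefficient exceeds $\sup_{v\in[0,1)} f'(v)/f(v)$, which is finite since $f$ is $C^1$ and strictly positive on $[0,1]$. Your added care about $u'>0$ on $[0,1)$ and the left-derivative convention at $v=1$ is consistent with, and slightly more explicit than, the paper's treatment.
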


It should be noted that (regardless of Proposer's risk aversion) optimality of full delegation does require our maintained assumption of $\underline v\leq 0$. Were Vetoer's lowest type $\underline v \in (0,1)$, then full delegation---or even the interval $[\underline v,1]$---would never be an optimal delegation set: it would be strictly worse than $[\min\{2\underline v,1\},1]$.

Readers familiar with \citet{AM:08} may find it helpful to draw a connection between that paper and our \autoref{prop:full}. If we had restricted attention to deterministic mechanisms and assumed that Proposer's utility is a quadratic loss function, then the sufficiency result in \autoref{prop:full} would follow from a result in \cite{AM:08}, even though their model does not have a veto constraint and, as such, highlights expertise-based delegation. To make the connection, we observe that when $u(a)=-(1-a)^2$, condition \eqref{e:fulldelegation} is equivalent to \citeauthorpos{AM:08} ``backward bias'' (p.~264) being convex on $[0,1]$. Their Proposition 2 then implies that if $\{0,1\}$ is contained in the optimal delegation set, then the interval $[0,1]$ is contained in the optimal delegation set. But recall from \autoref{lem:largestset} that in choosing among delegation sets, Proposer need not offer any action outside $[0,1]$ and can offer his ideal point $1$; moreover, he may as well offer the status quo $0$. It follows that full delegation is an optimal delegation set. We emphasize, therefore, that \autoref{prop:full} establishes optimality among more general Proposer preferences and stochastic mechanisms.\footnote{In a model without an outside option, \cite{KM:09} provide sufficient conditions for certain delegation sets to be optimal when stochastic mechanisms are allowed and Proposer has a quadratic loss function.}

Consider now necessity in \autoref{prop:full}. If Proposer has a linear loss utility, then our preceding discussion explains why a delegation set $A$ containing $[\underline a,\overline a]\subseteq [0,1]$ should be pruned to $A\setminus (\underline a,\overline a)$ if the type density is increasing on this interval: the expected action increases. Hence, the converse of \autoref{c:decr} holds for linear loss utility. For quadratic loss utility (and with additional smoothness assumptions), \citepos{AM:08} Proposition 2 implies that $f(v)$ need not be decreasing on $[0,1]$ for full delegation to be optimal, but a weaker condition is necessary: $F(v)-(1-v)f(v)$ must be increasing on $[0,1]$. 
\autoref{prop:full} subsumes these two cases by deducing necessity of condition \eqref{e:fulldelegation} for the linear-quadratic family of utilities (\autoref{LQ}).\footnote{Following our general methodology discussed in \autoref{sec:method}, our proof of necessity uses the availability of stochastic mechanisms. But we can establish that under \autoref{LQ}, \eqref{e:fulldelegation} is necessary even for full delegation to be an optimal delegation set.}

\subsection{No Compromise}
\label{s:nocompromise}

The other extreme from full delegation is \emph{no compromise}: Proposer makes a take-it-or-leave-it offer of his own ideal action, not offering any other action.  Of course, Vetoer can choose the status quo as well. When Proposer has a linear loss utility---or, \emph{a fortiori}, if we had permitted $u(a)$ to be convex on $[0, 1]$---then no compromise is an optimal delegation set whenever the type density $f$ is increasing on $[0,1]$.  This follows from reversing the previous subsection's second-order stochastic dominance argument for optimality of full delegation when $f$ is decreasing. But neither is linear loss utility nor convexity of $F$ on $[0,1]$ required for optimality of no compromise.

\begin{proposition}\label{prop:no_compromise}
Assume \autoref{LQ}. No compromise is optimal if and only if 
\[
(u'(1)+\kappa (1-t))\frac{F(t)- F(1/2)}{t-1/2} \geq (u'(0)-\kappa s)\frac{F(1/2)-F(s)}{1/2- s} \text{ for all $1\geq t>1/2>s\geq 0$}.
\]
\end{proposition}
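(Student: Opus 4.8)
The plan is to treat \eqref{e:relaxed2} by infinite-dimensional Lagrangian duality, following the paper's methodology, after a reduction to $[0,1]$. Invoking the reduction (outset of \autoref{s:proofs}) I take the type and action spaces to be $[0,1]$, so that under \autoref{LQ} one has $u(a)=-1+u'(0)\,a-\tfrac{\kappa}{2}a^2$ on $[0,1]$ with $u'(0)=1+\gamma$, $\kappa=2\gamma$, $u'(v)=u'(0)-\kappa v$; the no-compromise mechanism is $m^\ast$ with $m^\ast(v)=\delta_0$ for $v<1/2$ and $m^\ast(v)=\delta_1$ for $v\ge 1/2$, and it is feasible. The key preliminary step is to use the envelope identity \eqref{eq:stochastic_envelope} to substitute $\E_{m(v)}[a^2]=2v\,q(v)-2\int_0^v q(s)\,\mathrm ds$ for $q(v):=\E_{m(v)}[a]$. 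Under \autoref{LQ} this turns Proposer's objective into a \emph{linear} functional of $q$ alone: up to a constant it equals $\int_0^1 w(v)\,q(v)\,\mathrm dv$ with $w(v):=u'(v)f(v)+\kappa\bigl(1-F(v)\bigr)$, equivalently $\int_0^1 W(v)\,\mathrm dq(v)$ with $W(v):=\int_v^1 w$. Hence \eqref{e:relaxed2} reduces to maximizing this over increasing $q:[0,1]\to[0,1]$ with $q(0)=0$ subject to the two envelope-induced constraints $q(v)^2\le 2v q(v)-2\int_0^v q\le q(v)$ (nonnegative variance; support in $[0,1]$); one checks directly that any $q$ realizing such a pair of moment functions is implementable. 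No compromise is the schedule with $\mathrm dq=\delta_{1/2}$, and the second-moment constraints force any step schedule $q=\mathbf 1\{v>v_0\}$ with a jump strictly inside $(0,1)$ to have $v_0=1/2$ --- which already isolates the value $1/2$.

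For sufficiency I would construct a multiplier for the equality constraint \eqref{eq:stochastic_envelope}. Dualizing that family and using Fubini to eliminate the $\int_0^v$ term, the Lagrangian becomes $\int_0^1\E_{m(v)}[\psi(a,v)]\,\mathrm dv$ with $\psi(\cdot,v)$ quadratic in $a$ under \autoref{LQ}; for every feasible $m$ it coincides with the objective, so it suffices to choose the multiplier so that $m^\ast$ maximizes the Lagrangian over \emph{all} mechanisms. Since $m^\ast(v)$ is supported at a corner of $[0,1]$ for each $v$, I need not have $\psi(\cdot,v)$ concave: it is enough that $a=0$ is the global maximizer of $\psi(\cdot,v)$ on $[0,1]$ for $v<1/2$ and $a=1$ is the global maximizer for $v>1/2$. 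Writing the multiplier's tail as $G$ and density $g=-G'$, these corner conditions reduce (in the concave regime) to $G(v)\ge u'(0)f(v)+v g(v)$ on $(0,1/2)$ and $G(v)\le u'(1)f(v)-(1-v)g(v)$ on $(1/2,1)$, with $G(1)=0$; the plan is to solve this system so that it is feasible exactly when the displayed inequality holds. Here it helps to note that $u'(1/2)=1$ under \autoref{LQ} and that the displayed inequality is equivalent to the pair ``$u'(t)\tfrac{F(t)-F(1/2)}{t-1/2}\ge f(1/2)$ for all $t\in(1/2,1]$'' and ``$u'(s)\tfrac{F(1/2)-F(s)}{1/2-s}\le f(1/2)$ for all $s\in[0,1/2)$'': both ratios tend to $f(1/2)$ as the argument tends to $1/2$, so any constant separating the two families is forced to equal $f(1/2)$, which is the ``price'' the multiplier carries.

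For necessity under \autoref{LQ} I would show that if the displayed inequality fails at some $s<1/2<t$, a feasible mechanism strictly beats $m^\ast$. On the diagonal $t=s+1/2$ a deterministic menu $\{0,\,2s,\,1\}$ works: relative to $m^\ast$, types in $(s,1/2)$ move from $0$ up to $2s$ and types in $(1/2,\,s+1/2)$ move from $1$ down to $2s$, and using the identity $u(b)-u(a)=(b-a)\,u'\!\bigl(\tfrac{a+b}{2}\bigr)$ (valid since $u'$ is affine on $[0,1]$) the net change is positive precisely when the inequality fails at $(s,s+1/2)$. For a general pair one needs a \emph{stochastic} perturbation: keep $m^\ast$ but hand a tuned lottery --- with two free parameters, its mean and its second moment --- to a block of middle types, chosen so that the two sorting cutoffs land independently at $s$ and $t$; the improvement again collapses to the negation of the displayed inequality. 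Equivalently, necessity follows from strong duality for the reduced convex program: optimality of $m^\ast$ forces a feasible dual price, whose existence is exactly the displayed condition.

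The step I expect to be the real obstacle is the duality itself. The reduced problem carries coupled constraints --- monotonicity of $q$ together with the two pointwise second-moment inequalities --- and under $m^\ast$ \emph{both} moment constraints bind at every type, so the dual certificate is delicate: it is plausibly not an absolutely continuous multiplier but one carrying an atom at $1/2$ (or a weak limit of such), and one must verify it genuinely certifies optimality (a Slater-type qualification and strong duality) rather than merely satisfying formal first-order conditions. On the necessity side the parallel difficulty is producing, for an arbitrary off-diagonal $(s,t)$, a stochastic perturbation that is incentive compatible, individually rational, and feasible, and whose gain reduces exactly to the failure of the displayed inequality.
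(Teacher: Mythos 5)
Your reduction is essentially the paper's own: under \autoref{LQ}, substituting the envelope identity turns the objective into the linear functional $\int_0^1\bigl[u'(v)f(v)+\kappa(1-F(v))\bigr]q(v)\,\mathrm dv$, which is exactly the relaxed problem \eqref{e:relaxed}, and your diagonal necessity argument (the menu $\{0,2s,1\}$ with the identity $u(b)-u(a)=(b-a)u'(\tfrac{a+b}{2})$) is correct and gives the displayed inequality precisely on $t=s+1/2$. But there are three genuine gaps. First, the upper moment constraint $2vq(v)-2\int_0^v q\le q(v)$ is \emph{not} implied by feasibility: it encodes support in $[0,1]$, whereas feasible mechanisms may use lotteries with actions outside $[0,1]$ (indeed \autoref{eg:stochastic} relies on a negative action, and the construction in \autoref{l:necessity_linear} sends a support point to $-\infty$). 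Imposing it shrinks the comparison class for sufficiency and excludes exactly the high-variance lotteries that matter; the correct reduced feasible set keeps only $q(v)^2\le 2vq(v)-2\int_0^v q$, which is \eqref{e:relaxed}'s constraint. Second, the sufficiency argument is not carried out: you write down the corner conditions the multiplier must satisfy and assert the system is solvable iff the displayed inequality holds, but that construction is the heart of the proof (the paper takes $\Lambda$ constant at $\kappa F(1/2)-\psi$ on $[0,1)$ with $\psi=\inf_t(u'(1)+\kappa(1-t))\tfrac{F(t)-F(1/2)}{t-1/2}$ and verifies the Gateaux inequality). Moreover, your claim that $\psi(\cdot,v)$ need not be concave because $m^*(v)$ sits at a corner does not survive dualizing over \emph{all} stochastic mechanisms: if $\psi(\cdot,v)$ is convex and unbounded above on $\R$, the Lagrangian supremum is $+\infty$. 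This is precisely why the paper maximizes the Lagrangian over monotone deterministic $\alpha$ only and handles lotteries separately via the Jensen argument in \autoref{lem:relaxed-stochastic}.

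Third, the off-diagonal necessity construction fails as described. Tuning a single lottery so that its acceptance cutoffs land at $s$ and $t$ forces mean $\mu=\tfrac{t-1/2}{t-s}$ and variance $\sigma^2=\mu(2s-\mu)$, which is negative whenever $\mu>2s$ (e.g.\ $s=0.05$, $t=0.95$). So for general pairs $(s,t)$ no such lottery exists, and the improvement cannot be exhibited this way. The paper instead perturbs \emph{within} the relaxed problem: the deterministic, non-IC direction $\overline\alpha_\varepsilon$ (raising types in $[s,1/2)$ to $\varepsilon$ and lowering types in $[1/2,t)$ by $\tfrac{1/2-s}{t-1/2}\varepsilon$) has Gateaux derivative proportional to the negation of the displayed inequality, and \autoref{l:necessity_linear} converts any such improvement in \eqref{e:relaxed} into a feasible stochastic mechanism for \eqref{e:relaxed2} by using variance as a substitute for money burning. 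You would need that two-step route (or an equivalent strong-duality theorem, which you correctly flag as unproven) to close the necessity direction.
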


Under linear loss utility (so $u'(1)=u'(0)=1$ and $\kappa=0$), the condition in \autoref{prop:no_compromise} simplifies to $f(1/2)$ being a subgradient of $F$ at $1/2$ on the domain $[0,1]$. This subgradient condition is weaker than $F$ being convex on $[0,1]$.

\begin{remark}
\label{rem:stochastic}
With linear loss utility, no compromise can be an optimal delegation set (i.e., deterministic mechanism) even if the subgradient condition does not hold. However, there will then be a stochastic mechanism that Proposer strictly prefers. This situation can arise, for example, when the type density is strictly increasing except on a small interval around $1/2$, where it is strictly decreasing. Intuitively, Proposer would like to delegate a small set of actions around $1/2$ to types close to $1/2$, but adding such actions to the no-compromise delegation set is deleterious because it leads to many types above $1/2$ choosing an action close to $1/2$ rather than $1$. By contrast, lotteries with expected value $1/2$ can be used to attract only types close to $1/2$. \autoref{eg:stochastic} in \appendixref{s:stochastic} elaborates. \marker
\end{remark}

We also note that no compromise can be an optimal delegation set even when $u$ does not satisfy \autoref{LQ}. In particular, it can be shown that if no compromise is an optimal delegation set for some $u$, then it is also an optimal delegation set for any utility function that is a convex transformation of $u$.

On the other hand, no compromise is not optimal---not even an optimal delegation set---if Proposer's utility is differentiable at his ideal point $a=1$ (which implies $u'(1)=0$).\footnote{Note that when $u'(1)=0$, the condition in \autoref{prop:no_compromise} fails: its left-hand side is $0$ when $t=1$, while its right-hand side is strictly positive when $s=0$.} The reason is that when $u'(1)=0$, Proposer would strictly benefit from offering a small interval $[1-\epsilon,1]$, or even just the action $1-\epsilon$, instead of only offering action $1$. For, Proposer's decrease in utility from getting an action slightly lower than $1$ is second order, but there is a first-order increase in the probability of avoiding a veto.

\subsection{Interval Delegation}
\label{s:interval}

Both full delegation and no compromise are special cases of \emph{interval delegation}: Proposer offers an interval, and Vetoer chooses an action from either that interval or the status quo. 
It follows from \autoref{lem:largestset} that when interval delegation is optimal, there is always an optimal interval of the form $[c, 1]$ for some $c \in [0, 1]$.  One can thus interpret interval delegation as Proposer designating a minimally acceptable option; implicitly, the maximal acceptable option is Proposer's ideal point. Interval delegation, without a status quo, has been a central focus of the prior literature: intervals are simple, tractable, and lend themselves to comparative statics.  Arguably, intervals also map more naturally into proposals likely to emerge in applications.

\begin{proposition}
\label{prop:interval}
The interval delegation set $[c^*,1]$ with $c^*\in [0,1]$ is optimal if
\vspace{-6pt}
\begin{enumerate}[label=(\roman*)]
\item \label{interval1} $\kappa F(v)- u'(v) f(v)$ is increasing on $[c^*,1]$;
\item \label{interval2} $\left(u'(c^*)+\kappa(c^*-t) \right) \frac{F(t)-F(c^*/2)}{t-c^*/2}\geq u'(c^*)\frac{F(c^*)-F(c^*/2)}{c^*/2}$ for all $t\in (c^*/2,c^*]$;
  and
\item \label{interval3} $u'(c^*)\frac{F(c^*)-F(c^*/2)}{c^*/2}\geq \left(u'(0)-\kappa s \right) \frac{F(c^*/2)-F(s)}{c^*/2-s}$ for all $s\in [0,c^*/2)$.
\end{enumerate}
Conversely, under \autoref{LQ}, the delegation set $[c^*,1]$ with $c^*\in (0,1)$  is optimal only if conditions \ref{interval1}, \ref{interval2}, and \ref{interval3} above hold.
\end{proposition}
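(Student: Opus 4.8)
The proof will proceed by reformulating Proposer's problem \eqref{e:relaxed2} as an infinite-dimensional linear program and constructing Lagrange multipliers for the envelope constraint \eqref{eq:stochastic_envelope}, following the methodology of \citet{AAW:06} and \citet{AB:13} adapted to stochastic mechanisms. The candidate optimum is the mechanism $\alpha_{[c^*,1]}$ induced by the interval $[c^*,1]$: types $v \le c^*/2$ veto (get action $0$), types $v \in (c^*/2, c^*)$ pool at $c^*$, types $v \in [c^*,1]$ get their ideal action $v$, and types $v > 1$ get $1$. The key object is a multiplier function $\Lambda$ on $[\underline v, \overline v]$ (equivalently, a measure) for the family of constraints \eqref{eq:stochastic_envelope}; the Lagrangian separates across types, so optimality reduces to checking that for each $v$, the distribution $m(v)$ assigned by the candidate maximizes a type-$v$ "virtual" objective of the form $\E_{m(v)}[u(a) + (\text{affine in } a)\cdot(\text{something built from }\Lambda)]$ over all distributions in $M_0(\R)$ subject to $\E_{m(v)}[a]$ being consistent. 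The heart of the argument is to choose $\Lambda$ so that (a) the pointwise maximization is solved by the candidate, and (b) complementary slackness holds. Conditions \ref{interval1}, \ref{interval2}, \ref{interval3} will be exactly what is needed for (a).

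\textbf{Sufficiency.} First I would write down the Lagrangian: integrating \eqref{eq:stochastic_envelope} against a multiplier measure and integrating by parts converts Proposer's objective into $\int \E_{m(v)}[\Phi(a,v)]\,\mathrm dF(v)$ for an appropriate integrand $\Phi$ that is piecewise affine-plus-$u$ in $a$. Since $u$ is concave, for each fixed $v$ the map $\mu \mapsto \E_\mu[\Phi(a,v)]$ over $\mu \in M_0(\R)$ with given mean is maximized either at a Dirac mass (when the relevant "ironed" function is concave at the candidate action) or requires a two-point lottery; the whole point of allowing stochastic mechanisms is that we must rule the latter out. Concretely: for types in $[c^*,1]$ getting their ideal point, I need the integrand, as a function of $a$, to be maximized at $a=v$ — this is a first-order condition pinning down $\Lambda$ on $[c^*,1]$, and the second-order/global condition is precisely that $\kappa F - u'f$ is increasing there, i.e. condition \ref{interval1} (this mirrors the full-delegation argument in \autoref{prop:full}). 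For the pooling types $(c^*/2, c^*)$ and the vetoing types $[0,c^*/2)$, I need to verify that given the multiplier already determined, no type wants to deviate to a lottery or to a different deterministic action; after simplification these comparisons become conditions \ref{interval2} (no pooling type wants to move down toward the veto region — an "upward" chord comparison of $F$) and \ref{interval3} (no vetoing type wants to move up to $c^*$, and the pooling point $c^*$ itself is optimally placed — a "downward" chord comparison). The remaining step is to confirm that $\Lambda$ so constructed is a valid (monotone, right sign) multiplier and that complementary slackness holds where the IR/veto constraint binds; then weak duality delivers optimality against all stochastic mechanisms.

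\textbf{Necessity under \autoref{LQ}.} For the converse I would argue contrapositively: if any one of \ref{interval1}, \ref{interval2}, \ref{interval3} fails, exhibit a (possibly stochastic) perturbation of $[c^*,1]$ that strictly improves Proposer's payoff. If \ref{interval1} fails on a subinterval of $[c^*,1]$, the second-order-stochastic-dominance / interval-removal logic from the discussion after \autoref{prop:full} (see fn.~\ref{fn:SOSD}) shows pruning that subinterval helps — so $[c^*,1]$ is not even an optimal delegation set. If \ref{interval2} or \ref{interval3} fails, the improving deviation is genuinely stochastic near $c^*/2$ or near $c^*$: as in \autoref{rem:stochastic}, one adds small lotteries with mean near $c^*/2$ to peel off only the nearby types without contaminating the behavior of far-away types, and the linear-quadratic structure makes the resulting payoff change sign-definite and computable. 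The main obstacle, and where care is required, is step (a) of sufficiency — verifying the \emph{global} pointwise optimality of the candidate $m(v)$ over \emph{all} of $M_0(\R)$ (not just over deterministic actions or two-point lotteries), since the integrand $\Phi(\cdot,v)$ is only piecewise smooth (kinks at $0$, at $c^*$, and possibly at $1$); handling these kinks and the unbounded action space correctly, and checking that the constructed multiplier is simultaneously compatible with all three regions of types, is the delicate part of the argument.
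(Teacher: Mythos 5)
Your plan is in the right family of techniques (infinite-dimensional Lagrangian duality \`a la Amador--Bagwell), and you correctly identify that conditions \ref{interval1}--\ref{interval3} should emerge from, respectively, the ideal-point region $[c^*,1]$, the pooling region, and the veto region. But the architecture you propose has a genuine gap at exactly the step you flag as ``delicate,'' and that step will not go through as described. You attach a multiplier directly to the equality constraint \eqref{eq:stochastic_envelope} and then try to verify that the candidate Dirac mass maximizes a pointwise virtual objective $\E_\mu[\Phi(a,v)]$ over all of $M_0(\R)$. Since that functional is linear in $\mu$, a Dirac mass can only be optimal if $\Phi(\cdot,v)$ is maximized at the candidate action over all of $\R$ --- and if $\Phi(\cdot,v)$ is anywhere convex (or unbounded above), two-point lotteries or escape to infinity strictly beat the candidate. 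Because the multiplier on an \emph{equality} constraint is a priori unsigned, the quadratic term $\lambda(v)(av-a^2/2)$ it contributes to $\Phi$ need not be concave in $a$, and nothing in your construction forces it to be. The paper resolves precisely this by a decomposition you do not have: it first relaxes the equality to an inequality and adds the fixed penalty $-\kappa[\,v\alpha(v)-\alpha(v)^2/2-\int_0^v\alpha\,]$ to the objective, with $\kappa=\inf_{[0,1)}(-u'')$ chosen so that the penalized integrand $u(a)+\kappa a^2/2+(\text{linear})$ is concave in $a$ by construction; the residual multiplier $\Lambda$ on the inequality is then required to be increasing (nonnegative), and stochastic mechanisms are eliminated \emph{before} any Lagrangian appears, by a Jensen argument (\autoref{lem:relaxed-stochastic}): replace each lottery by its mean, check feasibility in the relaxed problem \eqref{e:relaxed}, and use concavity of the penalized functional to show the averaged deterministic mechanism does weakly better. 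Without the $\kappa$-device (or an equivalent sign restriction on your multiplier), your pointwise maximization over $M_0(\R)$ is not merely technical --- it is where the proof can fail, as \autoref{eg:stochastic} shows that lotteries genuinely beat deterministic menus when the relevant concavity is absent.

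Two further points on necessity. First, your claim that a failure of condition \ref{interval1} can be handled by the interval-pruning/SOSD argument is only airtight for linear loss; for general \autoref{LQ} preferences the paper instead works in the relaxed problem and computes a Gateaux derivative in the direction of a \emph{flattened} allocation (constant on a subinterval), which is not the allocation induced by pruning the delegation set, and the link back to problem \eqref{e:relaxed2} requires \autoref{l:necessity_linear}. Second, that lemma --- showing under \autoref{LQ} that any deterministic solution of \eqref{e:relaxed2} must also solve \eqref{e:relaxed}, via constructing money-burning transfers and then substituting lottery variance for transfers using Vetoer's quadratic utility --- is the load-bearing step that converts your (correct) intuition that the improving deviations for \ref{interval2} and \ref{interval3} are ``genuinely stochastic'' into an actual contradiction; your sketch gestures at the right perturbations but omits the bridge that makes them admissible.
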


We discuss sufficiency. The intuition for condition \ref{interval1} in the proposition is analogous to that discussed after \autoref{prop:full}; it ensures that there is no benefit to not fully delegating the interval $[c^*,1]$ taking as given that Vetoer can choose $c^*$. For linear loss utility, the condition reduces to $F$ being concave on $[c^*,1]$. Linear loss utility is also helpful to interpret the other conditions. Conditions \ref{interval2} and \ref{interval3} then simplify to the requirements that the average density from $c^*/2$ to $c^*$ be simultaneously less than that from $c^*/2$ to $t$ for all $t\in (c^*/2,c^*]$ and greater than that from $s$ to $c^*/2$ for all $s \in [0,c^*/2)$. Equivalently, the average density from $c^*/2$ to $c^*$ equals $f(c^*/2)$ and $f(c^*/2)$ is a subgradient of $F$ at $c^*/2$ on the domain $[0,c^*]$. The subgradient condition is analogous to that discussed after \autoref{prop:no_compromise}. (More generally, conditions \ref{interval2} and \ref{interval3} with $c^*=1$ imply the condition of \autoref{prop:no_compromise}.) The additional requirement ensures that the threshold $c^*$ is an optimal threshold. See \autoref{fig:interval}. 

\begin{figure}
\centering
\begin{tikzpicture}[x=1.5cm,y=1.5cm,
    tangent/.style={
        decoration={
            markings,
            mark=
                at position #1
                with
                {
                    \coordinate (tangent point-\pgfkeysvalueof{/pgf/decoration/mark info/sequence number}) at (0pt,0pt);
                    \coordinate (tangent unit vector-\pgfkeysvalueof{/pgf/decoration/mark info/sequence number}) at (1,0pt);
                    \coordinate (tangent orthogonal unit vector-\pgfkeysvalueof{/pgf/decoration/mark info/sequence number}) at (0pt,1);
                }
        },
        postaction=decorate
    },
    use tangent/.style={
        shift=(tangent point-#1),
        x=(tangent unit vector-#1),
        y=(tangent orthogonal unit vector-#1)
    },
    use tangent/.default=1
]
  \draw[axis,->] (0,0) -- (5.2,0) node[right] {$v$};
  \draw[axis,->] (0,0) -- (0,4.28) node[above] {};

  \draw[thick,tangent=0.412] (0,0) plot[domain=0:5] (\x,{ln(pow(\x+1,2.2))-exp(-pow(\x-2,2))}) node[right] {$F(v)$};
  \draw [blue, dashed, use tangent] (-3,0) -- (4,0);
  \draw [black, dotted] (2.15,0) -- (2.15,1.57);
  \draw (2.15,0) node[below] {${c^*}/{2}$};
  \draw [black, dotted] (4.05,0) -- (4.05,3.5);
  \draw (4.05,0) node[below=.1cm] {$c^*$};
  \draw (0,0) node[below] {$0$};
  \draw (5,0) node[below] {$1$};
  \draw (0,3.9) node[left] {$1$};
\end{tikzpicture}
\caption{Conditions \ref{interval1}---\ref{interval3} of \autoref{prop:interval} for linear loss utility. $F$ is concave on $[c^*,1]$; $f(c^*/2)$ is a subgradient on $[0,c^*]$; and the average density on $[c^*/2,c]$ equals $f(c^*/2)$ because $F(c^*)$ intersects the subgradient.}
\label{fig:interval}
\end{figure}

\begin{remark}\label{rem:unimodal}With linear loss utility, 
interval delegation is optimal when the type distribution is unimodal (i.e., $F$ is first convex and then concave, or equivalently, the density $f$ is single peaked). Either there will be a $c^*\in [0,1]$ satisfying the three conditions of \autoref{prop:interval}, or the condition in \autoref{prop:no_compromise} will be met and no compromise is optimal.\footnote{Let $\mode$ be the (unique and strictly positive, for simplicity) mode of $F$ and let $\Delta(x):=\frac{F(2x)-F(x)}{x}-f(x)$. $F$ being convex-concave implies that letting $c^*/2:= \max\{ x >0 :\Delta(x)= 0\}$, $\Delta(x)\ge 0$ for $x\in(0,c^*/2)$ and $\Delta(x)\le 0$ for $x\in(c^*/2,1]$. Clearly, $c^*/2\le\mode\le c^*$ and hence $f$ is decreasing on $[c^*,1]$. The convex-concave property implies that $f(c^*/2)$ is a subgradient of $F$ at $c^*/2$ on the domain $[0,c^*]$, and if $c^*/2>1/2$ then $f(1/2)$ is a subgradient of $F$ on the domain $[0,1]$.} \marker
\end{remark}

We can extend this observation as follows:

\begin{corollary}
\label{c:LQinterval}
Assume \autoref{LQ}.  Interval delegation is optimal if the type density $f$ is logconcave on $[0,1]$; if, in addition, $f$ is strictly logconcave on $[0,1]$ or Proposer's utility is strictly concave, then there is a unique optimal interval.
\end{corollary}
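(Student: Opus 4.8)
The plan is to invoke \autoref{prop:interval}, with \autoref{prop:full} covering the boundary threshold $c^*=0$ and \autoref{prop:no_compromise} covering $c^*=1$. Under \autoref{LQ}, $u'$ is affine and strictly positive on $[0,1]$ with $u''\equiv -2\gamma$, so $\kappa=2\gamma$, and one has the identities $u'(0)-\kappa s=u'(s)$, $u'(c)+\kappa(c-t)=u'(t)$, and $u(c)-u(0)=c\,u'(c/2)$. Substituting these, \autoref{prop:interval}'s conditions for a threshold $c\in(0,1)$ become: \ref{interval1} $\kappa F-u'f$ is increasing on $[c,1]$; and \ref{interval2}--\ref{interval3} $\iff g_c(y):=u'(y)\bigl(F(y)-F(c/2)\bigr)-K_c(y-c/2)\ge 0$ on $[0,c]$, where $K_c:=u'(c)\bigl(F(c)-F(c/2)\bigr)/(c/2)$. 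One checks $g_c(c/2)=g_c(c)=0$ identically and $g_c'(c/2)=-\Delta(c/2)$, where $\Delta(x):=u'(2x)\bigl(F(2x)-F(x)\bigr)/x-u'(x)f(x)$.

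Two facts organize the argument. First, $(\kappa F-u'f)'(v)=4\gamma f(v)-u'(v)f'(v)$ has the sign of $4\gamma/u'(v)-(\log f)'(v)$; logconcavity makes $(\log f)'$ decreasing while $u'>0$ decreasing makes $4\gamma/u'$ increasing, so this changes sign at most once, from $-$ to $+$. Hence $\kappa F-u'f$ is quasiconvex on $[0,1]$ with a minimizer $c_0\in[0,1]$; moreover each of $g_c$ and the auxiliary functions $y\mapsto u'(y)\bigl(F(y)-F(c_0/2)\bigr)$, $y\mapsto u'(y)\bigl(F(y)-F(c_0)\bigr)$, $y\mapsto u'(y)\bigl(F(y)-F(1/2)\bigr)$ has second derivative $-(\kappa F-u'f)'$ (linear terms contributing nothing), hence is convex on $[0,c_0]$ and concave on $[c_0,1]$. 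Second, a direct computation of Proposer's payoff $W(c)$ from the delegation set $[c,1]$ (types in $(c/2,c)$ choose $c$, types in $[c,1]$ choose $v$, lower types veto, higher types choose $1$), using $u(c)-u(0)=c\,u'(c/2)$, gives $W'(c)=\tfrac{c}{2}\,\Delta(c/2)$; and the tangent line inequality applied to the first two auxiliary functions on $[c_0/2,c_0]$ and on $[c_0,2c_0]$ respectively (the latter when $c_0\le 1/2$) yields $\Delta(c_0/2)\ge 0$ and $\Delta(c_0)\le 0$.

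Now construct $c^*$. If $c_0=0$, then \eqref{e:fulldelegation} holds and full delegation is optimal by \autoref{prop:full}. If $0<c_0\le 1/2$, continuity gives a root $m^*\in[c_0/2,c_0]$ of $\Delta$; put $c^*:=2m^*\in[c_0,2c_0]\subseteq[0,1]$. If $c_0>1/2$ and $\Delta(1/2)\le 0$, then since $\Delta(c_0/2)\ge 0$ there is a root $m^*\in[c_0/2,1/2]$; put $c^*:=2m^*\in[c_0,1]$. In both of these non-boundary cases $c^*/2=m^*\le c_0\le c^*=2m^*$ and $\Delta(c^*/2)=0$, so $g_{c^*}(c^*/2)=g_{c^*}'(c^*/2)=g_{c^*}(c^*)=0$, with $g_{c^*}$ convex on $[0,c_0]$ and concave on $[c_0,c^*]$. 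It follows that $g_{c^*}'\le 0$ on $[0,c^*/2]$ and $\ge 0$ on $[c^*/2,c_0]$, so $g_{c^*}\ge g_{c^*}(c^*/2)=0$ on $[0,c_0]$ (hence $g_{c^*}(c_0)\ge 0$), and on $[c_0,c^*]$ the concave $g_{c^*}$ lies above the chord joining its nonnegative endpoint values; thus $g_{c^*}\ge 0$ on $[0,c^*]$. This gives \ref{interval2}--\ref{interval3}, and \ref{interval1} holds since $c^*\ge c_0$, so $[c^*,1]$ is optimal by \autoref{prop:interval}. In the remaining case $c_0>1/2$, $\Delta(1/2)>0$, no compromise is optimal: rewrite \autoref{prop:no_compromise}'s condition via the LQ identities as $R(t)\ge R(s)$ for all $t\in(1/2,1]$, $s\in[0,1/2)$, where $R(y):=u'(y)\bigl(F(y)-F(1/2)\bigr)/(y-1/2)$; convexity of $y\mapsto u'(y)\bigl(F(y)-F(1/2)\bigr)$ on $[0,c_0]\supseteq[0,1/2]$ gives $R\le R(1/2)$ on $[0,1/2)$ and $R\ge R(1/2)$ on $(1/2,c_0]$, and the convex-then-concave shape together with $\Delta(1/2)=R(1)-R(1/2)>0$ extends $R\ge R(1/2)$ to all of $(1/2,1]$.

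The main obstacle is the bookkeeping in the previous paragraph: selecting a root $c^*/2$ of $\Delta$ that brackets correctly with $c_0$ (so $c^*/2\le c_0\le c^*$) and disposing of the boundary thresholds; the identities $W'(c)=\tfrac{c}{2}\,\Delta(c/2)$ and $g_c''=-(\kappa F-u'f)'$ are precisely what force $\Delta(c_0/2)\ge 0\ge\Delta(c_0)$ (via tangent lines) and then make $g_{c^*}\ge 0$ drop out. For the uniqueness claim: if $f$ is strictly logconcave on $[0,1]$ or $\gamma>0$, then $4\gamma/u'-(\log f)'$ is strictly increasing wherever it is negative, so $c_0$ is the unique minimizer and $g_c$ is strictly convex on $[0,c_0)$ and strictly concave on $(c_0,1]$; feeding this into the necessity direction of \autoref{prop:interval} (valid under \autoref{LQ}) shows \ref{interval1}--\ref{interval3} fail for every threshold other than $c^*$, so the optimal interval is unique.
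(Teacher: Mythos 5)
Your existence argument is correct and is in substance the paper's own route: both reduce everything to the quasiconvexity of $G:=\kappa F-u'f$ under logconcavity, and both verify Propositions \ref{prop:full}--\ref{prop:interval} via the convex-then-concave structure of functions whose second derivative is $-G'$. Your packaging differs usefully in two places: you fold conditions \ref{interval2}--\ref{interval3} into nonnegativity of a single function $g_c$ with a double zero at $c/2$ and a zero at $c$ (the paper instead verifies the two chord-slope inequalities in \eqref{e:inequality} separately via the auxiliary function $D$, whose derivative $(c^*/2-s)G'(s)$ is your $-(s-c^*/2)g_c''(s)$ in disguise), and you construct $c^*$ directly as a root of $\Delta$ by the intermediate value theorem from $\Delta(c_0/2)\ge 0\ge\Delta(\min\{c_0,1/2\})$, whereas the paper takes $c^*\in\arg\max W$ and feeds the first-order condition into \eqref{e:inequality}. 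The identities $W'(c)=\tfrac{c}{2}\Delta(c/2)$, $g_c'(c/2)=-\Delta(c/2)$, and $g_c''=-G'$ all check out, as does the no-compromise case (your terse ``extends'' step is recoverable because the chord-slope function $R$ is increasing-then-decreasing on $(1/2,1]$).

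The uniqueness claim, however, has a genuine gap. You assert that strict convexity on $[0,c_0)$ and strict concavity on $(c_0,1]$ of $g_c$, ``fed into the necessity direction of \autoref{prop:interval},'' show that \ref{interval1}--\ref{interval3} fail for every threshold other than $c^*$. That does not follow from what you have established. For any $c$ with $\Delta(c/2)=0$ one gets $g_c(c/2)=g_c'(c/2)=g_c(c)=0$ with $c/2<c_0<c$; strict convexity forces $g_c>0$ on $[0,c_0]\setminus\{c/2\}$, but on the concave piece $[c_0,c]$ a strictly concave function that is positive at $c_0$ and zero at $c$ can perfectly well stay nonnegative throughout. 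So nothing in your argument precludes two distinct roots of $\Delta$ in $(c_0/2,c_0)$, each satisfying \ref{interval1}--\ref{interval3}. The missing ingredient is exactly what the paper's \autoref{l:qconvex} supplies in its Step 4: a second-order computation showing $W''(c)=\tfrac{c}{4}G'(c^{}/2)-\bigl(G(c)-G(c/2)\bigr)<0$ at any interior critical point (using $G(c/2)=\tfrac{2}{c}\int_{c/2}^{c}G$ at such a point together with strict quasiconvexity of $G$), which makes $W$ strictly quasiconcave and hence gives a unique maximizer. Equivalently, in your notation you would need to show $\Delta'<0$ at every zero of $\Delta$. Without some such step the uniqueness half of the corollary is unproved.
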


Recall that logconcavity is stronger than unimodality, but many familiar distributions have logconcave densities, including the uniform, normal, and exponential distributions \citep{BagnoliBergstrom:05}. The proof of \autoref{c:LQinterval} also establishes that under \autoref{LQ} and logconcavity of the type density on $[0,1]$, the set of optimal interval thresholds is connected: if $[c^*_1,1]$ and $[c^*_2,1]$ are both optimal interval delegation sets, then so is $[c^*,1]$ for all $c^*\in[c^*_1,c^*_2]$. Such multiplicity arises under the uniform distribution and linear loss utility. Either strict logconcavity of the type density or strict concavity of Proposer's utility eliminates multiplicity.

Readers familiar with \citet{AB:13} will note from our discussion after \autoref{prop:interval} that condition \ref{interval1} in the proposition plays the same role as condition (c1) on p.~1550 of that paper. In fact, the two conditions are identical, even though our analysis accommodates stochastic mechanisms that cannot be reduced to their money burning.\footnote{Indeed, we conjecture that our methodology, elaborated in \autoref{sec:method}, can be used to show that \citepos{AB:13} conditions ensure optimality of their delegation sets even among stochastic mechanisms.} Conditions \ref{interval2} and \ref{interval3} of \autoref{prop:interval} don't have analogs in \citeauthorpos{AB:13} work, however, because these concern optimality conditions that turn on our status quo.

It bears highlighting that interval delegation is not always optimal. Consider linear loss utility and a single-dipped density (decreasing then increasing) with a dip at $d\in (0,1)$. We claim the  optimal delegation set is now $[0,x] \union \{y,1\}$ for some $x \in [0,d]$ and $y \in [d,1]$. To see why, notice that if any action $a \in [0,d]$ is included, then since the density is decreasing on $[0,a]$, the average action is higher when there are no gaps among actions in $[0,a]$; recall the discussion around \autoref{prop:full}. So $x$ is the maximum action allowed below the dip, i.e., within $[0,d]$ the delegation set takes the form $[0,x]$. On the other hand, if any action $a\in [d,1]$ is included, then since the density is increasing on $[a,1]$, the average action is higher when all actions $(a,1)$ are excluded; recall the discussion around \autoref{prop:no_compromise}. So $y$ is the minimum action allowed above the dip, i.e., within $[d,1]$ the delegation set takes the form $\{y,1\}$. In fact, because the present scenario simply mirrors that discussed in \autoref{rem:unimodal}, it can be shown that it is without loss of optimality to set $y=1$; but this is not needed for the point that interval delegation can be suboptimal.\footnote{For completeness, we note that if the density is strictly single-dipped with the dip at $d>1/2$, then both full delegation and no compromise are strictly suboptimal, which implies that interval delegation is strictly suboptimal.} Furthermore, similar reasoning implies that for certain more complicated type distributions, with multiple peaks and multiple dips, any optimal delegation set with linear loss utility must include some actions in $(0,1)$ while excluding neighborhoods of both $0$ and $1$.

\subsection{Methodology}
\label{sec:method}

Let us outline the idea behind the proofs of Propositions \ref{prop:full}--\ref{prop:interval}. We use a Lagrangian method, as has proved fruitful in prior work on optimal delegation, notably in \citet{AB:13}. However, the presence of a status quo requires some differences in our approach. In particular, while prior work has largely focussed on optimality of connected delegation sets, our \autoref{prop:no_compromise} and \autoref{prop:interval} are effectively concerned with the optimality of disconnected delegation sets because of the status quo. Moreover, our approach provides a simple way to incorporate stochastic mechanisms, which, as already highlighted, are often not addressed in prior work.

Consider the following relaxed version of the optimization problem \eqref{e:original} for deterministic mechanisms:
\begin{align}
&\max_{\alpha\in \mathcal{A}} \int \left(u(\alpha(v)) - \kappa\left[v \alpha(v) -\frac{\alpha(v)^2}{2} - \int_0^{v} \alpha(s) \mathrm ds\right]\right) \mathrm dF(v)\nonumber \tag{R} \label{e:relaxed}\\
&\text{s.t. } v \alpha(v) - \frac{\alpha(v)^2}{2} - \int_0^{v} \alpha(s) \mathrm ds \ge 0 \hspace{.5cm} \forall v\in[\underline v,\overline v]\notag.
\end{align}

Problem \eqref{e:relaxed} is well-behaved because the constraint set is convex and, owing to 
$\kappa\equiv \inf_{a\in [0,1)} -u^{\prime \prime} (a)$, the objective is a concave functional of $\alpha$.
It differs from \eqref{e:original} in two ways. First, the constraint has been relaxed: IC requires the inequality to hold with equality. Second, the objective has been modified to incorporate a penalty for violating IC. Plainly, if $\alpha^*$ is IC and a solution to problem \eqref{e:relaxed}, then it is also a solution to \eqref{e:original}. But we establish (see \autoref{lem:relaxed-stochastic} in \appendixref{s:proofs}) that in this case $\alpha^*$ is also a solution to problem \eqref{e:relaxed2}, i.e., it is optimal among stochastic mechanisms. The idea is as follows. If there were a stochastic mechanism that is strictly better than $\alpha^*$ in problem \eqref{e:relaxed2}, consider the corresponding deterministic mechanism that replaces each lottery by its expected outcome. While this mechanism would not be IC in general, we show that it would both be feasible for problem \eqref{e:relaxed} and would obtain a strictly higher objective value, contradicting the optimality of $\alpha^*$ in \eqref{e:relaxed}. Establishing a higher value relies on the objective in \eqref{e:relaxed} being a concave functional.

The sufficiency results in Propositions \ref{prop:full}--\ref{prop:interval} then obtain from identifying sufficient conditions under which the respective IC mechanisms solve \eqref{e:relaxed}. It is here that our approach involves constructing suitable Lagrange multipliers.

For the necessity results, we first establish in \autoref{l:necessity_linear} in  \appendixref{s:proofs} that under \autoref{LQ}, if a deterministic mechanism $\alpha^*$ solves problem \eqref{e:relaxed2} then it also solves problem \eqref{e:relaxed}. The idea is as follows. Suppose a solution $\alpha$ to problem \eqref{e:relaxed} provides a strictly higher value than $\alpha^*$. We construct a corresponding IC mechanism $m$ such that $\alpha(v)=\E_{m(v)}[a]$ for all $v$. Roughly, monotonicity of $\alpha$ (by definition of the set $\mathcal A$) implies existence of transfers that make $\alpha$ IC in a quasi-linear model; the inequality constraints in \eqref{e:relaxed} mean the transfers can be chosen to be positive (i.e., they can be viewed as money burning); and, because of Vetoer's quadratic utility, positive transfers can be substituted for by the action variance of suitable lotteries. Since $u''(a)=-\kappa$ for $a<1$ under \autoref{LQ}, the condition makes the objective in \eqref{e:relaxed} a linear functional in the relevant domain. We can thus show that mechanism $m$ obtains a strictly higher value than $\alpha^*$ in \eqref{e:relaxed2}, a contradiction.  

We then establish necessity of the conditions in Propositions \ref{prop:full}--\ref{prop:interval} by showing that, unless these conditions are satisfied, the corresponding mechanisms can be strictly improved upon in problem \eqref{e:relaxed}. Here we use the fact that the constraint set in \eqref{e:relaxed} is convex and, therefore, first-order conditions must hold at a solution. More specifically, the (Gateaux) derivative of the objective in the direction of any feasible mechanism must be negative.

\section{Comparative Statics and Comparisons}
\label{s:statics} 

\subsection{Comparative Statics}

We derive two comparative statics, restricting attention to interval delegation. This focus can be justified by implicitly assuming conditions for optimality of interval delegation (\autoref{s:main}), or just because such menus are simple,  tractable, or relevant for applications. 

If Proposer proposes $A=[c, 1]$ with $c\in [0,1]$, then Vetoer chooses $0$ if $v<c/2$, $c$ if $v \in [c/2, c]$, $v$ if $v\in [c, 1]$, and $1$ if $v>1$.  Hence Proposer's expected utility or welfare from $A=[c, 1]$ is
$$W(c):=u(0)F(c/2)+u(c)(F(c)-F(c/2))+\int_{c}^{1}u(v) f(v)\mathrm dv+u(1)(1-F(1)).$$

Differentiating, the first-order condition for $c^*\in (0,1)$ to be an optimal threshold among interval delegation sets is that it must be a zero of
\begin{equation}
\label{e:opt_del}
{2u'(c^*)}\left[{F(c^*)-F(c^*/2)}\right]-{f(c^*/2)}\left[{u(c^*)-u(0)}\right].
\end{equation}

In general there can be multiple optimal thresholds, even among interior thresholds. Accordingly, let the set of optimal thresholds for interval delegation be
\[
C^{*}:=\arg\max_{c\in [0,1]} W(c).
\]

We use the strong set order to state comparative statics. Recall that for $X,Y\subseteq \Reals$, $X$ is larger than $Y$ in the strong set order, denoted $X\geq_{SSO}Y$, 
if for any $x\in X$ and $y\in Y$, $\min\{x,y\}\in Y$ and $\max\{x,y\}\in X$. We say that $C^*$ increases (resp., decreases) if it gets larger (resp., smaller) in the strong set order. Since interval delegation with a lower threshold gives Vetoer a superset of options to choose from, a decrease in $C^*$ corresponds to offering more discretion. It can also be interpreted as Proposer compromising more. As mentioned after \autoref{c:LQinterval}, under \autoref{LQ} and a logconcave type density, $C^*$ is a (closed) interval. In that case a decrease in $C^*$ is equivalent to a decrease in both $\min C^*$ and $\max C^*$. 

Our comparative statics concern Proposer's risk aversion and the ex-ante preference alignment between Proposer and Vetoer. We say that Proposer becomes \emph{strictly more risk averse} if the Arrow-Prat coefficient of absolute risk aversion strictly increases in the relevant region: $-u''(a)/u'(a)$ strictly increases for all $a\in [0,1)$. As is well known, such a change can also be expressed in terms of concave transformations of Proposer's utility. Under \autoref{LQ}, it corresponds to a higher weight on the quadratic term. We say that the two players are \emph{strictly more aligned} if Vetoer's ideal-point density changes from $f$ to $g$ with $g$ strict likelihood ratio dominating $f$ on the interval $[0,1]$: for all $0\leq v_L<v_H\leq 1$, $f(v_H)/f(v_L)<g(v_H)/g(v_L)$.

\begin{proposition}
\label{prop:compstats}
Among interval delegation sets, there is:
\vspace{-8pt}
\begin{enumerate}[label=(\roman*)]
	\item \label{p:risk}more discretion (i.e., $C^*$ decreases) if Proposer becomes strictly more risk averse; and
\item \label{p:LR} less discretion (i.e., $C^*$ increases) if Vetoer becomes strictly more aligned with Proposer.
\end{enumerate}
\end{proposition}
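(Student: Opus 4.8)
\ The plan is to recast each statement as a Milgrom--Shannon single-crossing property of $W(\cdot)$ in the threshold $c$ against the relevant parameter, from which the strong-set-order conclusion follows by the standard argument. The first step is a convenient formula for $W(c')-W(c)$ when $0\le c<c'\le 1$. Using the description of Vetoer's choice from $[c,1]$ recorded just before \eqref{e:opt_del} (type $v$ gets $0$ if $v<c/2$, $c$ if $v\in[c/2,c]$, $v$ if $v\in[c,1]$, and $1$ if $v>1$), raising the threshold from $c$ to $c'$ affects only types $v\in[c/2,c')$: those in $[c/2,c'/2)$ lose (their action falls to $0$), those in $[c'/2,c')$ gain (their action rises to $c'$). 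Writing $\xi_c(v)$ for the action type $v$ takes from $[c,1]$, this gives
\[
W(c')-W(c)=\underbrace{\int_{c'/2}^{c'}\!\big(u(c')-u(\xi_c(v))\big)\,\mathrm dF(v)}_{=:B\ \ge\ 0}\ -\ \underbrace{\int_{c/2}^{c'/2}\!\big(u(\xi_c(v))-u(0)\big)\,\mathrm dF(v)}_{=:A\ \ge\ 0},
\]
with $A>0$ whenever $c<c'$ (because $f>0$ on $[0,1]$, $\xi_c(v)\ge c\ge 0$, and $u$ is strictly increasing on $[0,1]$). The crucial structural features are that the ``loss'' region $[c/2,c'/2)$ lies to the left of the ``gain'' region $[c'/2,c')$, and that the loss integrand involves increments of $u$ over low intervals $[u(0),u(\xi_c(v))]$ whereas the gain integrand involves increments over high intervals $[u(\xi_c(v)),u(c')]$.

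For part \ref{p:LR}, replace $f$ by $g$ with $\ell:=g/f$ strictly increasing on $[0,1]$, and let $A_g,B_g$ (resp. $A_f,B_f$) denote the two integrals above computed under $g$ (resp. $f$). Since every point of the gain region is $\ge c'/2$ and every point of the loss region is $\le c'/2$, nonnegativity of the integrands yields $B_g\ge\ell(c'/2)B_f$ and $A_g\le\ell(c'/2)A_f$, hence $W_g(c')-W_g(c)\ge\ell(c'/2)\big[W_f(c')-W_f(c)\big]$ with $\ell(c'/2)>0$; strict monotonicity of $\ell$ together with $A_f>0$ strengthens the second inequality to $A_g<\ell(c'/2)A_f$, so that $W_f(c')-W_f(c)\ge 0$ implies $W_g(c')-W_g(c)>0$ for all $c<c'$. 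This (strict) single crossing gives $C^*$ larger under $g$: if $c_g\in\argmax W_g$, $c_f\in\argmax W_f$ and $c_g<c_f$, then applying the implication with $c=c_g$, $c'=c_f$ yields $W_g(c_f)>W_g(c_g)$, contradicting optimality of $c_g$; hence $c_g\ge c_f$ always.

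For part \ref{p:risk}, write the more risk-averse utility as $\widehat u=\phi\circ u$ with $\phi$ increasing and strictly concave on the range of $u$ over $[0,1]$ (the usual reformulation of a strictly larger Arrow--Pratt coefficient). The key fact is that for concave $\phi$ the chord over an interval $[a_1,a_2]$ is at least as steep as the chord over $[b_1,b_2]$ whenever $a_1\le b_1$ and $a_2\le b_2$ (strictly so under strict concavity if also $a_1<b_1$ or $a_2<b_2$). Since $u(0)\le u(\xi_c(w))$ and $u(\xi_c(v))\le u(c')$ for every loss-region $v$ and gain-region $w$, every chord of $\phi$ over a loss interval $[u(0),u(\xi_c(v))]$ weakly exceeds $\lambda:=\sup_w\big(\text{chord of }\phi\text{ over }[u(\xi_c(w)),u(c')]\big)>0$, which in turn weakly exceeds every gain chord; hence $A_{\widehat u}\ge\lambda A_u$ and $B_{\widehat u}\le\lambda B_u$, so $W_{\widehat u}(c')-W_{\widehat u}(c)\le\lambda\big[W_u(c')-W_u(c)\big]$, strictly using strict concavity and $A_u>0$. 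Thus $W_{\widehat u}(c')-W_{\widehat u}(c)>0$ forces $W_u(c')-W_u(c)>0$; arguing as in the previous paragraph with directions reversed gives $C^*$ smaller under $\widehat u$.

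The routine-but-delicate part is the first step: verifying the $B-A$ decomposition uniformly in $0\le c<c'\le 1$ (the cases $c'\le 2c$ and $c'>2c$ differ in who obtains which action but both reduce to ``loss region to the left of gain region''), and carrying the strictness bookkeeping needed to deliver the strict comparative statics claimed. The main conceptual point I expect to be the obstacle is that a pointwise comparison of marginal welfares $W'$ (for instance, of their signs at critical points) does \emph{not} suffice for a strong-set-order conclusion, so one must work with the finite differences $W(c')-W(c)$ and establish single crossing of $W$ itself.
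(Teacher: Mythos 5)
Your proof is correct, and it reaches the same strong-set-order conclusions in the right directions. At the conceptual level it follows the same strategy as the paper --- establish a single-crossing property of the finite differences $W(c')-W(c)$ in the parameter (risk aversion or alignment) and then apply monotone comparative statics --- but the execution is genuinely more elementary and self-contained. The paper works with the induced action distribution $H(a,c)$ for part (i) and the type-contingent payoff $w(c,v)$ for part (ii), observes that the relevant differences $H(\cdot,c_L)-H(\cdot,c_H)$ and $w(c_H,\cdot)-w(c_L,\cdot)$ change sign once, and then invokes two black boxes: Karlin's variation-diminishing property (applied to the ratio $\hat u'/u'$ in one case and $g/f$ in the other) and the Milgrom--Shannon comparative-statics theorem. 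Your $B-A$ decomposition, with the loss region $[c/2,c'/2)$ lying entirely to the left of the gain region $[c'/2,c')$, is exactly the single-sign-change structure the paper exploits, and your bounds $B_g\ge\ell(c'/2)B_f$, $A_g\le\ell(c'/2)A_f$ (resp.\ the chord comparison pivoting on $\lambda$) amount to a direct proof of the one-sign-change instance of the variation-diminishing property; likewise your contradiction argument at the end is the relevant instance of Milgrom--Shannon. What your route buys is transparency about where strictness comes from ($A>0$ for $c<c'$, strict monotonicity of $\ell$, strict concavity of $\phi$) and independence from the cited theorems; what the paper's route buys is brevity and a form that generalizes mechanically to other payoff comparisons. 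Two small points worth tightening: (a) in part (i) you should state the conclusion of your displayed inequality as ``$W_{\widehat u}(c')-W_{\widehat u}(c)\ge 0$ implies $W_u(c')-W_u(c)>0$'' (which your strict inequality does deliver), since the weak-inequality hypothesis is what the optimality of $c_{\widehat u}$ provides in the contradiction step; and (b) the chord over the degenerate interval at $w=c'$ should be handled by restricting the supremum to $w<c'$ or replacing it by the one-sided derivative --- a measure-zero issue that does not affect the integrals. Neither affects the validity of the argument.
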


The proof uses the interval delegation structure and monotone comparative statics under uncertainty, specifically \citepos{Karlin:68} variation diminishing property for single-crossing functions and comparative statics from \citet{MS:94}.

The intuition for part \ref{p:risk} of the proposition is simply that greater risk aversion makes Proposer more concerned about a veto, and hence she compromises more. The intuition for part \ref{p:LR} is that greater ex-ante alignment makes Proposer less concerned about a veto, and hence she compromises less.  
Yet, the precise conditions in the proposition are nuanced.  In particular, the stochastic ordering used in our notion of alignment is important: one can construct examples in which, among interval delegation sets, Proposer optimally gives Vetoer strictly more discretion when there is a right-shift in the type density in the sense of either hazard or reversed-hazard rate (both of which are stronger than first-order stochastic dominance but weaker than a likelihood ratio shift).
Furthermore, absent the focus on interval delegation, it is not necessarily clear how to relate changes in delegation sets with the degree of discretion or compromise.


It is instructive to contrast part \ref{p:LR} of \autoref{prop:compstats} with the expertise-based delegation literature. The broad finding there is that among interval delegation, greater preference similarity in a suitable sense leads to more discretion \citep[][Theorem 3]{Holmstrom:84}. The difference owes to and highlights the distinct rationales for discretion. In those models, the delegator would like to give the agent discretion to benefit from the agent's expertise; the degree of discretion is limited by the extent of preference misalignment. In our setting, on the other hand, the agent is given discretion only because of her veto power; greater ex-ante preference alignment mitigates that concern.

\begin{example}
\label{eg:normal}
Under \autoref{LQ}, the first-order condition for an optimal interval threshold (i.e., expression \eqref{e:opt_del} equals zero) becomes
$$2(1+\gamma-2 \gamma c^*)\left[F(c^*)-F(c^*/2)\right]=c^*(1+\gamma-\gamma c^*) f(c^*/2).$$

Recall from \autoref{c:LQinterval} that, when combined with boundary conditions, there will be a unique solution for any strictly logconcave type density; moreover, the corresponding interval is then an (unrestricted) optimal mechanism. Given uniqueness, the implicit function theorem can be used to affirm the general comparative statics of \autoref{prop:compstats}; moreover, the first-order condition can also be used to compute numerically the optimal interval threshold for standard distributions. \autoref{fig:compstats} illustrates for Normal distributions. The left panel verifies comparative statics already discussed; note that a higher mean $\mu$ is a likelihood ratio right-shift and hence more alignment.\footnote{When $\mu\geq 1$, a higher $\mu$ can be viewed as shifting Vetoer overall further away to the right of Proposer, but what is relevant is the change of the distribution on the interval $[0,1]$.} The right panel shows comparative statics in the variance of the distribution. We see that there is less discretion when the variance is lower, with the optimal threshold converging, as $\sigma \to 0$, to Proposer's optimal offer, $0.9$, to type $\mu=0.45$. \marker	
\end{example}

\begin{figure}
    \centering
    \begin{subfigure}[t]{0.48\textwidth}
        \centering
        \includegraphics[width=\linewidth]{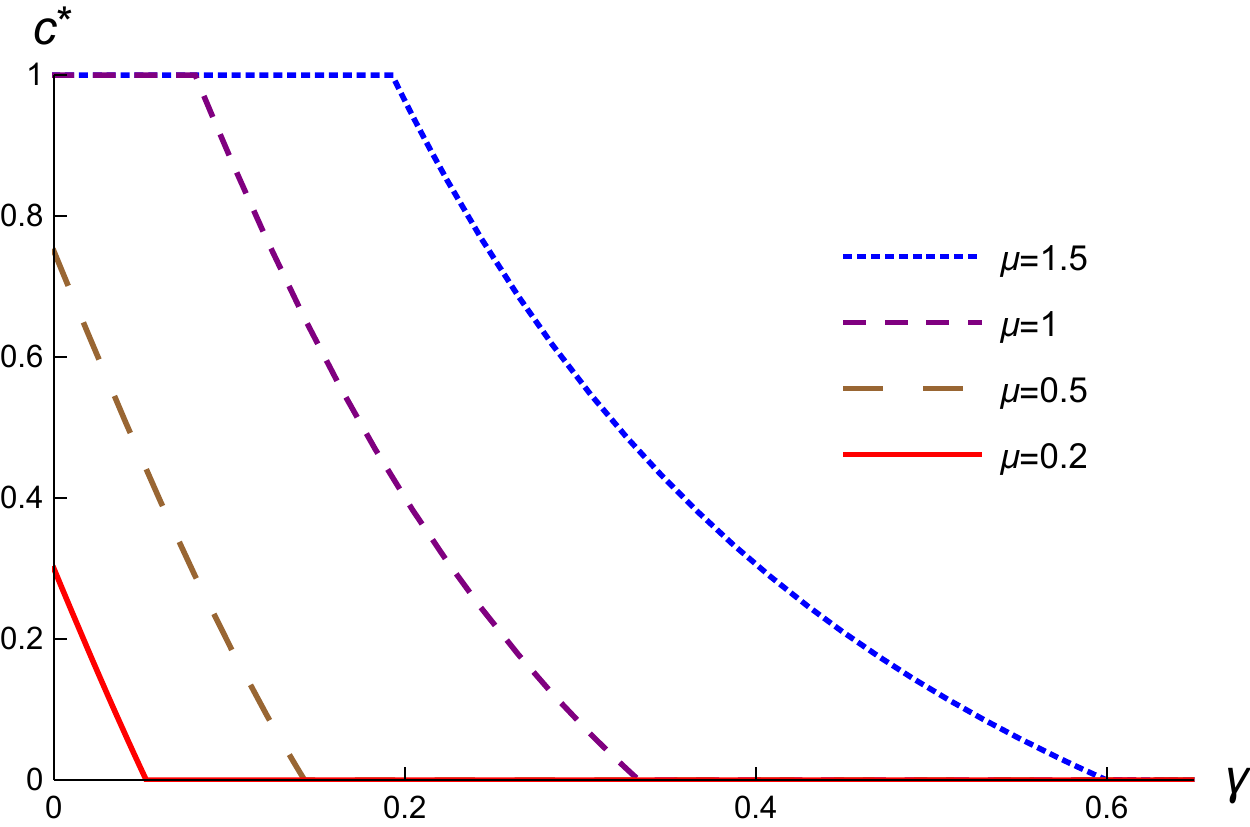} 
        \caption{$\sigma=1$.} \label{fig:statics1}
    \end{subfigure}
    \hfill
    \begin{subfigure}[t]{0.48\textwidth}
        \centering
        \includegraphics[width=\linewidth]{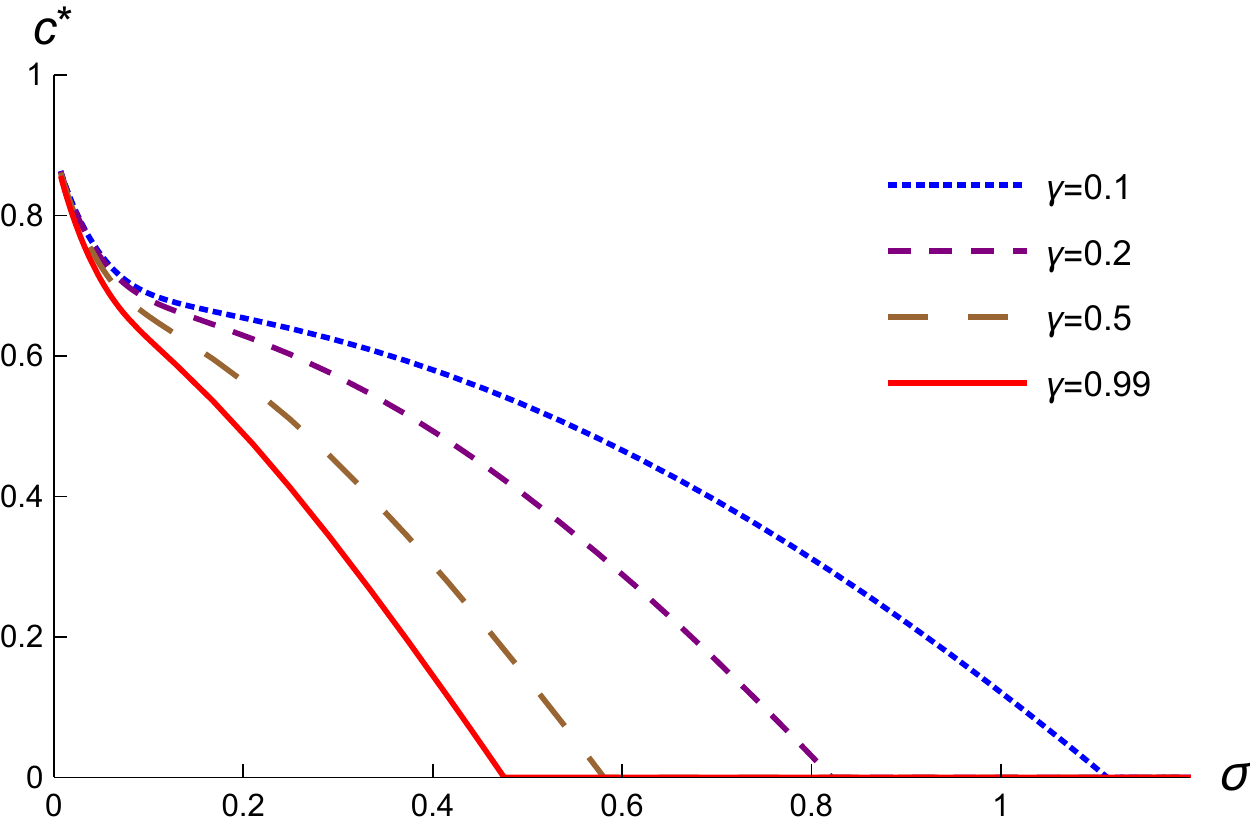} 
        \caption{$\mu=0.45$.} \label{fig:statics2}
    \end{subfigure}
	\caption{Optimal interval thresholds for Normal distributions (mean $\mu$, variance $\sigma^2$) and linear-quadratic Proposer utility, $u(a)=-(1-\gamma)|1-a|-\gamma(1-a)^2$.}
	\label{fig:compstats}
\end{figure}

While we do not have general comparative statics results in the variability of the type distribution, 
it can be shown that for any strictly unimodal distribution with mode $\mode\geq 0$, an optimal interval delegation set has more compromise than if Proposer knew Vetoer's type to be $\mode$. That is, if $c^*$ is an optimal interval threshold, then $c^*\leq \min\{2 \mode,1\}$; the inequality is strict if $\mode\in(0,1/2)$. Adding this kind of uncertainty to the complete-information model of \citet{RR:78} thus increases the extent of Proposer's compromise.

Comparative statics with respect to changes in the status quo are ambiguous. In particular, one may conjecture that increasing the status quo from $0$ towards $1$ would reduce the optimal amount of discretion or extent of compromise (i.e., increase $C^*$, as in \autoref{prop:compstats} \ref{p:LR}). This is not assured, however, even in the simplest case of linear loss utility and a strictly logconcave density. To see that, observe using the subgradient condition discussed after \autoref{prop:no_compromise} that no compromise can be (uniquely) optimal given our status quo of $0$ for suitable logconcave distributions with mode in $(0,1)$.
But if the status quo is raised to some $s$ in between the mode and $1$, then full delegation under the new status quo (i.e., $[s,1]$) becomes optimal. Discretion has increased.

\subsection{Comparisons}
\label{s:discussion}

This subsection compares the outcome of optimal delegation with two game forms considered in earlier work.

A natural starting point is the incomplete-information version of the \citet{RR:78} model. Proposer makes a take-it-or-leave it proposal $a\in \Reals$, which Vetoer can accept or veto.  This can be viewed as restricting Proposer to singleton delegation sets. Clearly, Proposer is strictly worse off in this institution unless no compromise is the optimal mechanism. We assume throughout this subsection that no compromise is not an optimal interval delegation set; as noted in \autoref{s:nocompromise} it is sufficient that Proposer's utility $u(a)$ is differentiable at his ideal point $a=1$ (hence $u'(1)=0$).\footnote{\label{fn:interior_singleton}A weaker condition suffices: $2u'(1)[1-F(1/2)]<f(1/2)[u(1)-u(0)]$. This ensures that $1$ is not an optimal singleton proposal, nor is $\{1\}$ an optimal interval delegation set. Recall that when $u$ is not differentiable at $1$, $u'(1)$ refers to the left derivative.} We will see below that not only does Proposer strictly benefit from optimal delegation, but so does Vetoer under some conditions, even when full delegation is not optimal for Proposer.

\citet{Matthews:89} studies cheap talk before veto bargaining: prior to Proposer making a singleton proposal, Vetoer can send a costless and nonbinding message. As usual in cheap-talk games there is an uninformative and hence noninfluential equilibrium, in which Proposer makes the same proposal, $a_U >0$, as he would absent the possibility of cheap talk. \citeauthor{Matthews:89} provides conditions under which there is also an equilibrium with informative and influential cheap talk; it is sufficient given the support of our type density that $u'(1)=0$ (or that even the weaker condition in fn.~\ref{fn:interior_singleton} holds). A set of low Vetoer types pool on a ``veto threat'' message, while the complementary set of high types pool on an ``acquiescing'' message. In response to the latter message, Proposer offers $a=1$; in response to the veto threat Proposer offers some $a_{I} \in (0,1)$. The former proposal is accepted by all types that acquiesced, while the latter is accepted by only a subset of types that made the veto threat; types below some strictly positive threshold exercise the veto. An influential cheap-talk equilibrium is outcome equivalent to the delegation set $\{a_I,1\}$ in our framework.

There can be multiple cheap-talk equilibria with distinct outcomes, both among influential equilibria and among noninfluential equilibria (i.e., distinct $a_I$ and $a_U$ respectively). \citeauthor{Matthews:89} shows that $a_I<a_U$ in any two equilibria of the respective kinds; moreover, he provides conditions under which $a_I$ is unique, i.e., all influential cheap-talk equilibria have the same outcome \citep[Remark 3]{Matthews:89}. As elaborated in the proof of our \autoref{p:pareto}, multiplicity is ruled out when 
the following function has a unique zero:
\begin{equation}
\label{e:influential}
{2u'(a)}\left[{F((1+a)/2)-F(a/2)}\right]-{f(a/2)}{\left[u(a)-u(0)\right]}.
\end{equation}

By way of comparison, we recall that a zero of a similar function given in \eqref{e:opt_del} is the first-order condition for optimality of an interval delegation set's threshold.

\begin{proposition}\label{p:pareto}
Assume no compromise is not an optimal delegation set, and that either \eqref{e:opt_del} or \eqref{e:influential} is strictly downcrossing on $(0,1)$.\footnote{A function $h(a)$ is strictly downcrossing if for any $a_L<a_H$, $h(a_L)\leq 0\implies h(a_H)<0$.} Any optimal interval delegation set $[c^*, 1]$ has $c^*<\min\{a_I,a_U\}$ for any influential and noninfluential cheap-talk equilibrium $a_I$ and $a_U$, respectively. Hence, if $[c^*,1]$ is an optimal delegation set, then it strongly Pareto dominates any cheap-talk outcome, influential or not.
\end{proposition}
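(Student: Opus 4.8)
The plan is to establish the threshold comparison $c^* < \min\{a_I, a_U\}$ first, and then derive the Pareto dominance claim as a consequence. For the noninfluential proposal $a_U$: by definition $a_U$ is an optimal singleton proposal, so it maximizes $u(a)F((1+a)/2) + \cdots$ wait --- more precisely, $a_U$ maximizes Proposer's payoff among singletons, which means $a_U$ is a zero of the first-order condition for singleton proposals, namely $2u'(a)[1-F(a/2)] - f(a/2)[u(a)-u(0)] = 0$ (Vetoer with ideal point $v$ accepts $a$ iff $v \ge a/2$). For the influential equilibrium $a_I$: as noted in the text, $a_I$ is a zero of \eqref{e:influential}. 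For the optimal interval threshold $c^*$: it is a zero of \eqref{e:opt_del}. The strategy is to compare these three first-order-condition functions pointwise on $(0,1)$ and exploit the hypothesized strict-downcrossing structure of \eqref{e:opt_del} or \eqref{e:influential}.

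The key algebraic observation I would develop: at any point $a \in (0,1)$, since $F((1+a)/2) \le F(1) \le 1$ wait --- I need the right direction. Compare \eqref{e:opt_del} evaluated at $a$ with \eqref{e:influential} evaluated at $a$: they share the same second term $-f(a/2)[u(a)-u(0)]$, and the first terms are $2u'(a)[F(a) - F(a/2)]$ versus $2u'(a)[F((1+a)/2) - F(a/2)]$. Since $u'(a) \ge 0$ on $[0,1]$ and $F(a) \le F((1+a)/2)$, we get that \eqref{e:opt_del}$(a) \le$ \eqref{e:influential}$(a)$ for all $a \in (0,1)$ (strictly where $u'(a)>0$ and $a<1$). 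Similarly, compare \eqref{e:influential}$(a)$ with the singleton first-order condition at $a$: first terms $2u'(a)[F((1+a)/2)-F(a/2)]$ versus $2u'(a)[1 - F(a/2)]$, and since $F((1+a)/2) \le 1$, again \eqref{e:influential}$(a) \le$ (singleton-FOC)$(a)$. So pointwise on $(0,1)$, \eqref{e:opt_del} $\le$ \eqref{e:influential} $\le$ (singleton-FOC).

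Now combine with downcrossing. Suppose \eqref{e:opt_del} is strictly downcrossing. Let $c^*$ be a zero of \eqref{e:opt_del}; we want $c^* < a_I$ and $c^* < a_U$. I would argue: at $a_I$, \eqref{e:influential}$(a_I) = 0$, hence \eqref{e:opt_del}$(a_I) \le 0$; but also I claim \eqref{e:opt_del}$(a_I) < 0$ strictly --- this needs $u'(a_I) > 0$, which follows because $a_I \in (0,1)$ and $u$ concave with peak at $1$ gives $u'(a_I) > 0$ unless $u$ is flat to the left of $1$, a degenerate case I'd handle separately (or note $a_I<1$ and the inequality $F(a_I) < F((1+a_I)/2)$ is strict since $f>0$ on $[0,1]$, combined with $u'(a_I)>0$). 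Then since \eqref{e:opt_del} is strictly downcrossing and \eqref{e:opt_del}$(a_I)<0$ would actually... hmm, downcrossing says once it's $\le 0$ it stays $<0$; so any zero $c^*$ must satisfy $c^* < a_I$ because at $a_I$ the function is already strictly negative and it cannot return to zero. The same argument with $a_U$: (singleton-FOC)$(a_U)=0$, so \eqref{e:influential}$(a_U) \le 0$ hence \eqref{e:opt_del}$(a_U) \le 0$, and with a strictness check \eqref{e:opt_del}$(a_U) < 0$, giving $c^* < a_U$. If instead \eqref{e:influential} is the strictly downcrossing one, a symmetric argument works: \eqref{e:influential}$(a_U) \le 0$ strictly, so any zero $a_I$ of \eqref{e:influential} has $a_I < a_U$ (this also recovers \citeauthor{Matthews:89}'s uniqueness and the $a_I < a_U$ ordering), and separately one shows $c^* < a_I$ using that \eqref{e:opt_del} $\le$ \eqref{e:influential} and an argument that the optimal interval cannot have threshold above the influential cheap-talk proposal --- here I'd lean on the fact that $W(c)$ (the interval-delegation welfare) is related to the cheap-talk payoff and a revealed-preference / single-crossing comparison; I would need to make sure \eqref{e:opt_del} still has a well-defined smallest zero and that $W$ is quasiconcave enough, perhaps invoking \autoref{rem:unimodal}-type structure or just that under the downcrossing hypothesis on \eqref{e:influential} the needed monotone comparison still goes through.

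For the final Pareto claim: once $c^* < \min\{a_I, a_U\}$, compare the delegation set $[c^*,1]$ against the cheap-talk outcomes, which correspond to the sets $\{a_I, 1\}$ (influential) and $\{a_U\}$ (noninfluential). Since $c^* < a_I < 1$, we have $\{a_I,1\} \subseteq [c^*,1] \cup \{0\}$, so every Vetoer type weakly prefers choosing from $[c^*,1]\cup\{0\}$ to choosing from $\{a_I,1\}\cup\{0\}$ --- a larger menu is weakly better for the chooser --- and the preference is strict for a positive-measure set of types (e.g., types near their own ideal point in $(c^*, a_I)$ who now get their bliss action, and types in $(c^*/2, a_I/2)$ who switch from vetoing to a positive action, using $f > 0$). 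So Vetoer is strictly better off. Proposer is strictly better off because $[c^*,1]$ is an optimal delegation set while $\{a_I,1\}$ and $\{a_U\}$ are feasible but, given no compromise is not optimal and $c^* < \min\{a_I,a_U\}$, not optimal --- strictness of Proposer's gain follows from $W(c^*) > W$-value of any singleton or of $\{a_I,1\}$; I'd note $W(c^*) \ge W(\{a_I,1\}) $ with equality impossible since the optimal interval threshold is unique-ish / the FOC comparison shows $\{a_I,1\}$ is not a critical point of the relevant objective. Thus $[c^*,1]$ strongly Pareto dominates every cheap-talk outcome.

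\textbf{Main obstacle.} The crux is the strictness bookkeeping and correctly leveraging the \emph{disjunctive} hypothesis (either \eqref{e:opt_del} \emph{or} \eqref{e:influential} is strictly downcrossing). The pointwise ordering \eqref{e:opt_del} $\le$ \eqref{e:influential} $\le$ (singleton-FOC) is easy; the delicate part is that a downcrossing function, plus knowing it is $\le 0$ at a point, pins down that all its zeros lie to the left of that point --- and I must ensure the inequality at $a_I$ (or $a_U$) is \emph{strict}, which relies on $u'>0$ on $(0,1)$ in the non-degenerate case and on $f>0$ on $[0,1]$; the edge case $u'(a)=0$ for $a$ near $1$ (flat-topped utility) needs a separate remark since then several of these inequalities are only weak. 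A secondary subtlety is handling multiplicity of cheap-talk equilibria: the proposition quantifies over \emph{any} $a_I, a_U$, so I must show the bound $c^* < a_I$ holds for the \emph{largest} such $a_I$ as well, i.e., I should either first establish (via the downcrossing hypothesis) that $a_I$ is in fact unique, or argue the comparison uniformly; I expect the cleanest route is to show the downcrossing hypothesis forces uniqueness of $a_I$ (and of $a_U$ by the same token, and of $c^*$), reducing everything to a clean chain of three strict inequalities.
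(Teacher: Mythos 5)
Your overall strategy is the same as the paper's: order the three first-order-condition functions pointwise on $(0,1)$ (interval-delegation FOC $<$ influential-cheap-talk FOC $<$ singleton FOC, using $F(a)<F\left((1+a)/2\right)<1$ and $u'>0$), then use the strict-downcrossing hypothesis to locate all relevant zeros. Your treatment of the first disjunct (\eqref{e:opt_del} strictly downcrossing) is correct, and the ``flat-topped utility'' edge case you worry about cannot arise: $u$ is concave and \emph{uniquely} maximized at $1$, so $u'(a)>0$ for all $a\in[0,1)$, which gives you the strict inequalities \eqref{e:opt_del}$(a_I)<0$ and \eqref{e:opt_del}$(a_U)<0$ for free (the latter when $a_U<1$; the corner $a_U=1$ needs a one-line separate remark). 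The Pareto-dominance step at the end is also fine, and easier than you make it: the paper's notion only requires Proposer to be \emph{weakly} better off ex ante plus a positive-measure set of Vetoer types strictly better off, so you do not need to rule out $W(c^*)=W(\{a_I,1\})$.

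The genuine gap is the second disjunct, where only \eqref{e:influential} is assumed strictly downcrossing and you leave $c^*<a_I$ unresolved, gesturing at quasiconcavity of $W$ and a revealed-preference comparison. Neither is available under that hypothesis, and neither is needed. The argument closes using exactly the machinery you already have: any interior optimal threshold satisfies \eqref{e:opt_del}$(c^*)=0$, and the strict pointwise ordering then gives \eqref{e:influential}$(c^*)>0$; since \eqref{e:influential}$(a_I)=0$ and \eqref{e:influential} is strictly downcrossing, $a_I\leq c^*$ would force \eqref{e:influential}$(c^*)<0$ (or $=0$), a contradiction, so $c^*<a_I$. The same downcrossing applied to \eqref{e:influential}$(a_U)<0$ (from the singleton FOC when $a_U<1$) gives $a_I<a_U$ and hence $c^*<a_U$; when $a_U=1$ the chain $c^*<a_I<1\leq a_U$ still closes. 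The paper packages this via the inf/sup bounds $\underline a_3<\underline a_2$ and $\overline a_3<\overline a_2$, but the substance is the one-line evaluation of the dominating FOC at $c^*$ that your writeup is missing. A secondary point you raise correctly but should resolve rather than defer: multiplicity of $a_I$ is not an issue because in either case the relevant downcrossing function has \emph{all} its zeros to the right of $c^*$ (first disjunct) or forces $a_I$ to be unique (second disjunct), so the bound holds for every influential equilibrium.
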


By strong Pareto dominance, we mean that Proposer is ex-ante better off, while Vetoer is better off no matter his type; moreover, a set of Vetoer types that have strictly positive probability are strictly better off.
\autoref{p:pareto}'s conclusions hold trivially when full delegation is the optimal delegation set ($c^*=0$), even without its hypotheses. But under its hypotheses, the conclusions also apply to other optimal intervals. The function \eqref{e:opt_del} is strictly downcrossing on $(0,1)$ under \autoref{LQ} and either strict logconcavity of the type density or strict concavity of Proposer's utility. Indeed, this underlies the uniqueness claim in \autoref{c:LQinterval}; see \autoref{l:qconvex}. Moreover, \autoref{c:LQinterval} also assures that interval delegation is then optimal.


Here is the intuition behind \autoref{p:pareto}. Consider Proposer's tradeoff when marginally lowering his proposal $a_I\in (0,1)$ in an influential cheap-talk equilibrium. The benefit is that some types just below $a_I/2$ will accept rather than veto; the cost is that  the action induced from all types in the interval $(a_I/2,(1+a_I)/2)$ is lower. When Proposer instead delegates the interval $[a_I,1]$, the benefit from lowering $a_I$ is unchanged while the cost is reduced, because types above $a_I$ are now unaffected by the change. Proposer is thus more willing to compromise when choosing among interval delegation sets rather than under cheap talk.\footnote{The hypotheses in \autoref{p:pareto} ensure that this local-improvement intuition extends to global optimality.} Consequently, all Vetoer types benefit---at least weakly, and some strictly---from optimal interval delegation as compared to cheap talk. While Proposer could be harmed by a restriction to interval delegation, there is strong Pareto dominance when intervals constitute optimal delegation.

We note that if interval delegation is not optimal, then some Vetoer types may be worse off under optimal delegation than under cheap talk. For example, it is possible that the optimal delegation set takes the form $\{a^*,1\}$ with $a^*\in (0,1)$. In this case one can show that necessarily $a^*<a_I$ in any influential cheap-talk equilibrium; intuitively, while $a_I$ is sequentially rational, committing to a lower proposal helps ex ante by inducing action $1$ rather than $a_I$ from some types. Consequently, while Proposer strictly benefits from optimal delegation, some Vetoer types would strictly prefer either cheap-talk outcome.


\section{Applications}
\label{s:applications}

We now discuss some implications and interpretations of our analysis in the context of three applications. 

\paragraph{Menus of products.} Our framework can be applied to questions of which products to present customers with, albeit in a stylized manner. For an illustration, suppose a salesperson has at his disposal a set of products indexed by $a\in [0,1]$, with higher $a$ corresponding to higher quality. The price of product $a$ is $k a^2$, where $k>0$. This pricing can be interpreted as emerging from a constant markup on a quadratic cost. Consumers vary in how they trade off quality and price; specifically, a consumer of type $v\geq 0$ has gross valuation $v a$, and hence net-of-price payoff $va-k a^2$. 
If a consumer does not purchase, his payoff is $0$; a consumer cannot purchase a product he is not shown (perhaps because of ignorance, or because the salesperson can claim it is unavailable). Observe that we can normalize $k=1/2$, as this simply rescales the consumer type $v$. The salesperson receives a higher commission on better products, reflected by his strictly increasing and concave utility $u(a)$. 
Given any belief density the salesperson holds about a particular consumer's valuation the salesperson's problem of which products to show the consumer is precisely that of determining the optimal delegation set in our setting.

Take the case of a linear $u$. Propositions \ref{prop:full}--\ref{prop:interval} imply that if the density of $v$ is logconcave, it is optimal to show the consumer some set of ``best products'' (i.e., an interval of products $[c,1]$); if the density is strictly decreasing, then all products should be shown; and if the density is strictly increasing, then only the highest-quality product should be shown. \autoref{prop:compstats}\ref{p:risk} implies that if the commission schedule changes to make $u$ more concave, the salesperson shows a larger set of products. \autoref{prop:compstats}\ref{p:LR} implies that if wealthier consumers (or, if wealth is unobservable, some proxy thereof) have a higher distribution of $v$ in the likelihood ratio sense, then wealthier consumers are shown a smaller set of products.

What if a consumer can choose the information to disclose about her type?\footnote{\citet{ALV:19}, \citet{HV19}, and \citet{Ichihashi:20} consider optimal consumer disclosure in models that emphasize price discrimination.} Specifically, suppose, as is standard in voluntary disclosure models, that any type $v$ can send any message (a closed subset of $\Reals_+$) that contains $v$. The salesperson decides on the product menu after observing the message. No matter the type distribution, there are at least two equilibria: one in which no type discloses any information, and one in which all types fully disclose.\footnote{For any unused message, $V\subseteq \Reals_+$, let the salesperson put probability $1$ on $v=\max V$ (or, if $\sup V=\infty$, on some $v\in V$ with $v\geq 1$) and offer the correspondingly optimal singleton menu. It is then straightforward that no consumer type does strictly better by deviating to any unused message.} Every consumer type prefers the former equilibrium to the latter; some types have a strict preference unless nondisclosure results in only a single product being shown (\autoref{prop:no_compromise}). In general there can be other equilibria, some of which may dominate the nondisclosure equilibrium in terms of ex-ante consumer welfare.\footnote{For example, suppose the type density is strictly decreasing on a small interval $[0,\delta]$ and strictly increasing thereafter, and $u$ is linear. Then, under nondisclosure, the salesperson's optimal menu is the singleton $\{1\}$ (\autoref{prop:no_compromise}). There is also a partial-disclosure equilibrium in which types $[0,\delta]$ pool on the message $[0,\delta]$ and all higher types pool on the message $[\delta,\infty)$; the former message leads to the menu $[0,\delta]$ by the full-delegation logic of \autoref{c:decr}, while the latter message leads to the singleton $\{1\}$ by the no-compromise logic of \autoref{prop:no_compromise}. Every consumer type prefers this partial-disclosure equilibrium to the nondisclosure equilibrium, some strictly.}  However, when the salesperson offers all products under the prior (\autoref{prop:full}), the nondisclosure equilibrium is consumer optimal---not only ex ante, but for every consumer type.


\paragraph{Lesser-included offenses.} The legal doctrine of lesser-included offenses in criminal cases is ``the concept that a defendant may be found guilty of an uncharged lesser offense, instead of the offenses formally charged \ldots a recognized and well-established feature of the American criminal justice system'' \citep{Adlestein95}. For instance, ``the lesser-included offenses of first degree murder include second degree murder, voluntary manslaughter, involuntary manslaughter, criminally negligent homicide, and aggravated assault'' \citep{OS08}.
 The framework studied in our paper provides a formal lens to understand welfare implications of the doctrine for both prosecutors and defendants.

Our model views $v$ as a jury's (or judge's) evaluation of the optimal penalty or true severity of a crime. The defendant is delivered a penalty or sentence corresponding to the most severe charge on which there is a conviction. Verdict $0$ corresponds to a complete acquittal, which is always available to the jury, while $1$ is the maximum penalty or the maximum charge the prosecutor can realistically put forward in a given case. We assume the jury will convict the defendant of the closest charge to $v$ that is available.\footnote{So a jury may convict on an excessive charge if a more appropriate one is not available. The U.S. Supreme Court opined in its ruling on Beck v.~Alabama that ``when the evidence establishes that the defendant is guilty of a serious, violent offense but leaves some doubt as to an element justifying conviction of a capital offense, the failure to give the jury such a `third option' [a lesser offense] inevitably enhances the risk of an unwarranted conviction.'' (447 U.S. 625, 1980).} 
 In a stylized manner, the lesser-included offenses doctrine can be modeled as implying that if a charge $a$ is included, then the jury can choose any verdict in $[0,a]$. Plainly, if a prosecutor has the option to put forward any subset of charges---or equivalently, to have the jury selectively instructed about only specific lesser offenses---then the doctrine (or the jury being instructed of it in full) is just a constraint. It necessarily makes the prosecutor worse off ex ante, at least weakly. It follows that if the defendant's utility is the additive inverse of the prosecutor's, then the doctrine can only help the defendant ex ante.

Our analysis clarifies, however, circumstances in which the doctrine does not strictly hurt the prosecutor, or help the defendant, ex ante. Assume the prosecutor's utility $u$ is increasing in the verdict.\footnote{That prosecutors seek to maximize the penalty is a common assumption in law and economics since \citet{Landes71}.} Under the doctrine, the prosecutor then brings the maximum charge. So the prosecutor's ex-ante utility is the same absent the doctrine if and only if full delegation is unconstrained optimal. So long as $u$ is concave, \autoref{prop:full} can be applied; in particular, the doctrine is irrelevant for any prosecutor who is sufficiently risk averse (\autoref{c:risk}). On the other hand, from an ex-post perspective, it is precisely when full delegation is not prosecutor optimal that the prosecutor will strictly benefit and the defendant strictly lose, with positive probability, from the doctrine.


The foregoing discussion assumes the prosecutor can bring any set of charges.  If the prosecutor were restricted to bringing a single charge, then the welfare implications of the lesser-included offenses doctrine are less clear cut. The issue boils down to whether full delegation is ex-ante preferred by the relevant party to the prosecutor's optimal single charge. We observe that now the doctrine ex-ante benefits the prosecutor and hurts the defendant when full delegation is unconstrained optimal, whereas the comparison is reversed when no compromise is unconstrained optimal (\autoref{prop:no_compromise}).

\paragraph{Legislatures and Executives.} 

Legislatures write bills that can be vetoed by executives. But executives do more than just approve or veto: as emphasized by \citet[pp.~378--379]{EOH:96}, ``all laws passed by Congress are implemented by the executive branch in one form or another'', and, since, ``Presidents generally appoint administrators with preferences similar to their own'' the amount of discretion given is a ``key variable in \ldots congressional-executive relations''.  One interpretation of our results is that they predict bills granting the executive \emph{more} discretion when there is greater preference misalignment between the executive and the legislature, in the sense of \autoref{prop:compstats}\ref{p:LR}. This flips the comparative static emphasized in the political science literature \citep{EOH:96, EOH:99}, which stems from expertise-based delegation models.\footnote{For exceptions and caveats, see, for example, \citet{Volden02} and \citet{HM:04}. \citet[p.~112]{Volden02} notes that modeling the executive's veto is important for his finding that ``there are conditions under which discretion is increased upon a divergence in legislative-executive preferences''. The mechanism underlying his findings is different from that in this paper, however; in particular, expertise-based delegation is still essential to his analysis.} Of course, in practice one expects both the expertise-based rationale and our veto-based one to coexist to varying degrees.

There are examples of legislatures apparently providing misaligned executives with veto power greater discretion than the legislature would consider optimal. A case in point is the ``U.S. Troop Readiness, Veterans' Care, Katrina Recovery, and Iraq Accountability Appropriations Act'' (House Resolution 2206) enacted in May 2007 that provided funding for the United States' war in Iraq. Congress passed an earlier version, House Resolution 1591, that set a deadline of April 2008 for U.S. troops to withdraw from Iraq. This suggests that, even after accounting for any expertise-based delegation rationale, Congress preferred a relatively tight deadline. But the bill was vetoed by President George W. Bush. Although Democrats controlled both chambers of Congress, they did not have the requisite supermajority to override the veto. To secure the President's approval, the eventual Act replaced the withdrawal deadline with vague metrics that gave the President more discretion.


We must also stress an alternative perspective on our results: rather than passing bills that grant ex-post discretion, discretion can manifest in the executive effectively selecting which bill (from some subset, none of which grant ex-post discretion) the legislature passes. For example, the President may be consulted by Congress about different versions of legislation. These two forms of discretion are equivalent within our model. An empirical test of our model's predictions in the political arena would have to overcome this challenge, and that of the coexistence of expertise- and veto-based delegation rationales.

\section{Conclusion}
\label{s:conclusion}

We have studied Proposer's optimal mechanism, absent transfers, in a simple model of veto bargaining. Our main results identify sufficient and necessary conditions  for the optimal mechanism to take the form of certain delegation sets, including full delegation, no compromise, and more generally interval delegation. While we have focused on a quadratic loss function for Vetoer, our analytical methodology can be applied to deduce optimality of these delegation sets for a broader class of Vetoer preferences. Specifically, the methods can be readily applied to Vetoer utility functions of the form $v a+b(a)$, for any differentiable and strictly concave function $b$. The conditions in Propositions \ref{prop:full}--\ref{prop:interval} would be more complicated, however. Our methodology could also be used to deduce optimality of other kinds of delegation sets, for example Proposer offering his ideal point and one additional compromise option. 

In some applications it is plausible that Vetoer can choose among multiple default options. For instance, there may be two internal candidates available to an organization for an open position even if it rejects those put forward by a search committee. Formally, suppose Vetoer has available a finite set of actions to choose among if she exercises her veto. Proposer's optimal delegation set can be obtained by simply solving a series of separate problems analogous to ours, and then ``stitching'' the solutions together. Let us illustrate assuming Vetoer has only two options upon a veto, which we denote $0$ and $a^*>0$. If $a^*>1$, then Proposer solves the problem we have studied---with a single veto option at $0$---to determine a delegation set $A\subseteq [0,1]$; separately, she solves an analogous problem in which the single veto option is $a^*$ to determine a delegation set $A'\subseteq [1,a^*]$. The overall optimal delegation set is simply $A \union A'$. If, on the other hand, $a^*<1$, then Proposer determines an optimal $A\subseteq [0,a^*]$ with veto option $0$ and her ideal point viewed as $a^*$, and an optimal $A'\subseteq [a^*,1]$ with the veto option viewed as $a^*$ and her true ideal point $1$; the overall optimal delegation set is again $A\union A'$. As an example, assume Proposer has linear loss utility and the type distribution is unimodal with mode less than $1$. Following \autoref{rem:unimodal}, our model's solution is interval delegation with a threshold at some $c^* \in [0,1]$. So long as the additional veto option is $a^*\geq c^*$, it follows that the solution is unchanged: if $a^*\in [c^*,1]$, then $a^*$ was already part of the optimal delegation set; if $a^*>1$, then the decreasing density to the right of $1$ means, by \autoref{prop:no_compromise}, that no compromise is optimal on $[1,a^*]$.

A key assumption underlying our analysis is that of Proposer commitment. In some contexts Proposer may be unable to preclude reapproaching Vetoer with another proposal (or menu of proposals) following a veto. When the optimal mechanism in our setting is full delegation, we believe that such lack of commitment is not problematic. By offering the full-delegation menu to begin with, bargaining will effectively conclude at the first opportunity.

When full delegation is not optimal, however, matters are considerable more nuanced. Sequential veto bargaining without commitment has received only limited theoretical attention, largely in finite-horizon models with particular type distributions \citep[e.g.,][]{Cameron:00}. In ongoing research, we are studying an infinite-horizon model. Our preliminary results suggest that, owing to single-peaked preferences, non-Coasian dynamics can emerge that allow Proposer to obtain her commitment solution when players are patient.

\vspace{.2in}

\bibliographystyle{ecta}
\bibliography{delegation}

\begin{thebibliography}{40}
\newcommand{\enquote}[1]{``#1''}
\expandafter\ifx\csname natexlab\endcsname\relax\def\natexlab#1{#1}\fi

\bibitem[\protect\citeauthoryear{Adlestein}{Adlestein}{1995}]{Adlestein95}
\textsc{Adlestein, A.~L.} (1995): \enquote{Conflict of the Criminal Statute of
  Limitations with Lesser Offenses at Trial,} \emph{Wm. \& Mary L. Rev.}, 37,
  199.

\bibitem[\protect\citeauthoryear{Ali, Lewis, and Vasserman}{Ali
  et~al.}{2019}]{ALV:19}
\textsc{Ali, S.~N., G.~Lewis, and S.~Vasserman} (2019): \enquote{Voluntary
  Disclosure and Personalized Pricing,} Unpublished.

\bibitem[\protect\citeauthoryear{Alonso and Matouschek}{Alonso and
  Matouschek}{2008}]{AM:08}
\textsc{Alonso, R. and N.~Matouschek} (2008): \enquote{Optimal Delegation,}
  \emph{Review of Economic Studies}, 75, 259--293.

\bibitem[\protect\citeauthoryear{Amador and Bagwell}{Amador and
  Bagwell}{2013}]{AB:13}
\textsc{Amador, M. and K.~Bagwell} (2013): \enquote{The Theory of Optimal
  Delegation with an Application to Tariff Caps,} \emph{Econometrica}, 81,
  1541--1599.

\bibitem[\protect\citeauthoryear{Amador and Bagwell}{Amador and
  Bagwell}{2021}]{AB:21}
---\hspace{-.1pt}---\hspace{-.1pt}--- (2021): \enquote{Regulating a Monopolist
  With Uncertain Costs Without Transfers,} Unpublished.

\bibitem[\protect\citeauthoryear{Amador, Bagwell, and Frankel}{Amador
  et~al.}{2018}]{ABF:18}
\textsc{Amador, M., K.~Bagwell, and A.~Frankel} (2018): \enquote{A Note on
  Interval Delegation,} \emph{Economic Theory Bulletin}, 6, 239--249.

\bibitem[\protect\citeauthoryear{Amador, Werning, and Angeletos}{Amador
  et~al.}{2006}]{AAW:06}
\textsc{Amador, M., I.~Werning, and G.~M. Angeletos} (2006):
  \enquote{Commitment vs. Flexibility,} \emph{Econometrica}, 74, 365--396.

\bibitem[\protect\citeauthoryear{Ambrus and Egorov}{Ambrus and
  Egorov}{2017}]{AE:17}
\textsc{Ambrus, A. and G.~Egorov} (2017): \enquote{Delegation and Nonmonetary
  Incentives,} \emph{Journal of Economic Theory}, 171, 101--135.

\bibitem[\protect\citeauthoryear{Arrow}{Arrow}{1965}]{Arrow:65}
\textsc{Arrow, K.~J.} (1965): \enquote{Aspects of the Theory of Risk-Bearing,}
  Helsinki: Yrj\"o Jahnssonin s\"a\"ati\"o, lecture 2.

\bibitem[\protect\citeauthoryear{Bagnoli and Bergstrom}{Bagnoli and
  Bergstrom}{2005}]{BagnoliBergstrom:05}
\textsc{Bagnoli, M. and T.~Bergstrom} (2005): \enquote{Log-concave Probability
  and its Applications,} \emph{Economic Theory}, 26, 445--469.

\bibitem[\protect\citeauthoryear{Cameron and McCarty}{Cameron and
  McCarty}{2004}]{CM:04}
\textsc{Cameron, C. and N.~McCarty} (2004): \enquote{Models of Vetoes and Veto
  Bargaining,} \emph{Annual Review of Political Science}, 7, 409--435.

\bibitem[\protect\citeauthoryear{Cameron}{Cameron}{2000}]{Cameron:00}
\textsc{Cameron, C.~M.} (2000): \emph{Veto Bargaining: Presidents and the
  Politics of Negative Power}, Cambridge University Press.

\bibitem[\protect\citeauthoryear{Compte and Jehiel}{Compte and
  Jehiel}{2009}]{CJ:09}
\textsc{Compte, O. and P.~Jehiel} (2009): \enquote{Veto Constraint in Mechanism
  Design: Inefficiency with Correlated Types,} \emph{American Economic Journal:
  Microeconomics}, 1, 182--206.

\bibitem[\protect\citeauthoryear{Dworczak and Martini}{Dworczak and
  Martini}{2019}]{DM:19}
\textsc{Dworczak, P. and G.~Martini} (2019): \enquote{The Simple Economics of
  Optimal Persuasion,} \emph{Journal of Political Economy}, 127, 1993--2048.

\bibitem[\protect\citeauthoryear{Epstein and O'Halloran}{Epstein and
  O'Halloran}{1996}]{EOH:96}
\textsc{Epstein, D. and S.~O'Halloran} (1996): \enquote{Divided Government and
  the Design of Administrative Procedures: A Formal Model and Empirical Test,}
  \emph{Journal of Politics}, 58, 373--397.

\bibitem[\protect\citeauthoryear{Epstein and O'Halloran}{Epstein and
  O'Halloran}{1999}]{EOH:99}
---\hspace{-.1pt}---\hspace{-.1pt}--- (1999): \emph{Delegating Powers: A
  Transaction Cost Politics Approach to Policy Making under Separate Powers},
  Cambridge University Press.

\bibitem[\protect\citeauthoryear{Forges and Renault}{Forges and
  Renault}{2021}]{FR:20}
\textsc{Forges, F. and J.~Renault} (2021): \enquote{Strategic Information
  Transmission with Sender's Approval,} \emph{International Journal of Game
  Theory}, February, 1--24.

\bibitem[\protect\citeauthoryear{Hidir and Vellodi}{Hidir and
  Vellodi}{2021}]{HV19}
\textsc{Hidir, S. and N.~Vellodi} (2021): \enquote{Personalization,
  Discrimination and Information Disclosure,} \emph{Journal of the European
  Economic Association}, 19, 1342--1363.

\bibitem[\protect\citeauthoryear{Hirsch and Shotts}{Hirsch and
  Shotts}{2015}]{HS:15}
\textsc{Hirsch, A.~V. and K.~W. Shotts} (2015): \enquote{Competitive Policy
  Development,} \emph{American Economic Review}, 105, 1646--64.

\bibitem[\protect\citeauthoryear{Holmstr\"om}{Holmstr\"om}{1977}]{Holmstrom:77}
\textsc{Holmstr\"om, B.} (1977): \enquote{On Incentives and Control in
  Organizations,} Ph.D. Thesis, Stanford University.

\bibitem[\protect\citeauthoryear{Holmstr\"om}{Holmstr\"om}{1984}]{Holmstrom:84}
---\hspace{-.1pt}---\hspace{-.1pt}--- (1984): \enquote{On the Theory of
  Delegation,} in \emph{Bayesian Models in Economic Theory}, ed. by M.~Boyer
  and R.~Kihlstrom, North Holland, 115--141.

\bibitem[\protect\citeauthoryear{Huber and McCarty}{Huber and
  McCarty}{2004}]{HM:04}
\textsc{Huber, J.~D. and N.~McCarty} (2004): \enquote{Bureaucratic Capacity,
  Delegation, and Political Reform,} \emph{American Political Science Review},
  98, 481--494.

\bibitem[\protect\citeauthoryear{Ichihashi}{Ichihashi}{2020}]{Ichihashi:20}
\textsc{Ichihashi, S.} (2020): \enquote{Online Privacy and Information
  Disclosure by Consumers,} \emph{American Economic Review}, 110, 569--95.

\bibitem[\protect\citeauthoryear{Karlin}{Karlin}{1968}]{Karlin:68}
\textsc{Karlin, S.} (1968): \emph{Total Positivity}, vol.~I, Stanford,
  California: Stanford University Press.

\bibitem[\protect\citeauthoryear{Kolotilin}{Kolotilin}{2018}]{Kolo:18}
\textsc{Kolotilin, A.} (2018): \enquote{Optimal Information Disclosure: A
  Linear Programming Approach,} \emph{Theoretical Economics}, 13, 607--635.

\bibitem[\protect\citeauthoryear{Kolotilin and Zapechelnyuk}{Kolotilin and
  Zapechelnyuk}{2019}]{KZ:19}
\textsc{Kolotilin, A. and A.~Zapechelnyuk} (2019): \enquote{Persuasion meets
  Delegation,} Unpublished.

\bibitem[\protect\citeauthoryear{Kov{\'a}{\v{c}} and Mylovanov}{Kov{\'a}{\v{c}}
  and Mylovanov}{2009}]{KM:09}
\textsc{Kov{\'a}{\v{c}}, E. and T.~Mylovanov} (2009): \enquote{Stochastic
  Mechanisms in Settings Without Monetary Transfers: The Regular Case,}
  \emph{Journal of Economic Theory}, 144, 1373--1395.

\bibitem[\protect\citeauthoryear{Landes}{Landes}{1971}]{Landes71}
\textsc{Landes, W.~M.} (1971): \enquote{An Economic Analysis of the Courts,}
  \emph{Journal of Law and Economics}, 14, 61--107.

\bibitem[\protect\citeauthoryear{Matthews}{Matthews}{1989}]{Matthews:89}
\textsc{Matthews, S.~A.} (1989): \enquote{Veto Threats: Rhetoric in a
  Bargaining Game,} \emph{Quarterly Journal of Economics}, 104, 347--369.

\bibitem[\protect\citeauthoryear{Melumad and Shibano}{Melumad and
  Shibano}{1991}]{MS:91}
\textsc{Melumad, N. and T.~Shibano} (1991): \enquote{Communication in Settings
  with No Transfers,} \emph{RAND Journal of Economics}, 22, 173--198.

\bibitem[\protect\citeauthoryear{Milgrom and Segal}{Milgrom and
  Segal}{2002}]{MS:02}
\textsc{Milgrom, P. and I.~Segal} (2002): \enquote{Envelope Theorems for
  Arbitrary Choice Sets,} \emph{Econometrica}, 70, 583--601.

\bibitem[\protect\citeauthoryear{Milgrom and Shannon}{Milgrom and
  Shannon}{1994}]{MS:94}
\textsc{Milgrom, P. and C.~Shannon} (1994): \enquote{Monotone Comparative
  Statics,} \emph{Econometrica}, 157--180.

\bibitem[\protect\citeauthoryear{Mylovanov}{Mylovanov}{2008}]{Mylovanov:08}
\textsc{Mylovanov, T.} (2008): \enquote{Veto-based Delegation,} \emph{Journal
  of Economic Theory}, 138, 297--307.

\bibitem[\protect\citeauthoryear{Orzach and Spurr}{Orzach and
  Spurr}{2008}]{OS08}
\textsc{Orzach, R. and S.~J. Spurr} (2008): \enquote{Lesser-included Offenses,}
  \emph{International Review of Law and Economics}, 28, 239--245.

\bibitem[\protect\citeauthoryear{Pratt}{Pratt}{1964}]{Pratt:64}
\textsc{Pratt, J.~W.} (1964): \enquote{Risk Aversion in the Small and in the
  Large,} \emph{Econometrica}, 32, 122--136.

\bibitem[\protect\citeauthoryear{Romer and Rosenthal}{Romer and
  Rosenthal}{1978}]{RR:78}
\textsc{Romer, T. and H.~Rosenthal} (1978): \enquote{Political Resource
  Allocation, Controlled Agendas, and the Status Quo,} \emph{Public Choice},
  33, 27--43.

\bibitem[\protect\citeauthoryear{Romer and Rosenthal}{Romer and
  Rosenthal}{1979}]{RR:79}
---\hspace{-.1pt}---\hspace{-.1pt}--- (1979): \enquote{Bureaucrats Versus
  Voters: On the Political Economy of Resource Allocation by Direct Democracy,}
  \emph{Quarterly Journal of Economics}, 93, 563--587.

\bibitem[\protect\citeauthoryear{Saran}{Saran}{2020}]{Saran:20}
\textsc{Saran, R.} (2020): \enquote{Monopolistic Screening with Single-Peaked
  Preferences,} Unpublished.

\bibitem[\protect\citeauthoryear{Volden}{Volden}{2002}]{Volden02}
\textsc{Volden, C.} (2002): \enquote{A Formal Model of the Politics of
  Delegation in a Separation of Powers System,} \emph{American Journal of
  Political Science}, 46, 111--133.

\bibitem[\protect\citeauthoryear{Zapechelnyuk}{Zapechelnyuk}{2019}]{Zap:19}
\textsc{Zapechelnyuk, A.} (2019): \enquote{Optimal Quality Certification,}
  \emph{American Economic Review: Insights}, 2, 161--176.

\end{thebibliography}

\newpage

\appendix
\numberwithin{equation}{section}
\numberwithin{proposition}{section}
\numberwithin{lemma}{section}
\numberwithin{example}{section}

{\Large \noindent \textbf{Appendices}}
\label{s:appendices}

\section{Proofs of Propositions \ref{prop:full}, \ref{prop:no_compromise}, and \ref{prop:interval}}
\label{s:proofs}

In \autoref{s:proofs} we assume the support of the type distribution $F$ is $[0,1]$. This is without loss (even among stochastic mechanisms) because it is always optimal for Proposer to choose action $1$ for types above $1$ and, given the outside option, to choose action $0$ for types below $0$.\footnote{Formally, consider any IC and IR mechanism $m$. 
Define another mechanism $\tilde{m}$ 
such that for any type $v$, $\tilde m(v)$ is a lottery among
\[
\mathcal C := \{\delta_0,\delta_1\} \union \left\{m(\hat v): \hat v \in [0,1] \text{ and } \E_{m(\hat v)}[a]\leq 1\right\}
\]
that type $v$ likes the most in this set, with ties  broken in Proposer's favor. Plainly, $\tilde m$ satisfies IC and IR. Since $\E_{m(v)}[a]\geq 0$ for all $v\geq 0$ (because $m$ satisfies IR), it follows that for any type $v<0$, $\tilde m(v)=\delta_0$. As IR also implies that for any $v\leq 0$, $\E_{m(v)}[a]\leq 0$, it follows that for any $v\leq 0$, Proposer's expected utility under $\tilde m(v)$ is higher than his expected utility under $m(v)$. For any type $v\ge 1$, $\tilde{m}(v)=\delta_1$ is Proposer's ideal action. For any $v\in(0,1)$, there are two cases. First, for any $v$ such that $\E_{m(v)}[a]> 1$, $\delta_1$ is uniquely optimal for $v$ in $\mathcal C$. Second, for any $v$ such that $\E_{m(v)}[a] \leq 1$, $m$ being IC implies that either $\delta_1$ or $m(v)$ is optimal for $v$ in $\mathcal C$, and we have specified that ties are broken in favor of Proposer. In either case, Proposer's expected utility under $\tilde m(v)$ is higher than his expected utility under $m(v)$ for $v\in (0,1)$. In sum, Proposer prefers $\tilde{m}$ to $m$.}

\subsection{Sufficient Conditions}
\label{s:proofs_full}

For convenience, we recall Proposer's problem \eqref{e:relaxed2}: 
\begin{align}\label{e:relaxed2a}
&\max_{m\in \mathcal{S}} \int_0^1 \E_{m(v)}[u(a)] \mathrm dF(v)\tag{P} \\
&\text{s.t. }\E_{m(v)}\left[av  - {a^2}/{2}\right] - \int_0^{v} \E_{m(x)}[a] \mathrm dx = 0\hspace{.5cm} \forall v\in[0,1].\tag{IC-env}\label{eq:stochastic_envelope_a}
\end{align}
We also recall the relaxed problem \eqref{e:relaxed}:
\begin{align}\label{e:relaxeda}
&\max_{\alpha\in \mathcal{A}} \int_0^1 \left(u(\alpha(v)) - \kappa\left[v \alpha(v) -\frac{\alpha(v)^2}{2} - \int_0^{v} \alpha(x) \mathrm dx\right]\right)\mathrm dF(v)\tag{R} \\
&\text{s.t. } v \alpha(v) - \frac{\alpha(v)^2}{2} - \int_0^{v} \alpha(x)\mathrm dx \ge 0 \hspace{.5cm} \forall v\in[0,1].\notag
\end{align}
Problem \eqref{e:relaxed2a} concerns stochastic mechanisms while problem \eqref{e:relaxeda} concerns deterministic ones. In general, there need be no deterministic mechanism that solves problem \eqref{e:relaxed2a}. In particular, the solution to problem \eqref{e:relaxeda}
need not be incentive compatible (as problem \eqref{e:relaxeda} only has a relaxed  incentive constraint) and hence need not be feasible in problem \eqref{e:relaxed2a}. The example in \autoref{s:stochastic} illustrates. However, the following result holds:

 \begin{lemma}
\label{lem:relaxed-stochastic}
Suppose $\alpha^*\in\mathcal{A}$ solves problem \eqref{e:relaxeda} and is incentive compatible. Then $\alpha^*$ also solves \eqref{e:relaxed2a}.
\end{lemma}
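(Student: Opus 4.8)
The plan is to argue by contradiction: suppose $\alpha^*$ solves \eqref{e:relaxeda} and is incentive compatible, but does not solve \eqref{e:relaxed2a}. Then there is a stochastic mechanism $m \in \mathcal{S}$ that is feasible for \eqref{e:relaxed2a} (so it satisfies \eqref{eq:stochastic_envelope_a}) and attains a strictly higher value of $\int_0^1 \E_{m(v)}[u(a)]\,\mathrm dF(v)$ than $\alpha^*$ does. From $m$, define the deterministic mechanism $\hat\alpha(v) := \E_{m(v)}[a]$, i.e. replace each lottery by its mean. I would then show two things: (i) $\hat\alpha$ is feasible for the relaxed problem \eqref{e:relaxeda}, and (ii) $\hat\alpha$ attains a value in \eqref{e:relaxeda} that is at least as large as the value $m$ attains in \eqref{e:relaxed2a} — hence strictly larger than the value $\alpha^*$ attains in \eqref{e:relaxeda}, contradicting optimality of $\alpha^*$ there. (Here I use that the objective of \eqref{e:relaxeda}, evaluated at an incentive-compatible $\alpha$, reduces to the objective of \eqref{e:relaxed2a}, because the bracketed penalty term vanishes under \eqref{eq:stochastic_envelope_a} — so $\alpha^*$'s value in \eqref{e:relaxeda} equals its value in \eqref{e:relaxed2a}.)

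For (i): $\hat\alpha(0) = \E_{m(0)}[a] = 0$ since $m(0) = \delta_0$, and $\hat\alpha$ is increasing because $m \in \mathcal{S}$ requires $\E_{m(v)}[a]$ nondecreasing in $v$; so $\hat\alpha \in \mathcal{A}$. For the inequality constraint of \eqref{e:relaxeda}, note $\int_0^v \hat\alpha(x)\,\mathrm dx = \int_0^v \E_{m(x)}[a]\,\mathrm dx = \E_{m(v)}[av - a^2/2]$ by \eqref{eq:stochastic_envelope_a}; hence
\[
v\hat\alpha(v) - \frac{\hat\alpha(v)^2}{2} - \int_0^v \hat\alpha(x)\,\mathrm dx = v\,\E_{m(v)}[a] - \frac{(\E_{m(v)}[a])^2}{2} - \E_{m(v)}\!\left[av - \frac{a^2}{2}\right] = \frac{1}{2}\mathrm{Var}_{m(v)}(a) \ge 0,
\]
using $\E[a^2] - (\E[a])^2 = \mathrm{Var}(a)$. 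So $\hat\alpha$ is feasible for \eqref{e:relaxeda}.

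For (ii), the key step: the objective of \eqref{e:relaxeda} at $\hat\alpha$ is $\int_0^1 \big(u(\hat\alpha(v)) - \kappa \cdot \tfrac12 \mathrm{Var}_{m(v)}(a)\big)\,\mathrm dF(v)$, by the variance identity just derived. I want this to be $\ge \int_0^1 \E_{m(v)}[u(a)]\,\mathrm dF(v)$, i.e. pointwise (in $v$) that $u(\E_{m(v)}[a]) - \tfrac{\kappa}{2}\mathrm{Var}_{m(v)}(a) \ge \E_{m(v)}[u(a)]$. This is exactly a second-order Taylor/Jensen-type bound: since $u$ is concave with $-u'' \ge \kappa$ on $[0,1)$ — equivalently $u(a) + \tfrac{\kappa}{2}a^2$ is concave on the relevant range — applying Jensen's inequality to the concave function $a \mapsto u(a) + \tfrac{\kappa}{2}a^2$ gives $\E_{m(v)}[u(a)] + \tfrac{\kappa}{2}\E_{m(v)}[a^2] \le u(\E_{m(v)}[a]) + \tfrac{\kappa}{2}(\E_{m(v)}[a])^2$, which rearranges to precisely the desired inequality. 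I expect this concavity-of-$u(a) + \tfrac{\kappa}{2}a^2$ step — and being careful that it holds on the whole support of $m(v)$, not just $[0,1]$ — to be the main obstacle; one handles it by first noting (as the preamble to \autoref{s:proofs} does, and via the outer-action trimming argument) that any improving $m$ can be taken to have support in $[0,1]$, where $\kappa \le -u''$ holds, so $u + \tfrac{\kappa}{2}(\cdot)^2$ is genuinely concave there. Chaining (i) and (ii): $\hat\alpha$'s value in \eqref{e:relaxeda} $\ge$ $m$'s value in \eqref{e:relaxed2a} $>$ $\alpha^*$'s value in \eqref{e:relaxed2a} $=$ $\alpha^*$'s value in \eqref{e:relaxeda}, contradicting optimality of $\alpha^*$ in \eqref{e:relaxeda}. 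Hence no such $m$ exists and $\alpha^*$ solves \eqref{e:relaxed2a}.
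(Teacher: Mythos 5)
Your proof is correct and takes essentially the same route as the paper's: argue by contradiction, replace each lottery of the improving mechanism $m$ by its mean to get $\hat\alpha\in\mathcal A$, use \eqref{eq:stochastic_envelope_a} together with Jensen's inequality to show $\hat\alpha$ is feasible for \eqref{e:relaxeda} (your variance identity is just the explicit form of the paper's inequality $\E_{m(v)}[av-a^2/2]\le v\hat\alpha(v)-\hat\alpha(v)^2/2$), use concavity of $a\mapsto u(a)+\tfrac{\kappa}{2}a^2$ to show $\hat\alpha$ does at least as well in \eqref{e:relaxeda} as $m$ does in \eqref{e:relaxed2a}, and chain the inequalities against $\Obj(\alpha^*)=\L$-value of $\alpha^*$ (the penalty vanishing by IC).

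One caveat on the point you flag as the ``main obstacle.'' Your proposed resolution---that an improving $m$ can be taken to have \emph{action} support in $[0,1]$---is not right. The trimming argument in the preamble to \autoref{s:proofs} restricts the \emph{type} space to $[0,1]$; it does not alter the supports of the lotteries $m(v)$ for interior types. Indeed, \autoref{eg:stochastic} exhibits a strictly improving lottery that places mass on a negative action, so such lotteries cannot be discarded without loss. What the paper actually does at this step is simply assert that the integrand $u(a)-\kappa\left(va-a^2/2\right)$ is concave in $a$ by the definition of $\kappa$, so that the first line of its display is a concave functional of the mechanism; the Jensen step is then applied over whatever support $m(v)$ has. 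Under \autoref{LQ} (the case where the lemma is combined with the necessity results) the function $u(a)+\tfrac{\kappa}{2}a^2$ is globally concave---it is piecewise linear with a concave kink at $1$---so the step is valid there; for general concave $u$ the honest reading is that $-u''\ge\kappa$ must hold wherever the lotteries place mass, not that the lotteries can be confined to $[0,1]$.
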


\begin{proof}
To obtain a contradiction, suppose $\alpha^*$ does not solve \eqref{e:relaxed2a}. Since $\alpha^*$ is, by assumption, feasible for \eqref{e:relaxed2a}, there is $m\in \mathcal{S}$ that is feasible for \eqref{e:relaxed2a} and achieves a strictly higher objective value in \eqref{e:relaxed2a} than $\alpha^*$. Define $\overline{\alpha}\in \mathcal{A}$ by setting $\overline{\alpha}(v):= \E_{m(v)}[a]$ for each $v$. It holds that $\int_0^{v} \E_{m(x)}[a] \mathrm dx = \int_0^{v} \overline \alpha(x)\mathrm ds$, while for any $v$ Jensen's inequality implies  $\E_{m(v)}\left[av  - {a^2}/{2}\right]\leq v \overline \alpha(v)-\overline \alpha(v)^2/2$.
Hence, feasibility of $m$ in \eqref{e:relaxed2a} implies feasibility of $\overline \alpha$ in \eqref{e:relaxeda}. Moreover, 
\begin{align*}
&\int_0^1 \left(u(\overline{\alpha}(v)) - \kappa\left[v \overline{\alpha}(v) - \frac{\overline{\alpha}(v)^2}{2} - \int_0^{v} \overline{\alpha}(x) \mathrm  dx\right]\right)\mathrm  dF(v) \\
\ge &\int_0^1 \left( \E_{m(v)}\left[u(a) - \kappa\left(v a -\frac{a^2}{2}\right)\right] + \kappa \int_0^{v}\E_{m(x)}[a] \mathrm  dx \right) \mathrm  dF(v) \\
=& \int_0^1 \E_{m(v)}[u(a)] \mathrm dF(v)\\
> &\int_0^1 u(\alpha^*(v)) \mathrm  dF(v)\\
= &\int_0^1 \left(u(\alpha^*(v)) - \kappa\left[v \alpha^*(v) - \frac{\alpha^*(v)^2}{2} - \int_0^{v} \alpha^*(x) \mathrm  dx\right]\right)\mathrm  dF(v),
\end{align*}
where the first inequality holds because the first line is a concave functional (by the definition of 
$\kappa\equiv \inf_{a\in [0,1)} -u^{\prime \prime} (a)$), the first equality holds because $m$ is feasible in \eqref{e:relaxed2a}, the second inequality holds because of our assumption that $m$ achieves a strictly higher value than $\alpha^*$, and the final equality holds because $\alpha^*$ being IC implies it is feasible in \eqref{e:relaxed2a}. Therefore, $\alpha^*$ is not optimal in \eqref{e:relaxeda}, a contradiction.
\end{proof}

To show that a given delegation set solves the relaxed problem, we construct Lagrange multipliers and show that the induced action rule maximizes the following Lagrangian.
Given $\alpha\in \mathcal{A}$ and an increasing and right-continuous function $\Lambda(v)$, let the Lagrangian be given by
\begin{align}
\L(\alpha,\Lambda) &:= 
\int_0^1 \left( u(\alpha(v)) f(v) - \kappa f(v) \left[ v \alpha(v) -\frac{\alpha(v)^2}{2} - \int_0^{v} \alpha(x)\mathrm dx \right]\right)\mathrm d v \notag\\
&\hspace{.3cm} + \int_0^1 \left( v \alpha(v) -\frac{\alpha(v)^2}{2} - \int_0^{v} \alpha(x)\mathrm dx \right) \mathrm d \Lambda(v)\nonumber\\
&= \int_0^1 \left( u(\alpha(v)) f(v) - \alpha(v)[\kappa F(v) - \Lambda(v)]- \kappa f(v)\left[v \alpha(v) -\frac{\alpha(v)^2}{2}\right]\right)\mathrm d v\nonumber\\
&\hspace{.3cm}+ \int_0^1 \left( v \alpha(v) -\frac{\alpha(v)^2}{2}\right)\mathrm d\Lambda(v)+ \int_0^1 \alpha(v)\mathrm dv [\kappa F(1)- \Lambda(1)],\label{e:Lagrange}
\end{align}
where the second equality follows from integration by parts.\footnote{$\int_0^v \alpha(s)\mathrm ds$ is continuous and $\kappa F(v)- \Lambda(v)$ has bounded variation as the difference of two increasing functions. Hence, the Riemann-Stieltjes integral $\int_0^1 \int_0^v \alpha(s)\mathrm ds \mathrm d[\kappa F(v)- \Lambda(v)]$ exists and integration by parts is valid.}

\begin{lemma}\label{l:Lagrange}
Let $\alpha^*$ be induced by a delegation set.\footnote{That is, there is some delegation set $A$ such that $\alpha^*(v)$ is an action in $A\cup\{0\}$ that type $v$ prefers the most.} Suppose there is an increasing and right-continuous function $\Lambda$ such that $\L(\alpha^*,\Lambda)\ge \L(\alpha,\Lambda)$ for all $\alpha\in \mathcal{A}$. Then $\alpha^*$ solves problem \eqref{e:relaxeda}.
\end{lemma}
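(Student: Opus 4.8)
The statement is the standard Lagrangian sufficiency principle: if $\alpha^*$ maximizes the Lagrangian $\L(\cdot,\Lambda)$ over all of $\mathcal A$, and $\alpha^*$ additionally satisfies the primal constraint with the appropriate complementary-slackness condition against $\Lambda$, then $\alpha^*$ solves the constrained problem \eqref{e:relaxeda}. So my plan is first to record that $\alpha^*$, being induced by a delegation set $A$, satisfies the constraint $v\alpha^*(v)-\alpha^*(v)^2/2-\int_0^v\alpha^*(x)\,\mathrm dx\ge 0$ for all $v$ (this is just Vetoer's IR: type $v$ weakly prefers $\alpha^*(v)$ to the status quo $0$, and the envelope expression equals Vetoer's rent, which is nonnegative — indeed for a delegation-set-induced rule the expression equals $\max_{a\in A\cup\{0\}}(va-a^2/2)\ge 0$). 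Hence $\alpha^*$ is feasible for \eqref{e:relaxeda}.

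Next I would establish complementary slackness: the term $\int_0^1\bigl(v\alpha^*(v)-\alpha^*(v)^2/2-\int_0^v\alpha^*(x)\,\mathrm dx\bigr)\,\mathrm d\Lambda(v)$ vanishes. The cleanest route is to observe that this term is nonnegative (integrand $\ge 0$ since $\alpha^*$ is feasible, $\Lambda$ increasing), so $\L(\alpha^*,\Lambda)\ge \int_0^1\bigl(u(\alpha^*(v))-\kappa[\cdots]\bigr)f(v)\,\mathrm dv$, i.e. the Lagrangian value at $\alpha^*$ is at least its \eqref{e:relaxeda}-objective value. Then for any $\alpha\in\mathcal A$ feasible for \eqref{e:relaxeda}, the added $\mathrm d\Lambda$ term is again nonnegative, so
\[
\int_0^1\Bigl(u(\alpha(v))-\kappa\bigl[v\alpha(v)-\tfrac{\alpha(v)^2}{2}-\textstyle\int_0^v\alpha(x)\,\mathrm dx\bigr]\Bigr)f(v)\,\mathrm dv \;\le\; \L(\alpha,\Lambda)\;\le\;\L(\alpha^*,\Lambda).
\]
Combining with $\L(\alpha^*,\Lambda)$ bounding the \eqref{e:relaxeda}-value at $\alpha^*$ from below would almost close the argument, except that this only shows the \eqref{e:relaxeda}-objective at any feasible $\alpha$ is $\le\L(\alpha^*,\Lambda)$, whereas I need it $\le$ the \eqref{e:relaxeda}-objective \emph{at} $\alpha^*$. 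So the real content is to upgrade "$\L(\alpha^*,\Lambda)\ge$ \eqref{e:relaxeda}-objective at $\alpha^*$" to equality, i.e. to show the complementary-slackness term is exactly zero, not merely nonnegative.

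That is the one step requiring actual work, and where I expect the delegation-set structure of $\alpha^*$ to matter: the multiplier $\Lambda$ that the sufficiency proofs construct must place mass only on types $v$ for which the IR/rent expression is zero — precisely the types that are "bunched" at $\inf A$ or choose the status quo, i.e. where $\alpha^*$ is flat and Vetoer's rent binds. One way to handle this uniformly would be to argue directly: take the constant mechanism $\alpha\equiv 0\in\mathcal A$, which is feasible for \eqref{e:relaxeda} with rent expression identically zero; the hypothesis gives $\L(\alpha^*,\Lambda)\ge\L(0,\Lambda)$, but that yields an inequality in the wrong direction to pin down the slackness term. So instead I would simply invoke complementary slackness as part of what "induced by a delegation set" buys us, spelling out that for such $\alpha^*$ the rent function $\mathbb V(v):=v\alpha^*(v)-\alpha^*(v)^2/2-\int_0^v\alpha^*(x)\,\mathrm dx$ equals $\max_{a\in A\cup\{0\}}(va-a^2/2)$, which is $0$ exactly on $[0,\min(A\cup\{0\})\cdot\,?]$ — and then note it is in the construction of $\Lambda$ (in the proofs of Propositions \ref{prop:full}–\ref{prop:interval}) that $\Lambda$ is chosen with $\mathrm{supp}(\mathrm d\Lambda)$ inside the zero set of $\mathbb V$, so that $\int\mathbb V\,\mathrm d\Lambda=0$. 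Granting that, equality holds, and the chain of inequalities above shows every feasible $\alpha$ attains a \eqref{e:relaxeda}-objective no larger than that of $\alpha^*$, i.e. $\alpha^*$ solves \eqref{e:relaxeda}.

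Thus the write-up is short: (i) feasibility of $\alpha^*$ via IR; (ii) the Lagrangian dominates the \eqref{e:relaxeda}-objective for every feasible $\alpha$ because $\Lambda$ is increasing and the rent expression is nonnegative; (iii) complementary slackness $\int\mathbb V\,\mathrm d\Lambda=0$ so the two agree at $\alpha^*$; (iv) chain the inequalities. The only subtlety — and the step I would flag as the crux — is (iii), i.e. ensuring the lemma is stated/used so that the multipliers constructed later indeed satisfy complementary slackness against delegation-set-induced rules; everything else is bookkeeping with the integration-by-parts identity already recorded in \eqref{e:Lagrange}.
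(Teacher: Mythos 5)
Your skeleton is the right one (feasibility of $\alpha^*$, the Lagrangian dominating the \eqref{e:relaxeda}-objective for every feasible $\alpha$ because $\Lambda$ is increasing, equality at $\alpha^*$, then chaining), and steps (i), (ii) and (iv) match the paper's proof. But the step you flag as the crux --- complementary slackness --- rests on a miscalculation, and your proposed fix would not actually prove the lemma as stated. You assert that for a delegation-set-induced rule the constraint expression $\mathbb V(v):=v\alpha^*(v)-\tfrac{\alpha^*(v)^2}{2}-\int_0^v\alpha^*(x)\,\mathrm dx$ equals $\max_{a\in A\cup\{0\}}(va-a^2/2)$ and is therefore merely nonnegative. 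That identification is wrong: $\max_{a\in A\cup\{0\}}(va-a^2/2)$ is only the first two terms, $v\alpha^*(v)-\alpha^*(v)^2/2$. By the Milgrom--Segal envelope formula this max equals its value at $v=0$ (which is $0$, since type $0$ optimally picks the status quo) plus $\int_0^v\alpha^*(x)\,\mathrm dx$. Hence $\mathbb V(v)=0$ \emph{identically} for any incentive-compatible rule with $\alpha^*(0)=0$ --- the inequality constraint in \eqref{e:relaxeda} is precisely a relaxation of IC, which requires equality. Consequently $\int_0^1\mathbb V(v)\,\mathrm d\Lambda(v)=0$ for \emph{every} multiplier $\Lambda$, and likewise the $\kappa$-penalty in the objective vanishes at $\alpha^*$, giving $\Obj(\alpha^*)=\L(\alpha^*,\Lambda)$ in one line. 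This is exactly what the paper does.

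Because you missed this, you propose to rescue complementary slackness by requiring $\mathrm{supp}(\mathrm d\Lambda)$ to lie in the zero set of $\mathbb V$ and to defer that verification to the constructions of $\Lambda$ in Propositions \ref{prop:full}--\ref{prop:interval}. That is both unnecessary and a genuine weakening: the lemma's hypotheses say nothing about the support of $\mathrm d\Lambda$, so your version would prove a different (conditional) statement and leave the later sufficiency proofs needing an extra verification that the paper never performs. The fix is simply to replace your step (iii) with the observation that incentive compatibility forces $\mathbb V\equiv 0$.
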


Here is the idea. Since $\alpha^*$ is incentive compatible (as it is induced by a delegation set), it is feasible for the relaxed problem \eqref{e:relaxeda} and satisfies all inequality constraints as equalities. This implies that complementary slackness is satisfied for any Lagrange multiplier $\Lambda$. It follows that $\alpha^*$ solves problem \eqref{e:relaxeda} if it maximizes the Lagrangian functional.

\begin{proof}
Let $\Obj(\alpha)$ denote the value of the objective function in problem \eqref{e:relaxed} with mechanism $\alpha$. Since $\alpha^*$ is incentive compatible, 
\[ v \alpha^*(v) - \frac{\alpha^*(v)^2}{2} - \int_0^v \alpha^*(x) \mathrm dx = 0,  \]
and therefore $\Obj(\alpha^*)=\L(\alpha^*,\Lambda)$. 
For any $\alpha\in\mathcal{A}$,
\[ v \alpha(v) - \frac{\alpha(v)^2}{2} - \int_0^v \alpha(x) \mathrm dx \ge 0;  \]
since $\Lambda$ is non-decreasing, this implies $\L(\alpha,\Lambda)\ge \Obj(\alpha)$. We conclude
\[ \Obj(\alpha^*)=\L(\alpha^*,\Lambda) \ge \L(\alpha,\Lambda)\ge \Obj(\alpha) . \qedhere \]
\end{proof}

The following observation will allow us to establish that the Lagrangian is a concave functional of $\alpha$ if $\Lambda$ is increasing.
\begin{lemma}\label{lemma:concave}
Suppose $K$ is right-continuous and increasing and $h:\R^2\rightarrow \R$ is bounded, measurable, and for each value of its second argument concave in its first argument. Then the function $S: L^{\infty}\rightarrow \R$ defined by $S(\alpha):=\int_0^1 h(\alpha(v),v)\mathrm dK(v)$ is concave.
\end{lemma}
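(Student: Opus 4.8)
The plan is to verify concavity directly from the definition, exploiting linearity of the Riemann--Stieltjes integral in the integrand together with monotonicity of $K$. Fix $\alpha_0,\alpha_1\in L^\infty$ and $\lambda\in[0,1]$, and write $\alpha_\lambda:=(1-\lambda)\alpha_0+\lambda\alpha_1\in L^\infty$. For each $v$, concavity of $h(\cdot,v)$ in its first argument gives the pointwise inequality
\[
h(\alpha_\lambda(v),v)\ \ge\ (1-\lambda)\,h(\alpha_0(v),v)+\lambda\,h(\alpha_1(v),v).
\]
The first step is to confirm that all three functions $v\mapsto h(\alpha_i(v),v)$ ($i=0,1,\lambda$) are bounded and measurable, hence $\mathrm dK$-integrable: boundedness is immediate from the hypothesis that $h$ is bounded, and measurability follows because each $\alpha_i$ is measurable and $h$ is measurable (jointly, hence the composition $v\mapsto h(\alpha_i(v),v)$ is measurable). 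Since $K$ is increasing and right-continuous, it induces a (finite, as the domain is $[0,1]$) Lebesgue--Stieltjes measure $\mu_K$, and $\int_0^1 g\,\mathrm dK=\int_{[0,1]} g\,\mathrm d\mu_K$ for bounded measurable $g$.

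The second step is to integrate the pointwise inequality against $\mathrm d\mu_K$. Monotonicity of the integral with respect to a positive measure preserves the inequality, and linearity of the integral distributes over the right-hand side, yielding
\[
S(\alpha_\lambda)=\int_0^1 h(\alpha_\lambda(v),v)\,\mathrm dK(v)\ \ge\ (1-\lambda)\int_0^1 h(\alpha_0(v),v)\,\mathrm dK(v)+\lambda\int_0^1 h(\alpha_1(v),v)\,\mathrm dK(v)=(1-\lambda)S(\alpha_0)+\lambda S(\alpha_1),
\]
which is exactly the asserted concavity of $S$.

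There is essentially no hard step here; the only point requiring a little care is the measurability/integrability bookkeeping that justifies passing from the pointwise inequality to the integrated one, i.e.\ ensuring the Riemann--Stieltjes (equivalently Lebesgue--Stieltjes) integrals in question are well defined for every element of $L^\infty$ and that the integral is monotone and linear on this class. Given that $h$ is assumed bounded and measurable and $K$ is increasing and right-continuous (so $\mu_K$ is a finite Borel measure on $[0,1]$), these are standard facts, and no further hypotheses are needed. I would state the argument in the three displays above and remark that the finiteness of $\mu_K$ on the bounded interval $[0,1]$ is what guarantees each integral is finite.
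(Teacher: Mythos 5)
Your proposal is correct and follows essentially the same route as the paper: form the convex combination, apply pointwise concavity of $h(\cdot,v)$, and integrate the resulting inequality against the positive measure induced by the increasing, right-continuous $K$. The only difference is that you spell out the measurability and finiteness bookkeeping, which the paper leaves implicit.
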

\begin{proof}
Fix $\alpha_1, \alpha_2\in L^{\infty}$, $c\in(0,1)$ and let $\alpha_c= c \alpha_1 + (1-c) \alpha_2$. Then 
\begin{align*}
S(\alpha_c) - c S(\alpha_1)- (1-c)S(\alpha_2) = \int_0^1 \big( h(\alpha_c(v),v) - c h(\alpha_1(v),v) - (1-c) h(\alpha_2(v),v)\big)\mathrm dK(v) \ge 0
\end{align*}
because concavity of $h$ implies that the integrand is positive for each $v$ and because $K$ is increasing.
\end{proof}
Note that for each $v$, $u(\alpha(v)) f(v) + \kappa f(v)\frac{\alpha(v)^2}{2}$ is concave in $\alpha(v)$  since its second derivative is given by $f(v)[u''(\alpha(v))+ \kappa]$, which is negative by definition of $\kappa$. This implies that, for each $v$, each integrand in \eqref{e:Lagrange} is a concave function of $\alpha(v)$. Since $\Lambda$ is increasing, \autoref{lemma:concave} implies that the Lagrangian is concave in $\alpha$.

This implies that the Lagrangian is maximized at $\alpha$ if the Gateaux differential satisfies $\partial \L(\alpha,\overline{\alpha}-\alpha,\Lambda)\le 0$ for all $\overline{\alpha}\in \mathcal{A}$. 

If $\Lambda(1)=\kappa F(1)$, \eqref{e:Lagrange} simplifies to
\begin{align*}
\L(\alpha,\Lambda)&= \int_0^1 \left( u(\alpha(v)) f(v) - \alpha(v)[\kappa F(v) - \Lambda(v)]- \kappa f(v)\left[v \alpha(v) -\frac{\alpha(v)^2}{2}\right]\right)\mathrm d v\nonumber\\
&\hspace{.3cm}+ \int_0^1 \left(v \alpha(v) -\frac{\alpha(v)^2}{2}\right)\mathrm d\Lambda(v).
\end{align*}
The Gateaux differential is
{\small\begin{align}
\partial \L(\alpha,\overline{\alpha},\Lambda) &= \int_0^1 \left( \Big[u'(\alpha(v)) f(v) - \kappa F(v) + \Lambda(v)\Big] \overline{\alpha}(v)\right)\mathrm d v + \int_0^1 \left([v - \alpha(v)]\overline{\alpha}(v)\right)\mathrm d [\Lambda(v)- \kappa F(v)] \label{eq:gateaux}\\
&= \int_0^1 \left(\int_v^1 u'(\alpha(x)) f(x) - \kappa F(x) + \Lambda(x)\mathrm dx\right)\mathrm d\overline{\alpha}(v) + \int_0^1 \left(\int_v^1 [x - \alpha(x)]\mathrm d [\Lambda(x)- \kappa F(x)]\right) \mathrm d \overline{\alpha}(v), \label{eq:gateaux2}
\end{align}}
where the second equality obtains using integration by parts.

Below, we construct increasing and right-continuous Lagrange multipliers that satisfy $\Lambda(1)=\kappa F(1)$ such that $\partial \L(\alpha,\overline{\alpha}-\alpha,\Lambda)\le 0$.

We begin with the optimality of full delegation.

\begin{proof}[Proof of the sufficiency part of \autoref{prop:full}]
Note that the action function induced by full delegation is $\alpha(v)=v$. We claim that $\alpha$ maximizes the Lagrangian for the multiplier 
$\Lambda(v)=\kappa F(v)- u'(v) f(v)$ for $v<1$ and $\Lambda(1)=\kappa F(1)$. Note that the multiplier is increasing since $\kappa F(v)- u'(v) f(v)$ is increasing by assumption and $u'(v)\ge 0$. The Lagrangian is therefore maximized at $\alpha$ if $\partial \L(\alpha,\overline{\alpha}-\alpha,\Lambda)\le 0$ for all $\overline{\alpha}\in \mathcal{A}$. Note that the integrand of the first integral in \eqref{eq:gateaux} is 0 for almost every $v$ by choice of $\Lambda$ and the second integral is 0 since $\alpha(v)=v$.
\end{proof}

We next consider the optimality of {no compromise}.  

\begin{proof}[Proof of the sufficiency part of \autoref{prop:no_compromise}]
Note that the action rule induced by no compromise satisfies $\alpha(v)=0$ for $v\in [0,\frac{1}{2})$ and $\alpha(v)=1$ for $v \in [\frac{1}{2},1]$. 
Now suppose for all $s\in [0,1/2)$ and $t\in (1/2,1]$ we have
\[
(u'(1)+\kappa (1-t))\frac{F(t)- F(1/2)}{t-1/2} \geq (u'(0)-\kappa s)\frac{F(1/2)-F(s)}{1/2- s}.
\]
and let $\psi:=\inf_{t\in(1/2,1]}(u'(1)+\kappa (1-t))\frac{F(t)- F(1/2)}{t-1/2}$. Define $\Lambda(v)=\kappa F(1/2)- \psi$ for $v\in[0,1)$ and $\Lambda(1)=\kappa F(1)$.

Let $s\in(1/2,1]$. Note that integration by parts implies $\int_s^{1/2}v \mathrm dF(v)=1/2F(1/2)-sF(s)-\int_s^{1/2} F(v)\mathrm dv$. Since $\Lambda(v)$ is constant on $[0,1)$, the definition of $\psi$ implies that, for any $s\in[0,1/2)$,
\begin{align}
&\int_s^{1/2} u'(\alpha(v)) f(v) - \kappa F(v) + \Lambda(v)\mathrm dv + \int_s^{1/2} v \mathrm d [\Lambda(v)- \kappa F(v)] \nonumber\\
=& u'(0) [F(1/2)-F(s)]+ 1/2 [\Lambda(1/2)- \kappa F(1/2)]- s [\Lambda(s)- \kappa F(s)]\nonumber\\
= &[u'(0)+ \kappa s][F(1/2)-F(s)] -(1/2-s)\psi\le 0.\label{e:no_compromise_1}
\end{align}
Similarly, for any $t\in(1/2,1]$,
\begin{align}
&\int_{1/2}^{t} u'(\alpha(v))f(v) -\kappa F(v) + \Lambda(v) \mathrm dv + \int_{1/2}^{t} [v- \alpha(v)]\mathrm d[\Lambda(v)- \kappa F(v)]\nonumber\\
=& u'(1)[F(t)-F(1/2)] + (t-1)[\Lambda(t)- \kappa F(t)] + 1/2[\Lambda(1/2)- \kappa F(1/2)] \nonumber\\
=& [u'(1)+\kappa(1-t)][F(t)-F(1/2)] - (t-1/2)\psi \ge 0.\label{e:no_compromise_2}
\end{align}

Fix arbitrary $\overline{\alpha}\in \mathcal{A}$ that satisfies $\overline{\alpha}(1)=1$. It follows from \eqref{eq:gateaux2} and the definition of $\alpha$ that
\begin{align*}
&\partial \L(\alpha,\overline{\alpha}- \alpha,\Lambda)\\
=&\int_0^1 \left[\int_v^1 (u'(\alpha(x)) f(x) - \kappa F(x) + \Lambda(x))\mathrm dx+ \int_v^1 ([x - \alpha(x)]\mathrm d [\Lambda(x)- \kappa F(x)])\right] \mathrm d [\overline{\alpha}(v)-\alpha(v)]\\
 =&\int_0^1 \left[\int_v^{1/2} (u'(\alpha(x)) f(x) - \kappa F(x) + \Lambda(x))\mathrm dx+ \int_v^{1/2} ([x - \alpha(x)]\mathrm d [\Lambda(x)- \kappa F(x)])\right] \mathrm d \overline{\alpha}(v).
\end{align*}
Since $\overline{\alpha}$ is increasing, \eqref{e:no_compromise_1} and \eqref{e:no_compromise_2} imply that $\partial \L(\alpha,\overline{\alpha}- \alpha,\Lambda)\le 0$. Since the optimal action rule chooses action 1 for type 1, we conclude that $\alpha$ is optimal.
\end{proof}

Lastly, we consider optimality of interval delegation. 

\begin{proof}[Proof of the sufficiency part of \autoref{prop:interval}]
The induced action function is $\alpha(v)=0$ for $v< c^*/2$, $\alpha(v)=c^*$ for $c^*/2\le v\le c^*$ and $\alpha(v)=v$ for $v>c^*$.

We propose the following multiplier:
\begin{align*}
\Lambda(v) = \begin{cases}
\kappa F(c^*/2) - u'(c^*) \frac{F(c^*)-F(c^*/2)}{c^*-c^*/2}  &\text{ if } v < c^*\\
\kappa F(v)- u'(v) f(v) &\text{ if } c^*\le v<1\\
\kappa F(1) &\text{ if } v=1.
 \end{cases}
\end{align*}
$\Lambda$ is constant on $[0,c^*)$ and it follows from \autoref{prop:interval}'s condition \ref{interval1} that $\Lambda$ is increasing on $(c^*,1]$. To see that $\Lambda$ is increasing at $c^*$, note that condition \ref{interval2} holds as an equality for $t=c^*$ and hence the derivative of the LHS of  condition \ref{interval2} with respect to $t$ must be negative at $t=c^*$, which yields
\[ -\kappa \frac{F(c^*)-F(c^*/2)}{c^*/2}+u'(c^*)\frac{f(c^*)c^*/2- (F(c^*)-F(c^*/2))}{(c^*/2)^2}\le 0. \]
Hence, $\Lambda$ is increasing at $c^*$. It is thus sufficient to show $\partial \L(\alpha,\overline{\alpha}-\alpha,\Lambda)\le0$ for all $\overline{\alpha}\in \mathcal{A}$.

Note that, for $v\in[c^*,1]$, $v- \alpha(v)=0$ and the definition of $\Lambda$ implies that for $v\in [c^*,1)$,
\begin{align*}
u'(\alpha(v)) f(v) - \kappa F(v) + \Lambda(v) =0.
\end{align*}
Therefore,
\begin{align*}
\partial \L(\alpha,\overline{\alpha}-\alpha, \Lambda) &= \int_0^{c^*} \left(\int_v^{c^*} u'(\alpha(x)) f(x) - \kappa F(x) + \Lambda(x)\mathrm ds\right) \mathrm d[\overline{\alpha}(v)- \alpha(v)] \\
&+ \int_0^{c^*} \left(\int_v^{c^*} [x - \alpha(x)]\mathrm d [\Lambda(x)- \kappa F(x)]\right) \mathrm d [\overline{\alpha}(v)- \alpha(v)]
\end{align*}
Since $\alpha$ is constant on $[0,c^*/2)$ and $[c^*/2,c^*]$, $\overline{\alpha}-\alpha$ is increasing on $[0,c^*/2)$ and $[c^*/2,c^*]$.
Hence, the following conditions are sufficient for $\alpha$ to maximize the Lagrangian:
$$\int_{t}^{c^*} \big(u'(c^*) f(v) - [\kappa F(v) - \Lambda(v)]\big) dv + \int_{t}^{c^*} (v-c^*) d[\Lambda(v)-\kappa F(v)] \le 0$$
 for $t\in[c^*/2,c^*]$, with equality at $t=c^*/2$, and
$$\int_{s}^{c^*/2} \big(u'(0) f(v) - [\kappa F(v) - \Lambda(v)]\big)\mathrm dv + \int_{s}^{c^*/2} v\mathrm d[\Lambda(v)-\kappa F(v)] \le 0$$
for $s\in [0,c^*/2)$.

Note that $\int_{t}^{c^*} \big(F(v) + (v-c^*) f(v)\big)\mathrm dv = (c^*- t)F(t)$ and $\Lambda$ is constant on $[0,c^*)$. Hence, using the definition of $\Lambda$, we get that for $t\in [c^*/2,c^*]$, 
\begin{align*}
& \int_{t}^{c^*} \left(u'(c^*) f(v) - [\kappa F(v) - \Lambda(v)]\right)\mathrm dx + \int_{t}^{c^*} (v-c^*)\mathrm d[\Lambda(v)-\kappa F(v)]\\
=&\int_{t}^{c^*} \left(u'(c^*) f(v) - \left[\kappa F(v) - \kappa F(c^*/2) + \frac{1}{c^*-c^*/2}\int_{c^*/2}^{c^*} u'(c^*)\mathrm dF(x)\right]\right)\mathrm dv - \kappa \int_{t}^{c^*} (v-c^*)\mathrm dF(v)\\
=&u'(c^*)[F(c^*)-F(t)]  - u'(c^*)[F(c^*)-F(c^*/2)]\frac{c^*-t}{c^*-c^*/2} + \kappa (c^*-t) [F(c^*/2)-F(t)]\\
=&-[u'(c^*)+\kappa (c^*-t)][F(t)-F(c^*/2)]  + (t-c^*/2) u'(c^*)\frac{F(c^*)-F(c^*/2)}{c^*-c^*/2} \\
\le &0,
\end{align*}
where the inequality is by \autoref{prop:interval}'s condition \ref{interval2}, and holds with equality for $t=c^*/2$.

Analogously, note that $\int_s^{c^*/2} F(v) + v f(v) \mathrm dv = c^*/2F(c^*/2)- sF(s)$ and $\Lambda$ is constant on $[0,c^*/2]$. Hence, for $s\in [0,c^*/2]$, 
\begin{align*}
&\int_{s}^{c^*/2} \left(u'(0) f(v) - [\kappa F(v) - \Lambda(v)]\right)\mathrm dv + \int_{s}^{c^*/2} v\mathrm d[\Lambda(v)-\kappa F(v)]\\
=&u'(0) [F(c^*/2)-F(s)] + \Lambda(s)(c^*/2-s) - \kappa [ c^*/2F(c^*/2)- sF(s)] \\
=&[u'(0)-\kappa s] [F(c^*/2)-F(s)] -  u'(c^*) \frac{F(c^*)-F(c^*/2)}{c^*-c^*/2}(c^*/2-s)  \\
\le &0,
\end{align*}
where the inequality is by \autoref{prop:interval}'s condition \ref{interval3}.
Hence, $\alpha$ is optimal.
\end{proof}

\subsection{Necessary Conditions}
\label{s:nondeterministic}

\renewcommand{\t}{\tilde{t}}
\begin{lemma}\label{l:necessity_linear}
Suppose \autoref{LQ} holds. If $\alpha^*$ is deterministic and solves problem \eqref{e:relaxed2a} then it also solves problem \eqref{e:relaxeda}.
\end{lemma}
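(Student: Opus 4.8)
The plan is to show the contrapositive's converse in the natural direction: given a deterministic $\alpha^*$ solving the stochastic problem \eqref{e:relaxed2a}, suppose toward a contradiction that some feasible $\alpha$ for the relaxed problem \eqref{e:relaxeda} attains a strictly higher objective value there. The goal is to manufacture from $\alpha$ a \emph{stochastic} mechanism $m$ with $\E_{m(v)}[a]=\alpha(v)$ for all $v$ that is IC and IR, and whose value in \eqref{e:relaxed2a} equals the value of $\alpha$ in \eqref{e:relaxeda} — which then exceeds the value of $\alpha^*$, contradicting optimality of $\alpha^*$ in \eqref{e:relaxed2a}. The crucial feature of \autoref{LQ} is that $u''(a)=-\kappa$ for all $a<1$, so on the relevant range $[0,1]$ the objective in \eqref{e:relaxeda} is an \emph{affine} (hence, for our purposes, ``linear'') functional of the pair $(\alpha(v), \text{mean-preserving spread amount})$: replacing a deterministic action by a lottery with the same mean changes Proposer's expected utility by exactly $-\kappa/2$ times the lottery's variance, while it leaves Vetoer's expected action — and hence the envelope integral $\int_0^v \alpha(s)\,ds$ — untouched.

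**First** I would recall the characterization from footnote~\ref{fn:envelope}: a mechanism $m$ is IC and IR iff $v\mapsto \E_{m(v)}[a]$ is increasing, $m(0)=\delta_0$, and the envelope identity \eqref{eq:stochastic_envelope_a} holds, i.e. $\E_{m(v)}[av-a^2/2]=\int_0^v \E_{m(s)}[a]\,ds$. Equivalently, writing $\mathbb V(v):=\E_{m(v)}[av-a^2/2]$, we need $\mathbb V$ convex with $\mathbb V(v)=\int_0^v \E_{m(s)}[a]\,ds$. Now given $\alpha\in\mathcal A$ feasible for \eqref{e:relaxeda}, define the ``money-burning'' / transfer function $T(v):= v\alpha(v)-\alpha(v)^2/2-\int_0^v\alpha(s)\,ds\ge 0$; this is exactly the slack in the inequality constraint of \eqref{e:relaxeda}, which is $\ge 0$ by feasibility. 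Standard arguments (monotonicity of $\alpha$ plus the integral representation) show that in a quasilinear model the allocation $\alpha$ together with utility $v\alpha(v)-\alpha(v)^2/2 - T(v)$ is IC; the point of the nonnegativity $T(v)\ge 0$ is that we can realize this ``utility penalty'' $T(v)$ not with monetary transfers but with action variance: choose for each $v$ a lottery $m(v)$ with mean $\alpha(v)$ and variance $2T(v)$ — e.g. a two-point distribution on $\alpha(v)\pm\sqrt{2T(v)}$, or on $\{0,\text{something}\}$ near the boundary — so that $\E_{m(v)}[av-a^2/2]=v\alpha(v)-\alpha(v)^2/2-T(v)$. Then $\mathbb V(v)=v\alpha(v)-\alpha(v)^2/2-T(v)=\int_0^v\alpha(s)\,ds=\int_0^v\E_{m(s)}[a]\,ds$, so \eqref{eq:stochastic_envelope_a} holds; monotonicity of $\E_{m(v)}[a]=\alpha(v)$ holds since $\alpha\in\mathcal A$; and $m(0)=\delta_0$ since $\alpha(0)=0$ (so $T(0)=0$). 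Hence $m\in\mathcal S$ and $m$ is feasible for \eqref{e:relaxed2a}.

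**Then** I would compute the objective. Under \autoref{LQ}, for any lottery supported in a neighborhood where $u''\equiv-\kappa$ (which we can arrange, restricting attention to actions in $[0,1]$ as justified at the start of \autoref{s:proofs}), a second-order Taylor / Jensen-gap computation gives $\E_{m(v)}[u(a)] = u(\alpha(v)) - \tfrac{\kappa}{2}\,\mathrm{Var}_{m(v)}(a) = u(\alpha(v)) - \kappa\, T(v)$. Integrating against $dF$,
\[
\int_0^1 \E_{m(v)}[u(a)]\,dF(v) = \int_0^1\Big(u(\alpha(v)) - \kappa\big[v\alpha(v)-\tfrac{\alpha(v)^2}{2}-\textstyle\int_0^v\alpha(s)\,ds\big]\Big)dF(v) = \Obj(\alpha),
\]
which is precisely the objective of \eqref{e:relaxeda} evaluated at $\alpha$. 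Since $\alpha^*$ is deterministic and IC (it solves \eqref{e:relaxed2a}, so it is feasible there, hence satisfies \eqref{eq:stochastic_envelope_a} with equality, so all slacks vanish), $\Obj(\alpha^*) = \int_0^1 u(\alpha^*(v))\,dF(v)$ is both its \eqref{e:relaxeda}-value and its \eqref{e:relaxed2a}-value. Thus $\int\E_{m(v)}[u(a)]\,dF(v) = \Obj(\alpha) > \Obj(\alpha^*) = \int u(\alpha^*(v))\,dF(v)$, so $m$ strictly beats $\alpha^*$ in \eqref{e:relaxed2a} — contradiction. Therefore no such $\alpha$ exists, i.e. $\alpha^*$ solves \eqref{e:relaxeda}.

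**The main obstacle** is the clean construction of the lottery family $v\mapsto m(v)$: one must ensure the lotteries have the exact prescribed mean \emph{and} variance $2T(v)$, stay (after the initial reduction) in a region where $u''=-\kappa$ so that the Jensen gap is exactly $\kappa T(v)/2$ and not merely bounded by it, remain in $M_0(\R)$, and depend measurably on $v$; the two-point construction $\{\alpha(v)-\sqrt{2T(v)},\,\alpha(v)+\sqrt{2T(v)}\}$ with equal weights does all of this, with a boundary caveat near $v$ where $\alpha(v)$ is close to $0$ or $1$ that is handled by the reduction to types in $[0,1]$ and actions in $[0,1]$ (or by an asymmetric two-point lottery), and measurability follows since $\alpha$ and $T$ are measurable. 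A secondary subtlety is verifying that $\alpha$ together with the utilities $\mathbb V(v)$ just defined is genuinely IC — this is exactly \citepos{MS:02} characterization invoked in footnote~\ref{fn:envelope}, applicable because $\mathbb V$ is convex (being a pointwise max of the affine-in-$v$ functions $v\mapsto \E_{m(v')}[av-a^2/2]$) once one checks the envelope identity, which we did.
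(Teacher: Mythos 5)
Your overall strategy is the same as the paper's: argue by contradiction, convert the slack $T(v)= v\alpha(v)-\alpha(v)^2/2-\int_0^v\alpha(s)\,ds\ge 0$ into action variance to build an IC--IR stochastic mechanism with mean $\alpha(v)$, and use the fact that under \autoref{LQ} the map $a\mapsto u(a)+\kappa a^2/2$ is affine on $(-\infty,1]$ so that the Jensen gap is exactly $\kappa T(v)$ and the value of $m$ in \eqref{e:relaxed2a} equals $\Obj(\alpha)$. The IC verification via the envelope characterization and the final chain of (in)equalities are fine.

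There is, however, one genuine gap: the types $v$ with $\alpha(v)=1$ but $T(v)>0$. Such $\alpha$ are feasible for \eqref{e:relaxeda} (e.g.\ $\alpha=0$ below $0.9$ and $\alpha=1$ above gives $T(v)=0.4>0$ on $[0.9,1]$), and for them \emph{no} lottery with mean $1$ supported in $(-\infty,1]$ has positive variance --- the only such lottery is $\delta_1$. Your proposed fixes do not cover this case: the reduction to actions in $[0,1]$ concerns the original mechanism-design problem, not the auxiliary $\alpha$ from \eqref{e:relaxeda}, and an ``asymmetric two-point lottery'' with mean $1$ and support below $1$ degenerates to $\delta_1$. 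If instead you let the lottery's support cross $a=1$, the kink of $u$ there (present whenever $\gamma<1$) makes $u(a)+\kappa a^2/2$ strictly concave across $1$, so you only get $\E_{m(v)}[u(a)]\le u(\alpha(v))-\kappa T(v)$, which runs the wrong way and destroys the contradiction. The paper closes exactly this hole with its Claim~1: before constructing lotteries it replaces $\alpha$ by a $\tilde\alpha\le 1$ whose slack vanishes wherever $\tilde\alpha(v)=1$, and shows (via the auxiliary quasilinear transfers, the inequalities $\alpha\le\tilde\alpha\le 1$ and $\tilde t(v)-\tilde t(0)\le t(v)-t(0)$) that this substitution weakly raises the \eqref{e:relaxeda}-objective. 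You need this preprocessing step, or an equivalent argument, for your construction to go through; everything else in your write-up then matches the paper's proof.
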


\begin{proof}
The proof is by contraposition: assuming there exists $\alpha\in \mathcal{A}$ that is feasible for \eqref{e:relaxeda} and achieves a strictly higher objective value in \eqref{e:relaxeda} than $\alpha^*$, we will construct a solution to \eqref{e:relaxed2a} that achieves a strictly higher objective value than $\alpha^*$.

\medskip

\textbf{Claim 1}: There exists $\tilde{\alpha}\in \mathcal{A}$ that is feasible for \eqref{e:relaxeda}, satisfies $\tilde{\alpha}(v)\le 1$ for all $v$, $v \tilde{\alpha}(v) -\frac{\tilde{\alpha}(v)^2}{2} - \int_0^v \tilde{\alpha}(s)\mathrm  ds =0$ for all $v$ such that $\tilde{\alpha}(v)=1$, and achieves a  weakly higher objective value in problem \eqref{e:relaxeda} than $\alpha$.

\medskip

We can assume $\alpha(v)\le 1$ since $u$ is decreasing above $1$. Now suppose instead that $v \alpha(v) -\frac{\alpha(v)^2}{2} - \int_0^v \alpha(s)\mathrm  ds >0$ for some $v$ such that $\alpha(v)=1$. Consider an auxiliary setting in which a principal chooses a pair of functions $(\alpha,t)$ and an agent with type $v$ gets utility $v \alpha(v)-\frac{\alpha(v)^2}{2}-t(v)$. Since $\alpha$ is monotonic, it follows from standard arguments that there exist transfers $t:[0,1]\rightarrow \R$ such that $(\alpha,t)$ is incentive compatible in the auxiliary setting \citep[e.g.,][]{AB:13}. For all $v$, these transfers satisfy $t(v)-t(0)=v \alpha(v)-\frac{\alpha(v)^2}{2} - \int_0^v \alpha(s)\mathrm  ds \ge 0$, where the inequality holds because $\alpha$ is feasible for \eqref{e:relaxeda}.
Define $(\tilde{\alpha},\tilde{t})$ by setting $(\tilde{\alpha}(v),\tilde{t}(v))=(\alpha(v),t(v))$ or $(\tilde{\alpha}(v),\tilde{t}(v))=(1,t(0))$, whichever gives an agent with type $v$ higher expected utility (and choosing the latter if type $v$ is indifferent). Note that $\tilde{t}(0)=t(0)$, which together with $t(v)\ge t(0)$ implies $\tilde{t}(v)-\tilde{t}(0)\ge 0$, with equality for any $v$ such that $\tilde{\alpha}(v)=1$.

Observe that $(\tilde{\alpha},\tilde{t})$ corresponds to an incentive compatible direct mechanism: indeed, if type $v$ strictly prefers $(\tilde{\alpha}(v'),\tilde{t}(v'))$ to $(\tilde{\alpha}(v),\tilde{t}(v))$ then $v$ also strictly prefers $(\alpha(v'),t(v'))$ to $(\alpha(v),t(v))$, contradicting the assumption that $(\alpha,t)$ is incentive compatible. 
It follows from the standard characterization of incentive compatible mechanisms that $\tilde{\alpha}$ is increasing, and
\[v \tilde{\alpha}(v) -\frac{\tilde{\alpha}(v)^2}{2} - \int_0^v \tilde{\alpha}(s)\mathrm  ds = \tilde{t}(v)-\tilde{t}(0)\ge 0, \]
with the inequality holding as equality for $v$ such that $\tilde{\alpha}(v)=1$.

Finally, note that $\alpha(v)\le \tilde{\alpha}(v)\le 1$ for all $v$. Also, $\tilde{t}(v)-\tilde{t}(0)\le t(v)-t(0)$, which implies 
\[ v \tilde{\alpha}(v) -\frac{\tilde{\alpha}(v)^2}{2} - \int_0^v \tilde{\alpha}(s)\mathrm  ds \le v \alpha(v) -\frac{\alpha(v)^2}{2} - \int_0^v \alpha(s)\mathrm  ds. \]
It follows that $\tilde{\alpha}$ achieves a weakly higher objective value in problem \eqref{e:relaxeda}. \marker

\medskip

\textbf{Claim 2}:
Let $\tilde{\alpha} \in \mathcal{A}$ be feasible for \eqref{e:relaxeda} and satisfy $\tilde{\alpha}(v)\le 1$  and $v \tilde{\alpha}(v)-\tilde{\alpha}(v)^2/2 - \int_0^v \tilde{\alpha}(s) \mathrm ds =0$ for all $v$ such that $\tilde{\alpha}(v)=1$. There is a stochastic mechanism $m$ such that, for all $v$, $\Prob_{m(v)}(a \le 1)=1$, $\E_{m(v)}[a] = \tilde{\alpha}(v)$, and $\E_{m(v)}\left[v a -\frac{a^2}{2}\right] -\int_0^v \E_{m(s)}[a]  \mathrm ds = 0 $.

\medskip

Intuitively, this is because Vetoer's utility function is quadratic and we can use noise as a substitute for transfers. We provide an explicit construction of the mechanism $m$ below.

For any $v$ such that $\tilde{\alpha}(v)=1$, define $m(v)$ to put mass 1 on action 1. Now fix arbitrary $v$ such that $\tilde{\alpha}(v)<1$ and arbitrary $d\in (-\infty,0]$ and let $t_1(d)=\frac{1- \tilde{\alpha}(v)}{1-d}$. Then $t_1(d)d + (1-t_1(d))1=\tilde{\alpha}(v)$ for all $d$. Moreover, for any $r\in \Reals$ we can choose $d\in(-\infty,0]$ small enough such that
\[ -\frac{1-\tilde{\alpha}(v)}{1-d} d^2- \left( 1- \frac{1-\tilde{\alpha}(v)}{1-d} \right) \le r \]
because the LHS $\to -\infty$ as $d\rightarrow -\infty$. Hence, by choosing $d$ small enough we get
\[v \tilde{\alpha}(v)-t_1(d)\frac{d^2}{2}-(1-t_1(d))\frac{1}{2}-\int_0^v \tilde{\alpha}(s)\mathrm  ds\le 0.\]
Given $v\in[0,1]$ and $t_2\in [0,1]$, we define $m(v)$ to put probability $t_2$ on action $\tilde{\alpha}(v)$, probability $(1-t_2)t_1(d)$ on action $d$, and probability $(1-t_2)(1-t_1(d))$ on action $1$. 
It follows from the above that $m(v)$ satisfies $\E_{m(v)}[a]=\tilde{\alpha}(v)$, $\Prob_{m(v)}(a\le 1)=1$, and we can choose $t_2\in [0,1]$ such that
\[\E_{m(v)}\left[v a -\frac{a^2}{2}\right] -\int_0^v \E_{m(s)}[a]  \mathrm ds = 0.\]
Defining $m(v)$ in this way for all $v$ such that $\tilde{\alpha}(v)<1$, the claim follows. \marker

\medskip

We conclude that $m$ is feasible for \eqref{e:relaxed2a}. Therefore,
\begin{align}
\int_0^1 u(\alpha^*(v))\mathrm dF(v)=&\int_0^1 \left(u(\alpha^*(v)) - \kappa\left[ v \alpha^*(v)-\frac{\alpha^*(v)^2}{2} - \int_0^v \alpha^*(s)\mathrm ds \right] \right)\mathrm dF(v) \nonumber\\
<&\int_0^1 \left(u(\alpha(v)) - \kappa\left[ v \alpha(v)-\frac{\alpha(v)^2}{2} - \int_0^v \alpha(s)\mathrm ds \right] \right)\mathrm dF(v) \nonumber\\
\le & \int_0^1 \left(u(\tilde{\alpha}(v)) - \kappa\left[ v \tilde{\alpha}(v)- \frac{\tilde{\alpha}(v)^2}{2} - \int_0^v \tilde{\alpha}(s)\mathrm ds \right]\right) \mathrm dF(v) \label{e:ineq}
\end{align}
where the equality holds because $\alpha^*$ is feasible for \eqref{e:relaxed2a}, the first inequality holds because we assume that $\alpha$ achieves a strictly higher value than $\alpha^*$, and the second inequality holds by Claim 1.

Under \autoref{LQ}, 
$\kappa\equiv \inf_{v\in[0,1)} -u^{\prime \prime}(v)=2\gamma$. Hence, for any $a,b\le 1$ and $\lambda\in[0,1]$, some algebra shows that
\[u\left(\lambda a+(1- \lambda)b\right)+\kappa \frac{\left[\lambda a+(1- \lambda)b\right]^2}{2} = \lambda\left[u(a)+\kappa \frac{a^2}{2}\right]+(1- \lambda)\left[u(b)+\kappa \frac{b^2}{2}\right].\]
 Since $\Prob_{m(v)}(a\le 1)=1$ and $\E_{m(v)}[a]=\tilde{\alpha}(v)$ for all $v$, 
expression \eqref{e:ineq} therefore equals
\begin{align*}
\int_0^1 \left(\E_{m(v)}\left[u(a) - \kappa\left( v a - \frac{a^2}{2}\right) \right]+ \kappa\int_0^v \E_{m(s)}[a] \mathrm ds\right) \mathrm dF(v).
\end{align*}
Since $m$ is feasible for \eqref{e:relaxed2a}, this expression equals $\int_0^1 \E_{m(v)}[u(a)] \mathrm dF(v)$. 
This contradicts the assumption that $\alpha^*$ solves \eqref{e:relaxed2a}, and we conclude that $\alpha^*$ solves \eqref{e:relaxeda}.
\end{proof}

Let $\phi$ denote the objective function in \eqref{e:relaxeda}. The set of feasible solutions for \eqref{e:relaxeda} is convex, and optimality of $\alpha$ therefore implies $\partial \phi(\alpha,\overline{\alpha}-\alpha)\le 0$ for any $\overline{\alpha}\in \mathcal{A}$ that is feasible for \eqref{e:relaxeda}. Recall the assumption $F(1)=1$.
\begin{align}
 \phi(\alpha) &= \int_0^1 \left(u(\alpha(v)) - \kappa\left[v \alpha(v) -\frac{\alpha(v)^2}{2} - \int_0^{v} \alpha(s)\mathrm  ds\right]\right)\mathrm  dF(v),\nonumber\\
 \partial \phi(\alpha,\overline{\alpha}-\alpha)&= \int_0^1 \left([u'(\alpha(v)) - \kappa[v  -\alpha(v)]](\overline{\alpha}(v)-\alpha(v)) + \kappa \int_0^{v} \overline{\alpha}(s)-\alpha(s)\mathrm  ds\right) \mathrm  dF(v)\nonumber\\
 &= \int_0^1 \left[u'(\alpha(v)) - \kappa\left[v  -\alpha(v)-\frac{1-F(v)}{f(v)}\right]\right](\overline{\alpha}(v)-\alpha(v))\mathrm  dF(v).\label{eq:gateaux_nes1}
\end{align}

\begin{lemma}\label{lemma:interval_nes}
Suppose \autoref{LQ} holds. If a delegation set containing the interval $[a,b]\subseteq [0,1]$ is optimal, then $\kappa F(v)-u'(v)f(v)$ is increasing on $[a,b]$.
\end{lemma}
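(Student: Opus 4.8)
The plan is to prove the contrapositive via the first-order condition for problem~\eqref{e:relaxeda}, using the ``pruning'' perturbation from the discussion after \autoref{prop:full}. Suppose a delegation set $A\supseteq[a,b]$ is optimal, i.e., $\alpha^*:=\alpha_A$ solves~\eqref{e:relaxed2a}. Since $[a,b]\subseteq A$, every type $v\in[a,b]$ can select its own ideal point, so $\alpha^*(v)=v$ on $[a,b]$. Under \autoref{LQ}, \autoref{l:necessity_linear} gives that $\alpha^*$ also solves the relaxed deterministic problem~\eqref{e:relaxeda}; because that problem has a convex feasible set and (as noted in the text) a concave objective $\phi$, optimality forces
\[
\partial\phi(\alpha^*,\overline{\alpha}-\alpha^*)\le 0 \qquad\text{for every }\overline{\alpha}\in\mathcal{A}\text{ feasible for }\eqref{e:relaxeda},
\]
where $\partial\phi$ is the Gateaux differential in~\eqref{eq:gateaux_nes1}. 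Write $\psi(v):=\kappa F(v)-u'(v)f(v)$; under \autoref{LQ} the derivative $u'$ is affine and $f\in C^1$, so $\psi$ is $C^1$ on $[0,1)$ and continuous on $[0,1]$.

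Assume, toward a contradiction, that $\psi$ fails to be increasing on $[a,b]$ (so in particular $a<b$). Then there exist $v_1<v_2$ in $[a,b]$ with $\psi(v_1)>\psi(v_2)$, and, replacing $v_2$ by a slightly smaller point if necessary, we may take $v_2<1$. Since $\psi'\ge 0$ everywhere on $[v_1,v_2]$ would imply $\psi(v_2)\ge\psi(v_1)$, continuity of $\psi'$ on $[0,1)$ produces a nondegenerate closed subinterval $[\underline a,\overline a]\subseteq[v_1,v_2]\subseteq[a,b]$ on which $\psi$ is strictly decreasing. Let $\overline{\alpha}:=\alpha_{A\setminus(\underline a,\overline a)}$ be the action rule induced by the pruned delegation set $A\setminus(\underline a,\overline a)$, which is still closed and nonempty. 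Because $[a,b]\subseteq A$, deleting the open interval $(\underline a,\overline a)$ leaves unchanged the action chosen by every type outside $(\underline a,\overline a)$, while each type $v\in(\underline a,\overline a)$ now picks the nearer endpoint: $\underline a$ if $v<p:=(\underline a+\overline a)/2$ and $\overline a$ if $v>p$. Being induced by a delegation set, $\overline{\alpha}$ is incentive compatible and individually rational, hence lies in $\mathcal{A}$ and satisfies the constraint of~\eqref{e:relaxeda} as an equality; in particular it is feasible.

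Since $\alpha^*(v)=v$ on $[\underline a,\overline a]$ and $\overline{\alpha}-\alpha^*$ vanishes outside $(\underline a,\overline a)$, formula~\eqref{eq:gateaux_nes1} reduces, using $\mathrm dF=f\,\mathrm dv$, to
\[
\partial\phi(\alpha^*,\overline{\alpha}-\alpha^*)=\int_{\underline a}^{\overline a}\bigl[u'(v)f(v)+\kappa(1-F(v))\bigr]\bigl(\overline{\alpha}(v)-v\bigr)\,\mathrm dv=\int_{\underline a}^{\overline a}\bigl[\kappa-\psi(v)\bigr]\bigl(\overline{\alpha}(v)-v\bigr)\,\mathrm dv.
\]
A direct computation gives $\int_{\underline a}^{\overline a}\bigl(\overline{\alpha}(v)-v\bigr)\,\mathrm dv=0$, since the mass moved down on $(\underline a,p)$ exactly offsets the mass moved up on $(p,\overline a)$, so the $\kappa$ term drops out; the substitutions $v=\underline a+t$ on $(\underline a,p)$ and $v=\overline a-t$ on $(p,\overline a)$ then yield
\[
\partial\phi(\alpha^*,\overline{\alpha}-\alpha^*)=\int_0^{(\overline a-\underline a)/2} t\,\bigl[\psi(\underline a+t)-\psi(\overline a-t)\bigr]\,\mathrm dt.
\]
For $t\in(0,(\overline a-\underline a)/2)$ we have $\underline a+t<\overline a-t$, so strict decrease of $\psi$ on $[\underline a,\overline a]$ makes the integrand strictly positive, whence $\partial\phi(\alpha^*,\overline{\alpha}-\alpha^*)>0$, contradicting the first-order condition above. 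Therefore $\psi$ is increasing on $[a,b]$. (If $b=1$, run the same argument on $[a,b']$ for a suitable $b'<1$, noting $[a,b']\subseteq A$ and that $\psi$ increasing on all such $[a,b']$ gives, by continuity, $\psi$ increasing on $[a,1]$.)

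I expect the main obstacle to be the feasibility bookkeeping in the middle step: verifying that $\alpha_{A\setminus(\underline a,\overline a)}$ perturbs $\alpha^*$ only on $(\underline a,\overline a)$ --- which requires a short case check, using $[a,b]\subseteq A$, that no type outside $(\underline a,\overline a)$ changes its choice when the open interval is deleted --- and that, being the action rule of a delegation set, it is automatically feasible for~\eqref{e:relaxeda} with the inequality constraint binding. The sign calculation in the last step is merely the local version of the second-order stochastic dominance comparison recorded in fn.~\ref{fn:SOSD}, and the passage from optimality among stochastic mechanisms to optimality in~\eqref{e:relaxeda} is exactly \autoref{l:necessity_linear}, so neither of those requires a new idea.
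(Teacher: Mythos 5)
Your proof is correct, and its skeleton is exactly the paper's: argue by contraposition, invoke \autoref{l:necessity_linear} to pass from optimality in \eqref{e:relaxed2a} to optimality in the relaxed deterministic problem \eqref{e:relaxeda}, use convexity of the feasible set to get the first-order condition $\partial\phi(\alpha^*,\overline{\alpha}-\alpha^*)\le 0$, and then exhibit a feasible direction violating it. Where you differ is in the choice of perturbation, and the difference is substantive. The paper's proof sets $\overline{\alpha}(v)=d$ for \emph{all} $v\in[d,e]$ (collapsing the subinterval to its left endpoint); plugging $\alpha^*(v)=v$ into \eqref{eq:gateaux_nes1}, that direction gives
\[
\partial\phi(\alpha^*,\overline{\alpha}-\alpha^*)=\int_d^e\bigl[u'(v)f(v)+\kappa(1-F(v))\bigr](d-v)\,\mathrm dv,
\]
which is nonpositive regardless of any monotonicity hypothesis, since the bracket is nonnegative on $[0,1]$ and $d-v\le 0$; as literally written, that perturbation therefore does not produce the claimed contradiction. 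Your perturbation — pruning $(\underline a,\overline a)$ so that types split at the midpoint between the two surviving endpoints — is the one that actually works: the linear-in-$\kappa$ term integrates to zero by the symmetry of the split, and the folding substitution isolates $\int_0^{(\overline a-\underline a)/2}t[\psi(\underline a+t)-\psi(\overline a-t)]\,\mathrm dt>0$ when $\psi=\kappa F-u'f$ is strictly decreasing. This is precisely the local version of the second-order-stochastic-dominance comparison in fn.~\ref{fn:SOSD} and matches the heuristic discussion after \autoref{prop:full}, so your version both repairs the sign issue and connects the necessity argument to the stated intuition. Your remaining bookkeeping (feasibility of the pruned rule with the constraint binding, invariance of actions outside $(\underline a,\overline a)$ because $\underline a,\overline a\in A$, the $C^1$ extraction of a strictly decreasing subinterval, and the limiting treatment of $b=1$) is all sound.
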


\begin{proof}
Suppose a delegation set containing the interval $[a,b]$ is optimal, and let $\alpha$ denote the corresponding allocation rule.
Suppose to the contrary that $\kappa F(v)-u'(v)f(v)$ is strictly decreasing for some $v\in[a,b]$. Then $u'(v)f(v) - \kappa f(v)[-\frac{1-F(v)}{f(v)}]$ is strictly increasing on some interval $[d,e]\subset [a,b]$ with $d\neq e$ that contains $v$. 

Set $\overline{\alpha}(v)=\alpha(v)$ for $v\notin [d,e]$ and $\overline{\alpha}(v)=d$ for $v\in[d,e]$. Then $\overline{\alpha}\in \mathcal{A}$ and it is feasible for \eqref{e:relaxeda}. Since $\alpha(v)=v$ for $v\in [d,e]$, it follows from \eqref{eq:gateaux_nes1} that $\partial \phi(\alpha,\overline{\alpha}-\alpha)>0$, a contradiction.
\end{proof}

\begin{proof}[Proof of the necessity part of \autoref{prop:full}]
Suppose \autoref{LQ} holds. The result follows directly from \autoref{lemma:interval_nes}.
\end{proof}

\begin{proof}[Proof of the necessity part of \autoref{prop:no_compromise}]
Suppose \autoref{LQ} holds and let $\alpha\in\mathcal{A}$ be the action rule induced by the delegation set $
\{0,1\}$. Fix $s\in[0,1/2)$ and $t\in(1/2,1]$, $\varepsilon \in(0,1)$ and define
\begin{align*}
\overline{\alpha}_{\varepsilon}(v):= \begin{cases}
0 & \text{ if }v\in[0,s)\\
\varepsilon & \text{ if } v\in [s,1/2)\\
1-\frac{1/2-s}{t-1/2}\varepsilon & \text{ if } v\in [1/2,t)\\
1 & \text{ if } v\in [t,1].
\end{cases}
\end{align*}
Note that
\begin{align*}
 \int_s^{1/2} \left[v-\frac{1-F(v)}{f(v)}\right] \mathrm d F(\theta)&=s[1-F(s)]-1/2[1-F(1/2)], \text{ and }\\
 \int_{1/2}^t \left[v-1-\frac{1-F(v)}{f(v)}\right] \mathrm d F(\theta)&= F(t)(t-1)-F(1/2)(1/2-1)-t+1/2.
\end{align*}
It follows that 
\begin{align*}
\partial \phi(\alpha,\overline{\alpha}_{\varepsilon}- \alpha) = &\varepsilon \int_s^{1/2} \left(u'(0) - \kappa\left[v-\frac{1-F(v)}{f(v)}\right]\right)\mathrm dF(v)\\& -\frac{1/2-s}{t-1/2}\varepsilon \int_{1/2}^{t} \left(u'(1)- \kappa \left[ v-1-\frac{1-F(v)}{f(v)} \right]\right)\mathrm d F(v)\\
= &\varepsilon (1/2-s) \left([u'(0)-\kappa s]\frac{F(1/2)-F(s)}{1/2-s}+\kappa[1-F(1/2)]\right)\\
&-\varepsilon\frac{1/2-s}{t-1/2} [u'(1)+\kappa(1-t)] \left(F(t)-F(1/2)+ \kappa [(1/2-t)F(1/2)+t-1/2]\right)\\
= &\varepsilon (1/2-s) \left([u'(0)-\kappa s]\frac{F(1/2)-F(s)}{1/2-s} - [u'(1)+\kappa(1-t)] \frac{F(t)-F(1/2)}{t-1/2}\right).
\end{align*}
Therefore, $\partial \phi(\alpha,\overline{\alpha}_{\varepsilon}- \alpha)\le 0$ for all $\varepsilon>0,\ s\in[0,1/2)$ and $t\in(1/2,1]$ if and only if the condition in \autoref{prop:no_compromise} holds.
\end{proof}

\begin{proof}[Proof of the necessity part of \autoref{prop:interval}]
Suppose \autoref{LQ} holds and $c^*\in (0,1)$. Let $\alpha\in\mathcal{A}$ be the action rule induced by the delegation set $[c^*,1]$. We prove necessity of each condition in \autoref{prop:interval} in order.

\underline{Condition \ref{interval1}}: This follows from \autoref{lemma:interval_nes}.

\underline{Condition \ref{interval2}}: Fix $t\in (c^*/2,c^*)$ and $\varepsilon >0$. Let $a(\varepsilon)$ be the positive solution to $(c^*-t)a+a^2/2 = \varepsilon (t-c^*/2)$ and define $\overline{\alpha}_{\varepsilon}$ by
\begin{align*}
 \overline{\alpha}_{\varepsilon}(v)= \begin{cases}
 \alpha(v) &\text{ if } v\notin [c^*/2,c^*+a(\varepsilon)]\\
c^* - \varepsilon &\text{ if } v\in [c^*/2,t)\\
 c^* + a(\varepsilon) &\text{ if } v\in [t,c^*+a(\varepsilon)].
 \end{cases}  
 \end{align*}  
 Note that for any $\varepsilon>0$ small enough (so that $c^*-\varepsilon>c^*/2$ and $c^*+a(\varepsilon)<1$), $\overline{\alpha}_{\varepsilon}\in \mathcal{A}$. By definition of $a(\varepsilon)$, for any $v\in[0,1]$, $v\overline{\alpha}_{\varepsilon}(v)-\frac{[\overline{\alpha}_{\varepsilon}(v)]^2}{2}-\int_0^v\overline{\alpha}_{\varepsilon}(s)\mathrm ds \ge0$ and, hence, $\overline{\alpha}_{\varepsilon}$ is feasible for \eqref{e:relaxeda}. Therefore, if $\alpha$ is optimal, $\partial \phi(\alpha,\overline{\alpha}_{\varepsilon}-\alpha)\le0$. 
 
Note that
 \begin{align*}
 \partial \phi(\alpha,\overline{\alpha}_{\varepsilon}-\alpha) =  &- \varepsilon \int_{c^*/2}^{t} \left(u'(\alpha(v)) f(v) - \kappa f(v)\left[v-c^*-\frac{1-F(v)}{f(v)}\right]\right)\mathrm  dv \\
 &+ a(\varepsilon) \int_{t}^{c^*} \left(u'(\alpha(v))f(v) - \kappa f(v)\left[v-c^*-\frac{1-F(v)}{f(v)}\right]\right)\mathrm dv \\
 &+ \int_{c^*}^{c^*+a(\varepsilon)} (c^*+a(\varepsilon)-v) \left(u'(\alpha(v)) f(v) - \kappa f(v) \left[-\frac{1-F(v)}{f(v)}\right]\right) \mathrm dv.
 \end{align*}
By the implicit function theorem, $\lim_{\varepsilon\rightarrow 0}\frac{a(\varepsilon)}{\varepsilon}=\frac{t-c^*/2}{c^*-t}$. It follows that the last integral is of order $o(\varepsilon)$. Also, integration by parts implies that for $x,y\in\R$, $\int_x^y (f(v)(v-c^*)-[1-F(v)])\mathrm dv=F(y)(y-c^*)-F(x)(x-c^*)-y+x$. We conclude 
\begin{align*}
\lim_{\varepsilon \to 0_+} \frac{\partial\phi(\alpha,\overline{\alpha}_{\varepsilon}-\alpha)}{\varepsilon} 
 =& -u'(c^*)[F(t)-F(c^*/2)] + \kappa [F(t)(t-c^*)+F(c^*/2)(c^*-c^*/2)-t+c^*/2] \\
 &+ \frac{t-c^*/2}{c^*-t}\left[ u'(c^*)(F(c^*)-F(t)) - \kappa (c^*-t) (F(t)-1)  \right]\\
 =& - \frac{c^*-c^*/2}{c^*-t}u'(c^*)[F(t)-F(c^*/2)] + \kappa (c^*-c^*/2) [F(c^*/2)-F(t)]\\ 
 &+ \frac{t-c^*/2}{c^*-t} u'(c^*)(F(c^*)-F(c^*/2)) \\
  =&\frac{(c^*-c^*/2)(t-c^*/2)}{c^*-t}\\
  &  \times \left\{ - [u'(c^*)+\kappa (c^*-t)]\frac{F(t)-F(c^*/2)}{t-c^*/2} +  u'(c^*)\frac{F(c^*)-F(c^*/2)}{c^*-c^*/2}\right\}.
\end{align*}
Since $\partial \phi(\alpha,\overline{\alpha}_{\varepsilon}-\alpha)\le0$ for all $\varepsilon>0$, the last expression is negative for all $t\in(c^*/2,c^*)$, which implies condition \ref{interval2}.

\underline{Condition \ref{interval3}}: Fix $s\in[0,c^*/2)$ and $\varepsilon >0$. Let
\begin{align*}
 \overline{\alpha}_{\varepsilon}(v)= \begin{cases}
 0 &\text{ for } v<s\\
 \varepsilon &\text{ if } v\in [s,c^*/2)\\
 c^* - a(\varepsilon) &\text{ if } v\in [c^*/2,c^*-a(\varepsilon))\\
 v &\text{ if } v\ge c^*-a(\varepsilon),
  \end{cases}  
\end{align*}
 where $a(\varepsilon)\ge 0$ satisfies 
 \begin{align*}
 & c^*/2(c^*-a(\varepsilon))-(c^*-a(\varepsilon))^2/2-(c^*/2-s)\varepsilon=0\\
\iff & (c^*-c^*/2) a(\varepsilon) - (a(\varepsilon))^2=(c^*/2-s)\varepsilon.
 \end{align*}
For $\varepsilon$ small enough, there is a real solution $a(\varepsilon)\ge 0$, and a simple calculation shows that $\overline{\alpha}_\varepsilon$ is feasible for \eqref{e:relaxeda}.
Also, note that $\lim_{\varepsilon\rightarrow 0} \frac{a(\varepsilon)}{\varepsilon} = \frac{c^*/2-s}{c^*-c^*/2}$.

It follows from \eqref{eq:gateaux_nes1} that
 \begin{align*}
 \partial \phi(\alpha, \overline{\alpha}_{\varepsilon}- \alpha) = &\varepsilon\left[ \int_{s}^{c^*/2} u'(\alpha(v)) - \kappa\left[ v-\frac{1-F(v)}{f(v)} \right]  \mathrm dF(v) \right]\\
&- a(\varepsilon) \left[\int_{c^*/2}^{c^*} u'(\alpha(v)) - \kappa\left[ v-c^* -\frac{1-F(v)}{f(v)} \right] \mathrm dF(v) \right] + o(\varepsilon).
 \end{align*}
 Using integration by parts, we conclude
 \begin{align*}
 \lim_{\varepsilon \to 0_+} \frac{\partial\phi(\alpha,\overline{\alpha}_{\varepsilon}-\alpha)}{\varepsilon} &= u'(0) [F(c^*/2)-F(s)] - \kappa[c^*/2F(c^*/2)-sF(s)-c^*/2+s]\\
 &\ \ - \frac{c^*/2-s}{c^*-c^*/2}\left[ u'(c^*)[F(c^*)-F(c^*/2)] +\kappa(c^*-c^*/2)[1-F(c^*/2)]\right] \\
 &= (c^*/2-s) \left\{[u'(0)-\kappa s] \frac{F(c^*/2)-F(s)}{c^*/2-s} -  u'(c^*)\frac{F(c^*)-F(c^*/2)}{c^*-c^*/2} \right\}\le 0,
 \end{align*}
 which yields condition \ref{interval3}.
\end{proof}

\section{Proofs of Corollaries \ref{c:decr}, \ref{c:risk}, and \ref{c:LQinterval}}

\subsection{Proof of \autoref{c:decr}}
Since $u$ is concave, $u'$ is decreasing on $[0,1]$. Recall $\kappa\geq 0$. Hence, if the type density $f$ is decreasing on $[0,1]$, then $\kappa F-u'f$ is increasing on $[0,1]$. The result follows from \autoref{prop:full}.

\subsection{Proof of \autoref{c:risk}}
As $\kappa F(v)- u'(v) f(v)$ is continuous on $[0,1]$, it is increasing on $[0, 1]$ if its derivative is positive for all $v \in [0,1)$.  The derivative is $(\kappa-u^{\prime \prime}(v)) f(v)-u'(v)f'(v)$, which is larger than $u^{\prime \prime}(v) f(v)-u'(v)f'(v) $. The latter function is positive for all $v \in [0,1)$ if 
\[
\inf_{v \in [0, 1)} \frac{-u^{\prime \prime} (v)}{u'(v)} \geq \sup_{v \in [0, 1)} \frac{f'(v)}{f(v)}.
\]  
The RHS above is finite since $f$ is continuously differentiable and strictly positive on $[0, 1]$. Therefore, $\kappa F(v)- u'(v) f(v)$ is increasing on $[0,1]$ when the LHS above is sufficiently large. The result follows from \autoref{prop:full}.

\subsection{Proof of \autoref{c:LQinterval}}

Assume \autoref{LQ}. We prove the result by establishing that (i) logconcavity of $f$ on $[0,1]$ ensures that the conditions of either \autoref{prop:no_compromise} or \autoref{prop:interval} are satisfied, and (ii) if $\gamma>0$ (equivalently, given \autoref{LQ}, $u$ is strictly concave) or $f$ is strictly logconcave on $[0,1]$, then among interval delegation sets there is a unique optimum.

As introduced in \autoref{s:statics}, Proposer's expected utility from delegating the interval $[c, 1]$ with $c\in[0,1]$ is:
\begin{equation}
\label{e:delegate}
W(c)\equiv u(0)F(c/2)+u(c)(F(c)-F(c/2))+\int_{c}^{1}u(v) f(v)\mathrm dv.
\end{equation}
As shorthand for the function in condition \ref{interval1} of \autoref{prop:interval}, define 
\begin{equation}
\label{e:G}
G(v):=\kappa F(v)- u'(v) f(v).
\end{equation}


We establish some properties of the $W$ and $G$ functions.

\begin{lemma} \label{l:qconvex}
Assume \autoref{LQ} and $f$ is logconcave on $[0,1]$. The functions $W$ and $G$ defined by \eqref{e:delegate} and \eqref{e:G} are respectively quasiconcave and quasiconvex on $[0,1]$, both strictly so if either $\gamma >0$ or $f$ is strictly logconcave on $[0,1]$. 
Furthermore, for any $c^* \in \arg\max_{c\in [0,1]} W(c)$, $G'(c^*/2) \leq 0$ if $c^*>0$ and $G'(c^*) \geq 0$ if $c^*<1$.
\end{lemma}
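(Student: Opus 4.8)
The plan is to prove the three assertions separately, reducing each to elementary scalar inequalities via the chord and tangent characterizations of logconcavity. Throughout write $h=c/2$ and use that, under \autoref{LQ}, $u'(v)=1+\gamma-2\gamma v>0$ on $[0,1)$, $u''\equiv-\kappa=-2\gamma$, $u(c)-u(0)=c(1+\gamma-\gamma c)=2h\,u'(h)$, $u'(c/2)=u'(c)+\gamma c$, and $\log f$ is $C^1$ and concave so that $(\log f)'=f'/f$ is continuous and decreasing. For $G$: differentiate to get $G'(v)=f(v)\psi(v)$ with $\psi(v):=\kappa-u''(v)-u'(v)(\log f)'(v)=4\gamma-u'(v)(\log f)'(v)$, so $\sign G'=\sign\psi$. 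I would then show $\psi$ is single-crossing from below on $[0,1]$: wherever $(\log f)'\le0$ one has $\psi\ge4\gamma\ge0$, while on the lower interval $\{(\log f)'>0\}$ the map $v\mapsto u'(v)(\log f)'(v)$ is a product of nonnegative decreasing functions, hence decreasing, so $\psi$ is increasing there; thus $\{\psi\ge0\}$ is an up-set, i.e. $G$ is quasiconvex. If $\gamma>0$ then $u'$ is strictly decreasing, making the product strictly decreasing on $\{(\log f)'>0\}$ and $\psi>0$ on $\{(\log f)'\le0\}$, which gives a single strict sign change; if $f$ is strictly logconcave then $(\log f)'$ is strictly decreasing with a single zero, handling the case $\gamma=0$ (where $\psi=-(\log f)'$) analogously. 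Either way $G$ is strictly quasiconvex.

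For the quasiconcavity of $W$, a direct computation gives $W'(c)=u'(c)[F(c)-F(c/2)]-\tfrac12 f(c/2)[u(c)-u(0)]$, and it suffices to show $W$ has no interior local minimum (a $C^1$ function on $[0,1]$ all of whose interior critical points are strict local maxima has at most one such point, hence is strictly increasing then strictly decreasing). So I would fix an interior critical point $c_0$, compute
\[
W''(c_0)=[F(c_0)-F(c_0/2)]\big(-2\gamma-\tfrac12 u'(c_0)(\log f)'(c_0/2)\big)+u'(c_0)\big[f(c_0)-f(c_0/2)\big],
\]
and rewrite the first-order condition as $\rho:=\frac{F(c_0)-F(c_0/2)}{(c_0/2)f(c_0/2)}=\frac{u'(c_0/2)}{u'(c_0)}\ge1$ (equality iff $\gamma=0$). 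Substituting, $W''(c_0)<0$ becomes the scalar inequality $r-1<\rho(\rho-1)+\tfrac12\rho\lambda h$, with $\lambda:=(\log f)'(c_0/2)$, $r:=f(c_0)/f(c_0/2)$. The inputs from logconcavity are the tangent bound $\rho\le\frac{e^{\lambda h}-1}{\lambda h}$ (so $\rho>1\Rightarrow\lambda>0$), the chord bound $\lambda h\ge\ln r$, and $\rho\ge\frac{r-1}{\ln r}$ (obtained by integrating the log-linear minorant of $f$ on $[c_0/2,c_0]$). With these, the case $r\le1$ is immediate and the case $r>1$ reduces to $e^x\ge1+x+\tfrac{x^2}{2}$ at $x=\ln r$. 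This yields strict quasiconcavity whenever $\gamma>0$ or $f$ is strictly logconcave; the remaining case ($\gamma=0$, $f$ logconcave but not strictly) follows by approximating $f$ with the strictly logconcave densities $f_\varepsilon\propto fe^{-\varepsilon v^2}$, since $W_\varepsilon\to W$ uniformly on $[0,1]$ and a uniform limit of quasiconcave functions is quasiconcave.

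For the final sentence, fix $c^*\in\argmax_{[0,1]}W$. If $c^*\in(0,1)$ the first-order condition gives $4\gamma=\frac{2u'(c^*)(\rho-1)}{h}$ with $h=c^*/2$, $\rho=u'(c^*/2)/u'(c^*)$; combining this with the tangent bound $\rho\le\frac{e^{\lambda h}-1}{\lambda h}$ ($\lambda=(\log f)'(h)$), monotonicity of $\rho\mapsto2-2/\rho$, and the Padé bound $e^y\le\frac{1+y/2}{1-y/2}$ on $[0,2)$, one gets $\tfrac{2(\rho-1)}{\rho}\le\lambda h$, which rearranges to $4\gamma\le u'(c^*/2)(\log f)'(c^*/2)$, i.e. $G'(c^*/2)\le0$. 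For $G'(c^*)\ge0$: if $(\log f)'(c^*)\le0$ it is immediate, and otherwise $r:=f(c^*)/f(c^*/2)>1$, the tangent bound at $c^*$ gives $(\log f)'(c^*)\cdot\tfrac{c^*}{2}\le\ln r$, the log-linear bound gives $\rho\ge\frac{r-1}{\ln r}$, the first-order condition gives $\frac{4\gamma(c^*/2)}{u'(c^*)}=2(\rho-1)$, and $e^x\ge1+x+\tfrac{x^2}{2}$ yields $\ln r\le2(\rho-1)$, hence $(\log f)'(c^*)\le\frac{4\gamma}{u'(c^*)}$. The boundary cases use one-sided optimality conditions: if $c^*=0$, a Taylor expansion shows the leading term of $W'(c)$ near $0$ is $-\tfrac{c^2}{8}G'(0)$, so $G'(0)\ge0$; if $c^*=1$ (then necessarily $\gamma<1$), $W'(1)\ge0$ gives $\rho\ge\frac1{1-\gamma}$ at $h=1/2$, which together with the tangent bound and the inequality $\frac{e^{2\gamma}-1}{2\gamma}\le\frac1{1-\gamma}$ gives $G'(1/2)\le0$.

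The main obstacle is the second-order analysis of $W$ and the closely related $G'$ inequalities: translating the first- and second-order conditions into scalar inequalities in $\rho,\lambda h,r$, and then checking that the chord and tangent characterizations of logconcavity, fed into the elementary bounds $e^x\ge1+x+\tfrac{x^2}{2}$ and $e^y\le\frac{1+y/2}{1-y/2}$, are exactly strong enough to close the argument. The quasiconvexity of $G$, the reduction "no interior minimum $\Rightarrow$ quasiconcave", and the approximation step for the non-strict case are routine.
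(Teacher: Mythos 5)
Your proposal is correct, and while your treatment of $G$ coincides with the paper's (your $\psi(v)=4\gamma-u'(v)(\log f)'(v)$ is exactly the paper's $\beta$, with the same monotone-then-nonnegative single-crossing argument), your treatment of $W$ and of the sign conditions on $G'$ takes a genuinely different route. The paper's proof hinges on the identity $W'(c)=\int_{c/2}^{c}(v-c)\,G'(v)\,\mathrm dv$: from the decreasing-then-increasing structure of $G$ this immediately delivers $G'(c^*/2)\le 0$ and $G'(c^*)\ge 0$ at any maximizer (their Step 3), and quasiconcavity of $W$ then follows from $W''(c)=\tfrac{c}{4}G'(c/2)-[G(c)-G(c/2)]$ together with an integration by parts showing $G(c/2)=\tfrac{2}{c}\int_{c/2}^{c}G$ at critical points and quasiconvexity of $G$ (their Step 4). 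You instead bypass this identity entirely, translating the first- and second-order conditions into scalar inequalities in $\rho=u'(c/2)/u'(c)$, $r=f(c)/f(c/2)$, and $\lambda h=(\log f)'(c/2)\cdot c/2$, and closing them with the chord and tangent consequences of logconcavity plus $e^{x}\ge 1+x+x^{2}/2$ and the Pad\'e bound $e^{y}\le\frac{1+y/2}{1-y/2}$; I have checked that each of these reductions (the formula for $W''$ at a critical point, the case split on $r$, and the two $G'$ inequalities including the boundary cases $c^*=0$ and $c^*=1$) is arithmetically correct. The trade-off: the paper's identity gives non-strict quasiconcavity directly and in a few lines, whereas your route is longer and needs the approximation by $f_\varepsilon\propto f e^{-\varepsilon v^{2}}$ to cover the degenerate case $\gamma=0$ with $f$ only weakly logconcave (where $W''$ can vanish at a critical point); on the other hand, your argument isolates precisely which consequences of logconcavity are doing the work --- only the chord and tangent bounds on $[c/2,c]$ --- which makes the role of the hypothesis more transparent and suggests the result would survive under weaker shape restrictions on $F(c)-F(c/2)$.
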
  
\begin{proof}
The proof proceeds in four steps. Throughout, we restrict attention to the domain $[0,1]$ for the type density. Step 1 shows that $G$ is (strictly) quasiconvex and that $\{v:G'(v)=0\}$ is connected. Step 2 shows that $W$ can be expressed in terms of $G'$. Step 3 establishes that given any maximizer $c^*$ of $W$, $G$ is decreasing on $[0, c^*/2]$ and increasing on $[c^*, 1]$. Step 4 establishes the (strict) quasiconcavity of $W$. Note that under \autoref{LQ}, 
$\kappa\equiv \inf_{v\in[0,1)} -u^{\prime \prime}(v)=2\gamma$, $u'(v)=1-\gamma+2\gamma(1-v)$, and hence $G(v)=2\gamma F(v)-(1-\gamma+2\gamma(1-v))f(v)$.

\underline{Step 1}: We first establish that $G$ is (strictly) quasiconvex and that $\{v:G'(v)=0\}$ is connected. Logconcavity of $f$ implies that its modes (i.e., maximizers) are connected, and moreover $f'(v)=0\implies$ $v$ is a mode. Denote by $\mode$ the smallest mode. Since
\begin{equation}
\label{e:G'}
G'(v)=4\gamma f(v)-(1-\gamma+2\gamma(1-v))f'(v),
\end{equation}
it holds that
$\sign G'(v)=\sign \beta(v),$
where
\begin{equation*}
\beta(v):=4\gamma-\frac{f'(v)}{f(v)}(1-\gamma+2\gamma(1-v)).
\end{equation*}
On the domain $[0,\mode)$, $f'/f$ is positive and decreasing by logconcavity. Furthermore, $1-\gamma+2\gamma(1-v)$ is positive and decreasing. As the product of positive decreasing functions is decreasing, $\beta$ is increasing on the domain $[0,\mode)$. Since $\beta(v)\geq 0$ when $v\geq \mode$, it follows that $\beta$ is upcrossing (once strictly positive, it stays positive), and hence $G$ is quasiconvex.

We claim $\{v:\beta(v)=0\}$ is connected, which implies the same about $\{v:G'(v)=0\}$.  If $\gamma=0$ then $\beta(v)=0 \iff f'(v)=0$, which is a connected set, as noted earlier. If $\gamma>0$, then the conclusion follows because $\beta$ is increasing on $[0,\mode)$, $\beta(v)>0$ for $v>\mode$ (as $f'(v)\leq 0$), and $\beta$ is continuous. Furthermore, analogous observations imply that if either $f$ is strictly logconcave or $\gamma>0$, then $|\{v:G'(v)=0\}|\leq 1$ and so $G$ is strictly quasiconvex.

\underline{Step 2}:  We now show that 
\begin{equation}
\label{e:h_int_general}
W'(c)=\int_{c/2}^{c}(v-c)G'(v)\mathrm dv.
\end{equation}

The derivation is as follows:
\begin{align*}
W'(c)=&(F(c)-F(c/2))(1+\gamma-2\gamma c)-\frac{c}{2}f(c/2)(1+\gamma-\gamma c)\\
=&(1+\gamma-2\gamma c)\left[\int_{c/2}^{c}f(v) \mathrm dv-\frac{c}{2}f(c/2)\right]-\gamma \frac{c^2}{2}f(c/2)\\
=&-(1+\gamma-2\gamma c)\int_{c/2}^{c}(v-c)f'(v) \mathrm dv-\gamma \frac{c^2}{2}f(c/2) \\
=&-\int_{c/2}^{c}(v-c)(1+\gamma-2\gamma v)f'(v) \mathrm dv+2\gamma \left[-\int_{c/2}^{c}(v-c)^2f'(v) \mathrm dv - \left(\frac{c}{2}\right)^2f(c/2)\right] \\
=&-\int_{c/2}^{c}(v-c)(1+\gamma-2\gamma v)f'(v) \mathrm dv+2\gamma \int_{c/2}^{c}2(v-c)f(v) \mathrm dv \\
=&\int_{c/2}^{c}(v-c)G'(v)\mathrm dv.
\end{align*}
The first equality above is obtained by differentiating \autoref{e:delegate} and using $u'(c)=1+\gamma-2\gamma c$ and $u(c)-u(0)=c(1+\gamma-\gamma c)$; the third and fifth equalities use integration by parts; the last equality involves substitution from \eqref{e:G'}; and the remaining equalities follow from algebraic manipulations.

\underline{Step 3}: We now establish that for any $c^* \in \arg\max\limits_{c\in [0,1]} W(c)$, $c^*>0 \implies G'(c^*/2) \leq 0$ and $c^*<1\implies G'(c^*) \geq 0$.

By Step 1, there exist $v_*$ and $v^*$ with $0 \leq v_* \leq v^* \leq 1$ such that $G'(v)<0$ on $[0, v_*)$, $G'(v)=0$ on $[v_*, v^*]$, and $G'(v)>0$ on $(v^*, 1]$. By \eqref{e:h_int_general}, $c \in (0, v_*) \implies$ $W'(c)>0$, and $c/2 \in (v^*, 1) \implies W'(c)<0$. Since $c^*$ is optimal, $c^*>0 \implies W'(c^*) \geq 0 \implies c^*/2 \leq v^* \implies G'(c^*/2) \leq 0$.  Similarly, $c^*<1 \implies W'(c^*)\leq 0 \implies c^* \geq v_* \implies G'(c^*) \geq 0$.

\underline{Step 4}: Finally we establish that $W$ is quasiconcave, strictly if $\gamma>0$ or $f$ is strictly logconcave. For this it is sufficient to establish that if $c>0$ and $W'(c)=0$, then $W^{\prime \prime}(c) \leq 0$, with a strict inequality if $\gamma>0$ or $f$ is strictly logconcave.

%
Differentiating \eqref{e:h_int_general},
\begin{align}
\label{e:H_prime}
W^{\prime \prime}(c) 
&= \frac{c}{4}G'(c/2)-(G(c)-G(c/2)). 
\end{align}
Integrating by parts, 
\[
\int_{c/2}^{c}[(v-c)G'(v)+G(v)]\mathrm dv=\left[(v-c)G(v)\right]_{c/2}^{c}=\frac{c}{2}G(c/2).
\]
Now fix any $c>0$ such that $W'(c)=0$. By \eqref{e:h_int_general} and the above integration by parts, $
G(c/2)=(2/c)\int_{c/2}^{c}G(v)\mathrm dv$, 
which, because $G$ is quasiconvex by Step 1, implies 
$
G(c/2) \leq G(c), 
$
with a strict inequality if $\gamma>0$ or $f$ strictly logconcave. 
Similarly $G'(c/2) \leq 0$, and hence from \eqref{e:H_prime} we conclude that
$
W^{\prime \prime}(c) \leq 0, 
$
with a strict inequality  if $\gamma>0$ or $f$ is strictly logconcave. 
\end{proof}

We build on \autoref{l:qconvex} to establish \autoref{c:LQinterval} by verifying the conditions of \autoref{prop:no_compromise} and \autoref{prop:interval}.

\begin{proof}[Proof of \autoref{c:LQinterval}]
If the interval delegation set $[c^*, 1]$ is optimal then $c^*$ must maximize $W(c)$ defined in \eqref{e:delegate}.  Hence if $W$ is strictly quasiconcave---as is the case if $\gamma>0$ or $f$ is strictly logconcave on $[0,1]$, by \autoref{l:qconvex}---there can be at most one interval that is optimal.  So it suffices to establish that if $c^* \in \arg\max_{c\in [0,1]} W(c)$ then $[c^*, 1]$ is optimal.

To that end, we verify that if $c^*=1$ the conditions of  \autoref{prop:no_compromise} are satisfied and, if $c^*<1$, then conditions \ref{interval1}--\ref{interval3} of \autoref{prop:interval} are satisfied.  Note that condition \ref{interval1} is immediate 
from \autoref{l:qconvex}.  As conditions \ref{interval2} and \ref{interval3} are vacuous for $c^*=0$ we need only consider $c^* \in (0, 1]$.  For any $c^* \in (0, 1)$ conditions \ref{interval2} and \ref{interval3} are jointly equivalent to
\[
\left(u'(0)-\kappa s \right) \frac{F(c^*/2)-F(s)}{c^*/2-s} \leq u'(c^*)\frac{F(c^*)-F(c^*/2)}{c^*/2} \leq \left(u'(c^*)+\kappa(c^*-t) \right) \frac{F(t)-F(c^*/2)}{t-c^*/2}
\]
for all $s\in [0,c^*/2)$ and $t\in (c^*/2,c^*]$. Substituting into the middle expression from the first-order condition $W'(c^*)=0$ (i.e., setting expression \eqref{e:opt_del} equal to zero and rearranging) yields
\begin{equation}
\label{e:inequality}
\hspace{-10pt}\left(u'(0)-\kappa s \right) \frac{F(c^*/2)-F(s)}{c^*/2-s} \leq (u(c^*)-u(0))\frac{f(c^*/2)}{c^*} \leq \left(u'(c^*)+\kappa(c^*-t) \right) \frac{F(t)-F(c^*/2)}{t-c^*/2}  
\end{equation}
for all $s \in [0,c^*/2)$ and $t\in (c^*/2,c^*]$.  So if \eqref{e:inequality} holds for $c^* \in (0, 1)$ then the conditions in \autoref{prop:interval} are verified. On the other hand, since the condition in \autoref{prop:no_compromise} is equivalent to the right-most term in \eqref{e:inequality} being larger than the left-most term for all $s \in [0,c^*/2)$ and $t\in (c^*/2,c^*]$ when $c^*=1$, \eqref{e:inequality} holding for $c^*=1$ implies the condition in \autoref{prop:no_compromise}. Accordingly, we fix a $c^*>0$ and verify the two inequalities of \eqref{e:inequality} in turn.

\underline{First inequality of \eqref{e:inequality}}: 
Using $u'(a)=1+\gamma-2\gamma a$, 
$\kappa=2\gamma$, and $\frac{u(a)-u(0)}{a}=1+\gamma-\gamma a$, the first inequality of \eqref{e:inequality} reduces to
\[
(1+\gamma-2\gamma s) \frac{F(c^*/2)-F(s)}{c^*/2- s}  \le (1+\gamma-\gamma c^*)f(c^*/2) \quad \forall s\in [0,c^*/2).
\]   

It follows from L'Hopital's rule that the above inequality holds with equality in the limit as $s \rightarrow c^*/2$.  Hence it is sufficient to demonstrate that the LHS of the inequality is increasing 
for all $s \in [0, c^*/2)$.  For any $s \in [0, 1]$ let
\begin{equation}
\label{e:D}
D(s):=(1+\gamma-\gamma c^*)(F(c^*/2)-F(s))-(c^*/2-s)(1+\gamma -2\gamma s)f(s),  
\end{equation}
and observe that 
\[
\frac{\partial}{\partial s} \left[(1+\gamma-2\gamma s)\frac{F(c^*/2)-F(s)}{c^*/2- s}\right]=\frac{1}{(c^*/2-s)^2}D(s).
\]
So it is sufficient to show that, for all $s \in [0, c^*/2)$, $D(s)\geq 0$. This holds because $D(c^*/2)=0$ and, for all $s <c^*/2$,
\begin{align}
D'(s)
&=(c^*/2-s)[4\gamma f(s)-(1+\gamma -2\gamma s)f'(s)] \quad \text{ \small{differentiating \eqref{e:D} and simplifying} } \notag \\
&=(c^*/2-s)G'(s) \quad \text{ \small{substituting from \eqref{e:G'}} } \label{e:D'}\\
& \leq 0 \quad \text{ \small{by \autoref{l:qconvex}}}.\notag
\end{align}

\underline{Second inequality of \eqref{e:inequality}}: Using $u'(a)=1+\gamma-2\gamma a$, $\kappa=2\gamma$, and $\frac{u(a)-u(0)}{a}=1+\gamma-\gamma a$, the second inequality of \eqref{e:inequality} reduces to
\[
(1+\gamma-\gamma c^*)f(c^*/2) \le (1+\gamma-2\gamma t) \frac{F(t)-F(c^*/2)}{t-c^*/2} \quad \forall t \in (c^*/2,c^*].
\] 
Using L'Hopital's rule for the limit as $t \rightarrow c^*/2$ and the fact that $W'(c^*) \geq 0$ by optimality of $c^*>0$, it follows that 
\[
\hspace{-10pt}\lim_{t \rightarrow c^*/2} (1+\gamma - 2\gamma t)\frac{F(t)-F(c^*/2)}{t-c^*/2}=(1+\gamma - \gamma c^*)f(c^*/2) \leq (1+\gamma - 2\gamma c^*)\frac{F(c^*)-F(c^*/2)}{c^*/2}.
\]
Hence it is sufficient to show that $(1+\gamma - 2\gamma t)\frac{F(t)-F(c^*/2)}{t-c^*/2}$ is quasiconcave for $t \in (c^*/2, c^*]$.  Note that 
\[
\frac{\partial}{\partial t} \left[(1+\gamma-2 \gamma t)\frac{F(t)-F(c^*/2)}{t-c^*/2}\right]=
\frac{1}{(t-c^*/2)^2}D(t),
\]
where $D$ is defined in \eqref{e:D}, and so 
\[
\sign\frac{\partial}{\partial t} \left[(1+\gamma -2\gamma t)\frac{F(t)-F(c/2)}{t-c/2}\right]=\sign D(t).
\]
Since $D(c^*/2)=0$, it follows that $(1+\gamma-2\gamma t)\frac{F(t)-F(c^*/2)}{t-c^*/2}$ is quasiconcave for $t \in (c^*/2, c^*]$ if $D$ is quasiconcave.  $D$ is quasiconcave because, as was shown in \eqref{e:D'}, $D'(t)=(c^*/2-t)G'(t)$,  which is positive then negative on $(c^*/2, c^*]$ by the quasiconvexity of $G$ (\autoref{l:qconvex}).
\end{proof}

\section{Proof of \autoref{prop:compstats}}
\begin{proof}[Proof of \autoref{prop:compstats}\ref{p:risk}] 
Let $H(a,c)$ denote the cumulative distribution function of the action implemented under the interval delegation set $[c,1]$. That is,
\begin{align*}
H(a,c)= \begin{cases}
0  & \text{ if } a<0 \\
F(c/2) & \text{ if } 0\le a < c \\
F(a) & \text{ if } c\le a < 1 \\
1  & \text{ if } 1\le a.
\end{cases}
\end{align*}

For any $0\leq c_L <c_H\leq 1$ the difference $H(\cdot,c_L)-H(\cdot,c_H)$ is upcrossing: once strictly positive, it stays positive. It follows from the variation diminishing property \citep[Theorem 3.1 on p.~21]{Karlin:68} that 
$$\int_0^1 \hat u'(a)\left[H(a,c_L)-H(a,c_H)\right]\mathrm da \geq (>) 0 \implies \int_0^1 u'(a)\left[H(a,c_L)-H(a,c_H)\right]\mathrm da \geq (>) 0$$
when $\hat u$ is strictly more risk averse than $u$. Integrating by parts, we obtain 
$$\int_0^1 \hat u(a)\left[H(\mathrm da,c_L)-H(\mathrm da,c_H)\right] \leq (<) 0 \implies \int_0^1 u(a)\left[H(\mathrm da,c_L)-H(\mathrm da,c_H)\right] \leq (<) 0.$$
 A standard monotone comparative statics argument \citep{MS:94} then implies that $C^*(u)\geq_{SSO} C^*(\hat u)$.
\end{proof}

\begin{proof}[Proof of \autoref{prop:compstats}\ref{p:LR}]
Let density $f(v)$ strictly dominate density $g(v)$ in likelihood ratio on the unit interval: i.e., for all $0\leq v_L<v_H\leq 1$, $f(v_L)g(v_H)<f(v_H)g(v_L)$.  Let $w(c,v)$ denote Proposer's payoff under the interval delegation set $[c,1]$ when Vetoer's type is $v$. We have
$$
w(c,v)=
\begin{cases} 
u(0) & \text{ if } v<c/2 \\
u(c) & \text{ if } v\in (c/2,c) \\
u(v) &\text{ if } v \in (c,1). \\
u(1) &\text{ if } v > 1. \\
\end{cases}
$$
Consider any $0\leq c_L <c_H\leq 1$. The difference $w(c_H,\cdot)-w(c_L,\cdot)$ is upcrossing: once strictly positive, it stays positive. It follows from the variation diminishing property \citep[Theorem 3.1 on p.~21]{Karlin:68} that $$ 
\int_0^1 \left[w(c_H,v)-w(c_L,v)\right]g(v)\mathrm d v \geq (>)0 \implies \int_0^1 \left[w(c_H,v)-w(c_L,v)\right]f(v)\mathrm d v \geq (>) 0.$$ A standard monotone comparative statics argument \citep{MS:94} then implies that $C^*(f)\geq_{SSO} C^*(g)$.
\end{proof}

\section{Proof of \autoref{p:pareto}}
Let $a_U$ and $a_I$ denote proposals in some noninfluential and influential cheap-talk equilibria, respectively (the latter may not exist). It is straightforward that $a_U>0$ and, if it exists, $a_I\in (0,1)$. Since \autoref{p:pareto}'s conclusion is trivial for full delegation ($c^*=0$), it suffices to establish that any optimal interval delegation set $[c^*,1]$ with $c^*\in (0,1)$ has $c^*<\min\{a_I,a_U\}$. (By convention, $\min\{a_I,a_U\}=a_U$ if $a_I$ does not exist.)

Plainly, $a_U$ is a noninfluential equilibrium proposal if and only if
 \[
a_U \in \argmax_{a} \left[ u(0)F(a/2)+u(a)(1-F(a/2))\right],
 \]
and so if $a_U<1$ then it solves the first-order condition
 \begin{equation}
\label{e:uninform}
{2u'(a)}\left[{1-F(a/2)}\right]-{f(a/2)}\left[{u(a)-u(0)}\right]=0.
\end{equation} 

Any influential cheap-talk equilibrium outcome can be characterized by a threshold type $v_I\in (0,1)$ such that types $v<v_I$ pool on the ``veto threat'' message, and types $v>v_I$ pool on the ``acquiesce'' message. Since type $v_I$ must be indifferent between sending the two messages, and she will accept either proposal from the Proposer, it holds that 
$$v_I=\frac{1+a_I}{2}.$$
It follows that $a_I$ is an influential equilibrium proposal if and only if
\begin{equation*}
a_I \in \argmax_{a} \left[u(0)\frac{F(a/2)}{F((1+a_I)/2)}+u(a)\left(1-\frac{F(a/2)}{F((1+a_I)/2)}\right)\right].
\end{equation*}
The first-order condition is that function \eqref{e:influential} in the main text equals zero, i.e., $a_I\in (0,1)$ solves
\begin{equation}
\label{e:influential_2}
{2u'(a)}\left[{F((1+a)/2)-F(a/2)}\right]-{f(a/2)}\left[{u(a)-u(0)}\right]=0.
\end{equation}
Note that at $a=0$, the LHS is strictly positive. Hence, if the LHS is strictly downcrossing on $(0,1)$, then \autoref{e:influential_2} has at most one solution in that domain; if there is a solution, then \autoref{e:influential_2}'s LHS is strictly positive (resp., strictly negative) to its left (resp., right); furthermore, it can be verified that the solution then identifies an influential equilibrium. Note that if there is no solution to \autoref{e:influential_2} on $(0,1)$ then there is no influential equilibrium. 

Turning to optimal interval delegation, recall from
 \autoref{s:statics} that the threshold is a zero of the function \eqref{e:opt_del}, i.e., $c^*\in (0,1)$ solves
\begin{equation}
\label{e:opt_del_2}
{2u'(a)}\left[{F(a)-F(a/2)}\right]-{f(a/2)}\left[{u(a)-u(0)}\right]=0.
\end{equation}
If the LHS is strictly downcrossing on $(0,1)$, then on that domain $c^*$ is the unique solution to \autoref{e:opt_del_2} and \autoref{e:opt_del_2}'s LHS is strictly positive (resp., strictly negative) to the solution's left (resp., right).

For any $a\in (0,1)$ the LHS of \autoref{e:uninform} is strictly larger than the LHS of \autoref{e:influential_2}, which in turn is strictly larger than the LHS of \autoref{e:opt_del_2}. 
If there is no solution in $(0,1)$ to \autoref{e:influential_2}, then its LHS is always strictly positive, and hence there are neither any influential equilibria nor any noninfluential equilibria with $a_U<1$, and we are done.  So assume at least one solution in $(0,1)$ to \autoref{e:influential_2}. Let 
\begin{align*}
\underline a_2&:=\inf\{a\in (0,1):\text{ \autoref{e:influential_2}'s  LHS } \leq 0\},\\
\overline a_2&:=\sup\{a\in (0,1):\text{ \autoref{e:influential_2}'s  LHS } \geq 0\},
\end{align*}
and analogously define $\underline a_3$ and $\overline a_3$ using \autoref{e:opt_del_2}'s LHS. The aforementioned ordering of the equations' LHS, \autoref{e:influential_2}'s LHS being strictly positive at 0, and continuity combine to imply $0\leq \underline a_3<\underline a_2<a_U$, and $\overline a_3\leq \overline a_2$ with a strict inequality if either $\overline a_2<1$ or $\overline a_3<1$. Furthermore, $a_I\in[\underline a_2,\overline a_2]$ and $c^*\in [\underline a_3,\overline a_3]$.

If the LHS of \autoref{e:influential_2} is strictly downcrossing on $(0,1)$, then by the properties noted right after \autoref{e:influential_2}, $a_I=\underline a_2=\overline a_2<1$ and hence $c^*<\min\{a_I,a_U\}$. If the LHS of \autoref{e:opt_del_2} is strictly downcrossing on $(0,1)$, then by the properties noted right after \autoref{e:opt_del_2}, $c^*=\underline a_3=\overline a_3<1$ and hence $c^*<\min\{a_I,a_U\}$.
 
 \section{Stochastic Mechanisms can be Optimal}
 \label{s:stochastic}
\begin{example}
\label{eg:stochastic}

Suppose Proposer has a linear loss function, $\underline v=0$, $\overline v=1$, and $f(v)$ is strictly increasing except on $(1/2- \delta,1/2+ \delta)$, where it is strictly decreasing. Assume $|f'(v)|$ is constant (on $[0,1]$).\footnote{As it is nondifferentiable at two points, this density violates our maintained assumption of continuous differentiability. But the example could straightforwardly be modified to satisfy that assumption.} Take $\delta>0$ to be small. See \autoref{fig:stochastic}.

\begin{figure}
\centering
  \begin{tikzpicture}[] 
         \draw[axis,->] (0,-1) --(5.2,-1)node[right]{$v$};
            \draw (5,-1) node[below]{1};
      \draw (0,-1) node[below]{0};
      \draw (2.5,-1) node[below]{$\frac{1}{2}$};
         \draw[axis,->] (0,-1) --(0,3.2) ;
         \draw[blue] (0,0) -- (2.3,1.5) -- (2.7,1.3) -- (5,2.8) node[right] {\textcolor{blue}{$f$}};
  \end{tikzpicture}
\caption{A density under which no compromise is the optimal delegation set when Proposer has a linear loss function, but it is worse than some stochastic mechanism.}
\label{fig:stochastic}
\end{figure}
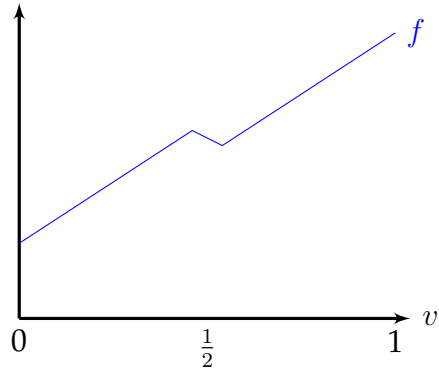

Recall that if $\delta$ were $0$, then no compromise (i.e., the singleton menu $\{1\}$) would be optimal by \autoref{prop:no_compromise} or the discussion preceding it. It can be verified that no compromise remains an optimal delegation set for small $\delta>0$. 
We argue below that Proposer can obtain a strictly higher payoff, however, by adding a stochastic option $\ell$ that has expected value $1/2$ and is chosen only by types in $(1/2- \delta, 1/2 + \delta)$.

The stochastic option $\ell$ provides action $1-\frac{1}{2p}$ with probability $p$ and action $1$ with probability $1-p$. For any $p\in (0,1)$, this lottery has expected value $1/2$. Moreover, when $p=\frac{1}{2- 4 \delta}$, quadratic loss implies that type $1/2- \delta$ is indifferent between $\ell$ and action $0$ while 
type $1/2+ \delta$ is indifferent between $\ell$ and $1$. Consequently, any type in $[0,1/2-\delta)$ strictly prefers $0$ to both $\ell$ and $1$; any type in $(1/2- \delta, 1/2 + \delta)$ strictly prefers $\ell$ to both $0$ and $1$; and any type in $(1/2+\delta,1]$ strictly prefers $1$ to both $\ell$ and $0$.

Therefore, offering the menu $\{\ell,1\}$ rather than $\{1\}$ changes the induced expected action from $0$ to $1/2$ when $v\in (1/2- \delta, 1/2)$ and from $1$ to $1/2$ when $v\in (1/2, 1/2+\delta)$. Since $f(v)$ is strictly decreasing on $(1/2- \delta,1/2+ \delta)$, Proposer is strictly better off. Note that if one were to replace $\ell$ with a deterministic option that provides $\ell$'s expected action $1/2$, then all types in $(1/4,3/4)$ would strictly prefer to choose that option over both $0$ and $1$. So the menu $\{1/2,1\}$ is strictly worse than not only $\{\ell,1\}$ but also just $\{1\}$.\footnote{Vis-\`a-vis \autoref{lem:relaxed-stochastic} and its proof that involves replacing a stochastic mechanism with its ``averaged'' deterministic counterpart: in this example the deterministic mechanism that solves problem \eqref{e:relaxeda} cannot be incentive compatible. In particular, it is not a mechanism corresponding to any delegation set.} \marker
\end{example}
\end{document}